\newtheorem{lemma}{Lemma}[section]
\newtheorem{theorem}[lemma]{Theorem}
\newtheorem{fact}[lemma]{Fact}
\newtheorem{corollary}[lemma]{Corollary}
\theoremstyle{remark}
\newtheorem{remark}{Remark}
\theoremstyle{definition}
\newtheorem{definition}[lemma]{Definition}
\DeclareMathOperator*{\E}{\mathrm{E}}
\DeclareMathOperator*{\1}{\mathbbm{1}}
\DeclareMathOperator*{\boundary}{{\partial\!}}
\newcommand{\aA}{\alpha}
\newcommand{\bB}{\beta}
\newcommand{\gG}{\gamma}
\newcommand{\tT}{\theta}
\newcommand{\sS}{\sigma}
\newcommand{\DD}{\Delta}
\newcommand{\OO}{\Omega}
\newcommand{\TT}{\Theta}
\newcommand{\bl}{\backslash}
\newcommand{\pr}{\Pr}
\newcommand{\edge}[2]{\{#1,#2\}}
\newcommand{\eps}{\varepsilon}
\newcommand{\poly}{\mathsf{poly}}
\newcommand{\N}{\mathbb{N}}
\newcommand{\R}{\mathbb{R}}
\DeclareMathOperator{\e}{e}
\newcommand{\TV}[1]{\left\lVert #1 \right\rVert_{\textsc{tv}}}
\title{Structure Learning of $H$-colorings}
\author{Antonio Blanca \thanks{
  		School of Computer Science, Georgia Institute of Technology, Atlanta, GA 30332.
  		Research supported in part by NSF grants CCF-1617306 and CCF-1563838. 
  		Email: {\tt \{ablanca3,chenzongchen,vigoda\}@gatech.edu}}
  \and
  Zongchen Chen \footnotemark[1] 
  \and
  Eric Vigoda \footnotemark[1] 
  \and
	Daniel \v{S}tefankovi\v{c} \thanks{
   	Department of Computer Science, University of Rochester, Rochester, NY 14627.
   	Research supported in part by NSF grant CCF-1318374. 
   	Email: {\tt stefanko@cs.rochester.edu}
   }}
\begin{document}

\maketitle

\begin{abstract}
We study the structure learning problem for 
$H$-colorings, an important class of Markov random fields 
that
capture key combinatorial structures on graphs,
including proper colorings and independent sets,
as well as spin systems from statistical physics.
The learning problem is as follows: for a fixed (and known) constraint graph $H$ with $q$ colors 
and an unknown graph $G=(V,E)$ with $n$ vertices, given uniformly random $H$-colorings of $G$, 
how many samples are required to learn the edges of the unknown graph~$G$?
We give a characterization of $H$ 
for which the problem is identifiable for every~$G$, i.e., 
we can learn $G$ with an infinite number of samples.
We also show that there are identifiable constraint graphs 
for which
one cannot hope to learn every graph~$G$ efficiently.

We focus particular attention on the case of
proper vertex $q$-colorings of graphs of maximum degree~$d$ 
where intriguing connections to statistical physics
phase transitions appear.  
We prove that in the tree uniqueness region (i.e., when $q>d$)  the problem
is identifiable and we can learn $G$ in $\poly(d,q)\times O(n^2\log{n})$ time.
In contrast for soft-constraint systems, such as the Ising model, the best possible
running time is exponential in $d$.  In the tree non-uniqueness region (i.e., when $q\leq d$) we prove that
the problem is not identifiable and
thus $G$ cannot be learned.
Moreover, when $q<d-\sqrt{d} + \Theta(1)$ we prove that even learning an equivalent
graph (any graph with the same set of $H$-colorings) is computationally hard---sample complexity is exponential in $n$ in the worst case.
We further explore the connection between the efficiency/hardness of the structure learning problem and the uniqueness/non-uniqueness phase transition
for general $H$-colorings and prove that
under a well-known condition in statistical physics, known as
the Dobrushin uniqueness condition,  
we can learn~$G$ in $\poly(d,q)\times O(n^2\log{n})$ time.  
\end{abstract}


\section{Introduction}\label{sec:intro}



Structure learning is a
general framework
for supervised learning
where, instead of learning labels or real numbers as in classification or regression,
the task is to learn a more complex structure such as a graph.
A myriad of
fundamental learning problems 
can be studied
in this framework. 
Notably,
structure learning for Markov random fields (undirected graphical models), 
where the goal is to recover the underlying graph from random samples,
has found important applications in diverse fields, 
including the study of phylogeny \cite{Huel}, gene expression \cite{Mar}, protein interactions \cite{Mor}, neuroscience \cite{Sch}, image processing \cite{Roth} and sociology \cite{Eagle}.

Our goal in this paper is to understand when is
structure learning for Markov random fields
possible in polynomial time.
We mostly focus on the task of \textit{exact recovery}, where a learning algorithm is said to succeed only when it outputs exactly the hidden graph.
In applications, exact recovery of graphical 
models is often of interest since the true graph structure contains
valuable information about the dependencies in the model.
Consequently, it has been very well-studied; see, e.g.,~\cite{CL,Dasgupta,Srebro,CT,LGK,AHHK,JRVS,RWL,BMS,BGSa,Bresler,VMLC,HKM,KM}. 
While the typical setting in these works are \textit{soft-constraint models},
our focus here are models with \textit{hard constraints}.
Specifically, we consider the structure learning problem and the closely related question
of statistical identifiability in the general setting of $H$-colorings,
an important class of Markov
random fields that include all hard-constraint models
and
capture key combinatorial structures on graphs,
including proper colorings and independent sets.

Given an undirected, connected constraint graph $H=(V(H),E(H))$, 
with vertices $V(H) \!= \{1,\dots,q\}$ referred to as colors (or spins),
an $H$-coloring of a graph $G=(V,E)$
is an assignment of colors $\{1,\dots,q\}$ to the vertices of $G$
such that adjacent vertices of $G$ receive adjacent colors in $H$. That is, 
an $H$-coloring $\sigma$ is a mapping $\sigma:V\rightarrow V(H)$ such that if $\{v,w\}\in E$, then 
$\{\sigma(v),\sigma(w)\}\in E(H)$.
If such an assignment is possible we say that $G$ is $H$-colorable.
The constraint graph $H$ is allowed to have self-loops, but not parallel edges, and 
every $\edge{i}{j}$ such that 
$\edge{i}{j} \not\in E(H)$ 
is called a \textit{hard constraint}.

When, for example, $H$ is the complete graph on $q$ vertices with no self-loops, denoted $K_q$, 
neighboring vertices of $G$ must be assigned different colors, and 
thus the proper $q$-colorings of $G$ are precisely its $H$-colorings.
If, on the other hand,~$H$ is the graph with
two vertices $V(H)=\{0,1\}$  
and two edges $E(H) = \{\{0,0\},\{0,1\}\}$, 
then the subset of vertices
assigned color $1$ in any $H$-coloring of $G$ form an independent set.
Hence, in this case,
there is one-to-one correspondence between the independent sets and the $H$-colorings of $G$.
Spin systems without hard constraints, i.e., \textit{soft-constraint systems}, correspond to the
constraint graph with all possible edges; that is, $H=K^+_q$ which is
the complete graph $K_q$ with a self-loop at every vertex. In this case,
all $q^{|V|}$ labelings of $G$ are valid $H$-colorings. 

We consider structure learning and statistical identifiability for $H$-colorings with at least one hard constraint;
that is, $H \neq K_q^+$.
(Note that the missing edge could be a self-loop.)
$H$-colorings are well-studied in several other contexts.
For the decision problem (for a fixed $H$ is a graph $H$-colorable?), a dichotomy result~\cite{HNa,HNb,Bulatov05,Sigge} has been established characterizing for which $H$ the problem is either in P or is NP-complete. 
The corresponding dichotomy conjecture for 
the directed case (directed graph homomorphisms) has also received considerable 
attention~\cite{FV}; see also the recent works~\cite{Bulatov17,Zhuk} and the references therein.
The complexity of the exact counting version of the problem 
was characterized by Dyer and Greenhill~\cite{DG}, and 
the complexity of approximate counting/sampling was studied in~\cite{GKP,DGJ,GGJ}.

For an $H$-colorable graph $G$, 
let $\Omega_G^H$ be the set of all possible $H$-colorings of $G$ 
and let $\pi_G^H$ denote the uniform measure over $\Omega_G^H$. 
(Typically the constraint graph $H$ will be fixed and thus we will drop the dependence on $H$
in our notation.)
Some of our results 
extend to the more general setting of a weighted constraint graph $H$ and a weighted graph $G$,
where $\pi_G^H$
is the corresponding Gibbs distribution; see Section \ref{sec:weighted} for a precise definition. 

For statistical identifiability our goal is to
characterize the cases when every graph 
is learnable with an infinite number of samples.
For structure learning our goal is to efficiently learn the graph $G$ from samples drawn independently from 
$\pi_G$. More formally, let $H$ be a fixed constraint graph and
let $\mathcal{G}$ be a family of $H$-colorable graphs.
Suppose that
we are given $L$ samples $\sigma^{(1)},\sigma^{(2)},\ldots,\sigma^{(L)}$ drawn independently from the distribution $\pi_G^H$ where $G \in \mathcal{G}$. 
A \textit{structure learning algorithm} for the constraint graph $H$ and the graph family $\mathcal{G}$ 
takes as input the sample sequence $\sigma^{(1)},\sigma^{(2)},\ldots,\sigma^{(L)}$
and outputs an estimator $\hat{G} \in \mathcal{G}$ such that $\Pr[G=\hat{G}] \ge 1-\varepsilon$ where $\varepsilon > 0$ is a prescribed error (failure probability). 

As mentioned earlier,
structure learning has been well-studied 
for soft-constraint models where $H=K_q^+$. In the context of the Ising model, the most well-known and widely studied soft-constraint system, for an unknown graph $G$ with $n$ vertices, maximum degree $d$
and maximum interaction strength $\beta$ (which corresponds to the inverse temperature in the
homogenous model), 
Bresler \cite{Bresler} presented an algorithm to learn $G$ in $O(n^2\log{n})\times\exp(\exp(O(\beta d^2)))$ time.
A different algorithm was provided by Vuffray et al.\ \cite{VMLC}
with running time $O(n^4\log{n})\times\exp(O(\beta d))$.
Recently, Bresler's algorithm was extended to arbitrary Markov random fields 
by Hamilton et al.\ \cite{HKM}, and a new approach was presented by Klivans and Meka~\cite{KM}
which achieves nearly-optimal running time of $O(n^2\log{n})\times\exp(O(\beta d))$.
Both of these general results \cite{HKM,KM} are for the case of soft-constraint models
and do not apply to the setting of hard-constraint systems. 
We shall see that, while the algorithm in \cite{KM} achieves (optimal) single exponential dependence on~$d$
for general soft-constraint systems, 
the structure learning problem for
hard-constraint systems 
is quite different. 
Indeed,
some hard-constraint systems are not statistically identifiable (and thus the unknown graph $G$ cannot be learned);
others allow very efficient structure learning algorithms with $\poly(n,d,q)$ running time;
while in others any structure learning algorithm requires exponentially (in $n$) many samples.

For hard-constraint systems,
the structure learning problem was previously studied by 
Bresler, Gamarnik, and Shah~\cite{BGS-hc} for independent sets (more generally, for the hard-core model where 
the independent sets are weighted by their size and a model parameter $\lambda>0$).
They achieve nearly-optimal running time of $O(n^2\log{n})\times\exp(O(d\lambda))$.
For our positive results we generalize the structure learning algorithm in~\cite{BGS-hc}.

Finally, we remark that while our results aim to learn the underlying graph $G$ exactly (i.e., exact recovery), in some cases, we also consider the problem of learning an equivalent graph $G'$ such that $\pi_G = \pi_{G'}$ (equivalent-structure learning).
The corresponding approximation problem of finding a graph $G'$ such that $\pi_{G'}$ is close to $\pi_G$ in 
some notion of distance, such as total variation distance or Kullback-Leibler divergence (see, e.g., \cite{Abbeel,bresler2016learning}),
is
apparently much simpler for hard-constraint systems; see Section \ref{section:approx}.

\subsection{Results}

We first address the
statistical identifiability problem for
general $H$-colorings. 

\begin{definition}\label{def:identify}	
	A constraint graph $H$ is said to be identifiable with respect to a family of $H$-colorable graphs $\mathcal{G}$ if for any two distinct
	graphs
	$G_1,G_2\in\mathcal{G}$ we have $\pi_{G_1} \neq \pi_{G_2}$ (or equivalently $\OO_{G_1} \neq \OO_{G_2}$). In particular, when $\mathcal{G}$ is the set of all finite $H$-colorable graphs we say that $H$ is identifiable.	
\end{definition}
\noindent
To characterize identifiability we introduce a
supergraph $G_{ij}$ of $H$.
This supergraph will not be used as a constraint graph, but
rather we will consider the $H$-colorings of $G_{ij}$ in our characterization theorem.
Consider an edge $\{i,j\}\in E(H)$.  We construct $G_{ij}$ by starting from $H$
and duplicating the colors $i$ and $j$.
These new copies, denoted $i'$ and $j'$,
have the same neighbors as the original colors $i$ and $j$, respectively,
except for the one edge $\{i',j'\}$ which is not included; see Figure~\ref{fig:Hij} for an illustration of this supergraph 
and Definition \ref{dfn:Hij} for a formal definition. 

For a constraint graph $H$ we say that a pair of colors $i,j\in V(H)$ are compatible (resp., incompatible)
if $\{i,j\}\in E(H)$ (resp., $\{i,j\} \not\in E(H)$).  
Also, if $\sigma$ is an $H$-coloring of a graph $G=(V,E)$, we use $\sigma(v)$
to denote the color of $v \in V$ in $\sigma$.
Our characterization theorem considers whether in every $H$-coloring
of $G_{ij}$ the new vertices $i',j'$ receive compatible colors. 

\begin{theorem}\label{thm:identify}  
	Let $H\neq K_q^+$ be an arbitrary constraint graph. 
	If $H$ has at least one self-loop, then $H$ is identifiable.
	Otherwise,
	$H$ is identifiable if and only if for each $\edge{i}{j} \in E(H)$ there exists an $H$-coloring $\sigma$ of $G_{ij}$ such that
	$\sigma(i')$ and $\sigma(j')$ 
	are incompatible colors in $H$.
\end{theorem}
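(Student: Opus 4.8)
The plan is to separate the two assertions of the theorem and attack each by constructing explicit graphs that witness (non-)identifiability. Throughout I would use the basic principle that $H$ fails to be identifiable with respect to $\mathcal{G}$ precisely when there exist distinct $G_1,G_2\in\mathcal{G}$ with $\Omega_{G_1}=\Omega_{G_2}$, so proving identifiability means ruling out any such collision, while proving non-identifiability means exhibiting one concrete collision.

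\textbf{The self-loop case.} First I would handle the claim that a self-loop forces identifiability. Suppose $c\in V(H)$ carries a self-loop $\{c,c\}\in E(H)$. The key observation is that the constant coloring $\sigma\equiv c$ is then a valid $H$-coloring of \emph{every} graph, and more usefully, that a single edge imposes a nontrivial constraint detectable from $\Omega_G$ alone. Concretely, given $G=(V,E)$ and a candidate non-edge $\{u,w\}\notin E$, I would argue that $\Omega_G$ determines whether $\{u,w\}$ is an edge: one tests whether every coloring of $G$ that is locally extendable remains valid after forcing $u,w$ to an incompatible pair. Since $H\neq K_q^+$, there is at least one incompatible pair $\{a,b\}\notin E(H)$; using the self-loop at $c$ to freely color the rest of the graph, I would build an $H$-coloring that assigns $a,b$ to the endpoints of any given non-edge but would violate the constraint if that pair were actually an edge. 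This shows $\{u,w\}\in E(G_1)$ iff the corresponding pattern is absent from $\Omega_{G_1}$, so $\Omega_{G_1}=\Omega_{G_2}$ implies $E(G_1)=E(G_2)$.

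\textbf{The loop-free case, sufficiency.} Assuming $H$ is loop-free and the stated condition holds (for each edge $\{i,j\}\in E(H)$ there is an $H$-coloring $\sigma$ of $G_{ij}$ with $\sigma(i'),\sigma(j')$ incompatible), I would again show edges of $G$ are detectable from $\Omega_G$. Fix a non-edge $\{u,w\}$ of a graph $G$ and suppose for contradiction that $G$ and $G+\{u,w\}$ have the same coloring set; the only way this fails to distinguish them is if \emph{every} $H$-coloring of $G$ already assigns compatible colors to $u$ and $w$. The construction of $G_{ij}$ is exactly engineered to rule this out: the duplicated-color gadget lets me embed a copy of $G_{ij}$ so that $u,w$ play the roles of $i',j'$, and the witnessing coloring $\sigma$ then extends to an $H$-coloring of $G$ placing incompatible colors on $u,w$. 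Hence adding the edge genuinely removes colorings, and the edge is recoverable. The reverse direction (necessity) is the contrapositive: if for some edge $\{i,j\}$ every $H$-coloring of $G_{ij}$ keeps $i',j'$ compatible, then $G_{ij}$ and $G_{ij}+\{i',j'\}$ are two distinct graphs with identical $\Omega$, giving an explicit collision and defeating identifiability.

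\textbf{Expected main obstacle.} The routine parts are the self-loop argument and the necessity direction, which are essentially immediate from the definitions. The delicate step is the sufficiency direction in the loop-free case: I must show that the \emph{local} witness on the small gadget $G_{ij}$ can be promoted to a \emph{global} $H$-coloring of an arbitrary host graph $G$ in which $u,w$ receive incompatible colors. This requires verifying that gluing $G_{ij}$ into $G$ along $u\mapsto i'$, $w\mapsto j'$ does not create new constraint violations at the seams, and that the freedom in coloring the duplicated vertices $i',j'$ propagates consistently through the rest of $G$. I anticipate the proof hinges on a careful extension lemma ensuring compatibility across the identification of the gadget's boundary with $G$, so most of the real work—and the place where the precise combinatorial definition of $G_{ij}$ in Definition~\ref{dfn:Hij} is indispensable—lies there.
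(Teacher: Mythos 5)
Your high-level architecture matches the paper's: reduce identifiability to the statement that every non-edge of every $H$-colorable graph admits a coloring with incompatible endpoint colors (the paper's Lemma~\ref{lemma:id_lem}), treat self-loops separately, and get necessity in the loop-free case from the explicit pair $G_{ij}$ versus $G_{ij}\cup\{i',j'\}$. However, both of the steps you label as routine or defer contain the actual content, and your sketches of them have gaps. For the self-loop case, your plan is to color ``the rest of the graph'' with the looped color $c$ and put an incompatible pair $\{a,b\}$ on the given non-edge; but this requires $a$ and $b$ to both be $H$-neighbors of $c$, and such a triple need not exist for an arbitrary incompatible pair. The paper has to split into two cases using connectivity of $H$: if some vertex lacks a self-loop, pick adjacent $i\in U$, $j\in W$ (so $\{i,i\},\{i,j\}\in E(H)$, $\{j,j\}\notin E(H)$) and use the incompatible pair $(j,j)$ on the two non-adjacent vertices; if every vertex is looped, find a path $i-j-k$ with $\{i,k\}\notin E(H)$ and color the rest with $j$. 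Your sketch, as written, does not produce a valid coloring in general.

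The more serious gap is in the loop-free sufficiency direction, which you correctly flag as the crux but for which your proposed mechanism (``gluing'' or ``embedding a copy of $G_{ij}$'' into $G$) does not work: there is no reason an arbitrary host graph $G$ contains, or can be extended by, a copy of the gadget, and no ``extension lemma'' along a seam is available. The paper's argument is a composition of homomorphisms, not a gluing. Start from an \emph{arbitrary existing} $H$-coloring $\sigma$ of $G$ (this starting point is absent from your proposal and is the linchpin); if $a=\sigma(u)$ and $b=\sigma(v)$ are compatible, then the map sending $w\mapsto\sigma(w)$ for $w\neq u,v$, $u\mapsto a'$, $v\mapsto b'$ is a graph homomorphism from $G$ into $G_{ab}$ --- valid precisely because $\{u,v\}\notin E(G)$ and $a',b'$ duplicate the neighborhoods of $a,b$ minus the one edge $\{a',b'\}$. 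Composing with the witnessing $H$-coloring $\tau$ of $G_{ab}$ yields $\sigma'=\tau\circ\sigma$ (suitably interpreted at $u,v$), an $H$-coloring of $G$ assigning $u,v$ the incompatible colors $\tau(a'),\tau(b')$. Without this recoloring-by-composition idea your plan does not close.
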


\noindent
The intuition for the role of the graph~$G_{ij}$ in our characterization theorem is that if we added the edge $\{i',j'\}$ to~$G_{ij}$ to 
form the graph $G'_{ij}=G_{ij}\cup\{i',j'\}$, then in every $H$-coloring of $G'_{ij}$ 
the vertices $i',j'$ 
receive compatible colors. If the same property holds for $G_{ij}$, i.e.,
in every $H$-coloring of $G_{ij}$ the vertices $i',j'$ receive compatible colors,
then the edge $\{i',j'\}$ plays no role and
the pair of graphs $G_{ij}$ and $G'_{ij}$ have the same set of $H$-colorings.
That is, $G_{ij}$ and $G'_{ij}$ are indistinguishable, 
and so $H$ is not identifiable.

Let $\mathcal{G}(n,d)$ be the family of $n$-vertex graphs of maximum degree at most $d$.
Our next result shows that
for some identifiable (with respect to $\mathcal{G}(n,d)$) constraint graphs, 
one cannot hope to learn the underlying graph $G$ efficiently.
We remark that this is not the case
for soft-constraint models, where one can always learn $G$ in
time $O(n^2\log{n})\times\exp(O(d\beta))$~\cite{KM}. 

\begin{theorem}
	\label{thm:neg-identify}
	There exists an identifiable constraint graph $H$ and a constant $c>0$ 
	such that, for all $n \ge 8$,  
	any structure learning algorithm for $H$ and the graph family $\mathcal{G}(n,7)$ 
	that succeeds with probability at least $\exp(-cn)$ 
	requires at least $\exp(cn)$ samples. 
\end{theorem}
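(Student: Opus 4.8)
The plan is to prove this sample-complexity lower bound via the standard information-theoretic ``many fuzzy hypotheses'' method: I will exhibit an identifiable constraint graph $H$ together with an exponentially large family $\mathcal{F}\subseteq\mathcal{G}(n,7)$ of graphs whose $H$-coloring distributions are all exponentially close to a common reference. Concretely, I will produce a reference graph $G_0$ on $\Theta(n)$ vertices and a set of $k=\Omega(n)$ ``optional'' non-edges, and for each $S\subseteq[k]$ let $G_S$ be obtained from $G_0$ by inserting the optional edges indexed by $S$. Since $H$ is identifiable and the $G_S$ are pairwise distinct finite $H$-colorable graphs of maximum degree at most $7$, distinct $S$ yield distinct distributions $\pi_{G_S}$, so $\mathcal{F}$ is a genuine family of hypotheses that the learner must tell apart exactly. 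Drawing the hidden graph uniformly from $\mathcal{F}$ and giving the learner $L$ i.i.d.\ samples, a routine computation (comparing each $\pi_{G_S}^{\otimes L}$ to the reference $\pi_{G_0}^{\otimes L}$, using disjointness of the output events and sub-additivity of total variation under products) bounds the success probability by
\[
\Prob{\hat G = G}\ \le\ \frac{1}{|\mathcal{F}|} + L\cdot\max_{S\subseteq[k]}\TV{\pi_{G_0} - \pi_{G_S}}.
\]
Taking $|\mathcal{F}|=2^{k}=\exp(\Omega(n))$ and $\max_S\TV{\pi_{G_0} - \pi_{G_S}}\le\exp(-\Omega(n))$, with the rigidity exponent exceeding the family-size exponent, then forces the right-hand side below $\exp(-cn)$ for every $L\le\exp(cn)$, which is exactly the claim (the hypothesis $n\ge 8$ simply guarantees the host graph is large enough to embed the construction).

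The heart of the argument is the construction of $G_0$ and the choice of $H$. I will fix a constraint graph $H$ with a hard constraint that is \emph{identifiable} --- verified through Theorem~\ref{thm:identify} by exhibiting, for each $\edge{i}{j}\in E(H)$, an $H$-coloring of $G_{ij}$ placing incompatible colors on $i',j'$ --- but which admits a bounded-degree gadget exhibiting a \emph{frozen}, long-range-ordered phase (the degree bound $7$ arising from the gadget's local structure). The purpose of the gadget is that the uniform $H$-coloring of $G_0$ concentrates, with probability $1-\exp(-\Omega(n))$, on a set $\mathcal{C}$ of ``clean'' colorings, each of which assigns \emph{compatible} colors to \emph{every} optional pair $\{u_i,v_i\}$ simultaneously. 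I will place the $\Omega(n)$ optional pairs inside this rigid skeleton so that all of them are respected by every clean coloring, and hence the only way to violate any optional pair is through a global defect of exponentially small probability. Note that $\mathcal{C}\subseteq\Omega_{G_S}$ for all $S$, so each $G_S$ is automatically $H$-colorable.

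Granting such a $G_0$, the total-variation estimate is immediate. Inserting the edges of $S$ only deletes the colorings in which some optional pair is incompatible, so $\Omega_{G_S}\subseteq\Omega_{G_0}$ and
\[
\TV{\pi_{G_0} - \pi_{G_S}} = \ProbSub{\pi_{G_0}}{\exists\, i\in S:\ \sigma(u_i),\sigma(v_i)\text{ incompatible}} \le \ProbSub{\pi_{G_0}}{\sigma\notin\mathcal{C}} \le \exp(-\Omega(n)),
\]
uniformly in $S$, where the middle inequality uses that every clean coloring satisfies all optional pairs, so any violating coloring lies outside $\mathcal{C}$. Substituting this into the displayed success-probability bound and choosing the constants completes the proof.

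The main obstacle is the gadget in the second paragraph: designing a single fixed $H$ (hence a fixed, $n$-independent alphabet and degree bound $7$) whose uniform $H$-colorings on a $\Theta(n)$-vertex, max-degree-$7$ host are frozen strongly enough that any deviation from the clean set $\mathcal{C}$ has probability exponentially small in $n$, while keeping $H$ identifiable and accommodating $\Omega(n)$ mutually non-interfering optional pairs. Because hard-constraint $H$-colorings carry no energy, this rigidity must be generated entirely by entropy --- the clean colorings have to exponentially outnumber every defected configuration on a bounded-degree host --- which is the delicate, physics-flavored core of the construction. The surrounding information-theoretic reduction and the total-variation calculation are routine once the frozen gadget is in hand.
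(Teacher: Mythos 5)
Your reduction is exactly the one the paper uses: an exponentially large family $\{G_S\}$ of supergraphs of a reference graph obtained by switching $\Omega(n)$ optional edges on or off, distinctness of the $\pi_{G_S}$ from identifiability, and the information-theoretic bound $\Pr[\hat G=G]\le 1/|\mathcal{F}|+L\cdot\max_S\TV{\pi_{G_0}-\pi_{G_S}}$ (the paper's Fact~\ref{fact:lb_eta}, which it proves by the same comparison of $\pi_{G_S}^{\otimes L}$ against the extremal graph's product measure, using $1-(1-\eta)^L\le L\eta$ in place of your subadditivity of total variation under products). Your observation that $\Omega_{G_S}\subseteq\Omega_{G_0}$ reduces the total-variation bound to $\Pr_{\pi_{G_0}}[\sigma\notin\mathcal{C}]$ for a set $\mathcal{C}$ of ``clean'' colorings is also precisely the paper's argument. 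So the scaffolding is right and correctly assembled.

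The genuine gap is the one you flag yourself: the proof's entire mathematical content is the construction of the identifiable $H$ and the frozen bounded-degree host $G_0$, and you do not exhibit it --- everything after ``Granting such a $G_0$'' is routine, as you say, but nothing before it is supplied. The paper fills this gap concretely. The constraint graph $F$ consists of an independent set $I_{32}$ of $32$ ``free'' colors, each adjacent to all four vertices of a path $1$--$2$--$3$--$4$; identifiability is checked directly from the tripartite structure of $F$ via Lemma~\ref{lemma:id_lem} (not via the $G_{ij}$ criterion of Theorem~\ref{thm:identify}, since $F$-colorability forces tripartiteness and one can always recolor two nonadjacent vertices with the incompatible pair $1,4$ or with equal colors). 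The host $G_m$ is a chain of $4$-vertex blocks $\{a_i,a_i',b_i,c_i\}$ wired so that it has a \emph{unique} tripartition $(A_m,B_m,C_m)$ with $|A_m|=2m$ (Fact~\ref{fact:tripartition}); the entropic rigidity you ask for then comes not from any subtle physics but from a crude counting asymmetry: colorings in which the largest part $A_m$ takes colors from $I_{32}$ number at least $2\cdot 32^{2m}$, while those in which $B_m$ or $C_m$ does number at most $4\cdot 32^m 8^m$, so with probability $1-2^{-(2m-1)}$ the part $A_m$ is $I_{32}$-colored. Since $I_{32}$-colors are compatible with every color of $F$, all optional pairs $\{a_i,b_{i+2}\}$ are simultaneously compatible in every such coloring, giving $\eta\le 2^{-(2m-1)}$ against a family of size $2^{m-2}$. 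Without this (or an equivalent) explicit gadget, your argument establishes only that the theorem would follow from a construction you have not carried out; the identifiability requirement and the entropy imbalance pull in opposite directions, and reconciling them is where the work lies.
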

\noindent
While this theorem shows that there is no efficient learning algorithm for all identifiable models,
for some relevant models
structure learning can be done efficiently.

We focus first on the
case of proper $q$-colorings where $H=K_q$.
In general, the colorings problem is not identifiable. However, if we consider identifiability with respect to the graph family $\mathcal{G}(n,d)$ we get a 
richer picture.
We prove that 
when $q>d$ the $q$-colorings problem is identifiable, whereas
when $q\leq d$ it is not.
This phase
transition for identifiability/non-identifiability at $q=d$ is quite intriguing since it coincides
with the statistical physics uniqueness/non-uniqueness phase transition of the Gibbs distribution on
infinite $d$-regular trees.

Uniqueness on an infinite $d$-regular tree $T$
may be defined as follows. 
Let $T_\ell$ be a finite $d$-regular tree of height $\ell$ rooted at vertex $r$ and
consider the uniform measure $\mu_\ell$ over the proper $q$-colorings of $T_\ell$.
We say that the Gibbs measure on $T$ is unique if, as $\ell \rightarrow \infty$, the configuration in the leaves 
of $T_\ell$
has no influence on 
the color of $r$. 
Formally, the Gibbs measure on $T$ is unique iff for every color $c \in \{1,\dots,q\}$
and every configuration $\tau$ on the leaves of $T_\ell$,
we have $\mu_\ell(r = c | \tau) \rightarrow 1/q$ as $\ell \rightarrow \infty$. That is, the conditional distribution at the root given $\tau$
is the uniform measure over $\{1,\dots,q\}$ as $\ell \rightarrow \infty$.
In~\cite{Jonasson,BW} it was shown that there is a unique Gibbs measure iff $q > d$.
The uniqueness/non-uniqueness phase transition on
the infinite $d$-regular tree 
is known to be closely connected to the efficiency/hardness 
of other fundamental computational problems, such as sampling and counting; see, e.g., \cite{Weitz,Sly,SlySun,GSV,LLY}.

In the identifiable region $q>d$ we present an efficient structure learning algorithm
with $O(n^2\log{n})\times\poly(d,q)$ running time.
When $q\leq d$, where we cannot hope to learn the hidden graph $G$ since there are pairs of graphs with the same set of $H$-colorings,  
we may be interested in a learning algorithm that outputs a graph $G'$ that is equivalent  to the unknown graph $G$, in the sense that $\Omega_{G'} = \Omega_G$. 
We say that an algorithm is an \textit{equivalent-structure learning algorithm} for a fixed constraint
graph $H$ and a graph family $\mathcal{G}$ if for every $G\in\mathcal{G}$,
with probability at least $1-\eps$,
the algorithm outputs $G' \in \mathcal{G}$ such that $\Omega_{G}=\Omega_{G'}$.

In the $q$-coloring setting,
it turns out that when $q \le d-\sqrt{d} + \Theta(1)$
there is a family of exponentially many $n$-vertex graphs 
with different sets of $q$-colorings 
that
only differ on an exponentially small
(in $n$) fraction of their $q$-colorings.
Consequently, any equivalent-structure learning algorithm
requires exponential many samples to distinguish among these graphs. 

Our results for statistical identifiability, structure learning and equivalent-structure learning for proper $q$-colorings are stated in the following theorem. 
\begin{theorem}
	\label{thm:col}
	Consider the $q$-colorings problem $H=K_q$.  The following hold for all $d$.
	\begin{enumerate}
		\item {\textbf{Efficient learning for $q>d$:}}
		\label{part:1}
		For all $q>d$, $n \ge 1$ and any $G \in \mathcal{G}(n,d)$, 
		there is a structure learning algorithm
		that given 	$L = O (qd^3 \log (\frac{n^2}{\varepsilon}))$ independent samples from $\pi_G$
		outputs 
		$G$
		with probability at least $1-\varepsilon$  
		and has running time $O(L n^2)$.
		\item {\textbf{Non-identifiability for $q\leq d$:}}
		\label{part:2}
		For all $q \leq d$ and $n \ge q+2$
		there exist $q$-colorable graphs $G_1, G_2 \in \mathcal{G}(n,d)$ such that $G_1 \neq G_2$ and
		$\pi_{G_1}=\pi_{G_2}$.
		
		\item 	{\textbf{Lower bound for $q <d-\sqrt{d} + \Theta(1)$:}} 
		\label{part:col-exp}
		For all $q <d-\sqrt{d} + \Theta(1)$ 
		there exists a constant $c>0$ 
		such that
		any equivalent-structure learning algorithm 
		for the family of graphs $\mathcal{G}(n,d)$ 
		that
		succeeds with probability at least $\exp(-cn)$ 
		requires at least $\exp(cn)$ samples, provided $n$ is sufficiently large.
		
	\end{enumerate}
\end{theorem}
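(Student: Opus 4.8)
The plan is to prove this by a fooling-set (information-theoretic) argument, reducing the sample lower bound to the construction of a large family of graphs that are pairwise non-equivalent yet statistically almost indistinguishable. Concretely, I would fix a base graph $G_0\in\mathcal{G}(n,d)$ together with a collection of $k=\Theta(n)$ non-edges $e_1,\dots,e_k$ of $G_0$, where $e_i=\{u_i,v_i\}$, and for each $S\subseteq[k]$ set $G_S=G_0\cup\{e_i:i\in S\}$, giving $M=2^{k}=e^{\Omega(n)}$ graphs in $\mathcal{G}(n,d)$. Since adding $e_i$ only deletes the colorings with $\sigma(u_i)=\sigma(v_i)$, we have $\Omega_{G_S}=\{\sigma\in\Omega_{G_0}:\sigma(u_i)\neq\sigma(v_i)\ \forall i\in S\}$, so $\pi_{G_S}$ is exactly $\pi_{G_0}$ conditioned on satisfying all edges in $S$, and therefore
\[
  \TV{\pi_{G_S}-\pi_{G_0}}
  =\ProbSub{\sigma\sim\pi_{G_0}}{\exists\, i\in S:\ \sigma(u_i)=\sigma(v_i)}
  \le \sum_{i\in S}\eta_i,
  \qquad
  \eta_i:=\ProbSub{\sigma\sim\pi_{G_0}}{\sigma(u_i)=\sigma(v_i)} .
\]
If every $\eta_i\le\eta$ then $\TV{\pi_{G_S}-\pi_{G_0}}\le k\eta=:\delta$ for all $S$. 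A standard hybrid bound gives $\TV{\pi_{G_S}^{\otimes L}-\pi_{G_0}^{\otimes L}}\le L\delta$; and because the events ``the algorithm outputs a graph equivalent to $G_S$'' are disjoint across $S$ (the $G_S$ being pairwise non-equivalent), summing these against the reference measure $\pi_{G_0}^{\otimes L}$ yields, for any algorithm using $L$ samples with per-instance success probability at least $p$, the inequality $p\le 1/M+L\delta$. Taking $M=e^{\Omega(n)}$ and $\delta=e^{-\Omega(n)}$ and choosing $c>0$ below both rates makes $p\ge e^{-cn}$ force $L\ge e^{cn}$, which is exactly the statement.

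This reduces the theorem to two properties of the gadget $(G_0;e_1,\dots,e_k)$. First, to guarantee pairwise non-equivalence it suffices that each $e_i$ admits a \emph{witness} coloring $\tau_i\in\Omega_{G_0}$ that violates $e_i$ but satisfies every other $e_j$; then $\tau_i\in\Omega_{G_S}$ iff $i\notin S$, so $S\mapsto\Omega_{G_S}$ is injective and the equivalence classes are genuinely distinct. Second, to get $\delta=k\eta=e^{-\Omega(n)}$ with $k=\Theta(n)$ I need each candidate pair to coincide only \emph{exponentially rarely}, $\eta_i=e^{-\Omega(n)}$.

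The heart of the proof, and the step I expect to be the main obstacle, is constructing $G_0$ realizing $\eta_i=e^{-\Omega(n)}$ for $\Theta(n)$ simultaneous pairs. Such a coincidence probability cannot arise from any bounded-size gadget placed around $\{u_i,v_i\}$: in a graph of maximum degree $d$ a constant-size neighborhood can alter the relative weight of the event $\{\sigma(u_i)=\sigma(v_i)\}$ by only a $d,q$-dependent factor (a finite-energy argument), leaving $\eta_i=\Omega(1)$. The suppression must therefore be a \emph{collective} effect. My plan is to take $G_0$ to exhibit long-range color order — a symmetry-broken ``palette split'' in which a bipartition $(L,R)$ of $V$ drives the colors appearing on $L$ and on $R$ into (essentially) disjoint palettes — and to place each candidate pair $\{u_i,v_i\}$ across the split, so that $\sigma(u_i)\neq\sigma(v_i)$ holds throughout the ordered phase and fails only on a configuration carrying a ``defect.'' The role of the hypothesis $q<d-\sqrt d+\Theta(1)$ is precisely to make this phase rigid: I would use a Peierls/first-moment estimate to show that in this regime the total weight of colorings with a defect at any fixed vertex is an $e^{-\Omega(n)}$ fraction of $\Omega_{G_0}$, yielding $\eta_i=e^{-\Omega(n)}$, while the ordered phase still carries $e^{\Omega(n)}$ colorings — enough to supply both the entropy needed for closeness and the witnesses $\tau_i$ needed for injectivity.

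The fooling-set reduction and the witness bookkeeping are routine; the delicate part will be the defect estimate. Establishing that, below the threshold, the per-vertex defect cost is genuinely exponential in $n$ (and, in particular, calibrating the additive $\Theta(1)$ so that the transition sits exactly at $d-\sqrt d$) is where the combinatorics of $q$ colors against degree $d$ must be controlled tightly; this is the step whose constant I expect to drive the precise form of the threshold.
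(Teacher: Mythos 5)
Your proposal addresses only Part~\ref{part:col-exp} of the theorem. Parts~\ref{part:1} and~\ref{part:2} are not touched at all: the paper proves Part~\ref{part:1} by combining the generic algorithm \textsc{structlearn-H} with a rejection-sampling/injection argument showing $\Pr[X_u=X_v]\ge 1/(q(d+1)^3)$ for every non-adjacent pair when $q\ge d+1$, and Part~\ref{part:2} by an explicit pair of graphs built from a $(q+1)$-clique missing one edge attached to a path. So at best this is a partial proof, and you should say so.

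For Part~\ref{part:col-exp}, your information-theoretic skeleton is essentially the paper's: a base graph, $\Theta(n)$ optional non-edges giving $2^{\Theta(n)}$ pairwise non-equivalent supergraphs, witnesses certifying distinctness, and the bound $p\le 1/M+L\delta$ (this is the paper's Fact~\ref{fact:lb_eta:general}, and your disjointness-of-equivalence-classes remark correctly handles the passage from the constructed family to all of $\mathcal{G}(n,d)$). Your observation that the suppression of $\Pr[\sigma(u_i)=\sigma(v_i)]$ cannot come from a bounded gadget and must be a collective, long-range effect is also correct and matches the paper's mechanism. The genuine gap is that the construction realizing this effect --- which is precisely where the threshold $q<d-\sqrt d+\Theta(1)$ comes from --- is left as a plan. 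You propose a Peierls/first-moment defect estimate for a symmetry-broken phase but neither exhibit the graph nor prove the defect bound, and you explicitly flag this as the step you have not carried out. The paper's gadget is elementary and needs no Peierls argument: two parallel chains of $(q-1)$-cliques $C_1,\dots,C_m$ and $C_1',\dots,C_m'$ linked by independent sets $I_i,I_i'$ of size $t=\lceil\sqrt{q-1}\,\rceil$, where a one-line induction (Lemma~\ref{lemma:lb_coloring}) shows every $q$-coloring rigidly assigns one fixed color to all the $I_i$ and one to all the $I_i'$; the candidate edges $\{x_i,y_i\}$ run between $C_i$ and $I_i'$, and a direct count gives $\Pr[\sigma(I_1)\neq\sigma(I_1')]\le (q-1)\e^{-m/(q-1)}$, which controls all $m$ pairs simultaneously (no per-pair union bound is needed). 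The degree bookkeeping $\max\{q+t,\,q+\lceil(q-1)/t\rceil+1\}$, minimized at $t\approx\sqrt q$, is what calibrates the $d-\sqrt d+\Theta(1)$ threshold. Without an explicit construction and a proof of the exponential coincidence bound at that degree, your argument does not yet establish the theorem.
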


\noindent
As mentioned earlier, the phase transition for identifiability/non-identifiability at $q = d$ coincides with the statistical physics uniqueness/non-uniqueness phase transition.
The lower bound when $q<d-\sqrt{d}+ \Theta(1)$ 
is also quite curious since it coincides exactly with the 
threshold for polynomial-time/NP-completeness for the decision problem~\cite{EHK,MR}.  In fact, the graph used in the proof of 
Part~\ref{part:col-exp} of Theorem~\ref{thm:col}
is inspired by the graph used in the NP-completeness proof in~\cite{EHK} and
the graph theoretic result in~\cite{MR}.

Since our results for the structure learning problem in the setting of $q$-colorings 
suggest an intimate connection 
between the efficiency/hardness of the learning problem and the uniqueness/non-uniqueness 
phase transition of the Gibbs distribution,
we further explore this connection for general $H$-colorings.
A sufficient condition for uniqueness on general graphs is the \textit{Dobrushin uniqueness 
	condition}~\cite{Dobrushin}, which is a standard tool in statistical physics
for establishing uniqueness of the Gibbs distribution.  

Dobrushin's condition considers the so-called influence matrix $R$ where the entry $R_{vw}$ 
measures the worst-case influence of vertex $w$ on $v$.  In particular, consider
all pairs of color assignments $\tau,\tau_w$ that differ only at vertex $w$.
$R_{vw}$ is the maximum (over these pairs $\tau,\tau_w$) of the difference
in total variation distance of the marginal distribution at $v$ conditional
on the color assignment $\tau(V \setminus v)$ versus $\tau_w(V \setminus v)$.  
Observe that $R_{vw}=0$ for non-adjacent pairs $v,w$.  
The Dobrushin uniqueness condition holds if the maximum row sum
in $R$ is strictly less than $1$, so that the total influence on a vertex of its neighborhood is less than $1$; see Definition~\ref{def:dob} for a precise definition.

We prove that if the Dobrushin uniqueness condition holds, then 
we can learn the underlying $n$-vertex graph in 
$\poly(n,q)$ time.

\begin{theorem}
	\label{thm:dob}
	Let $H \neq K_q^+$ be an arbitrary constraint graph.  	
	Suppose $G$ is such that 
	$\pi_G$ satisfies the Dobrushin uniqueness condition.
	Then, there is a structure learning algorithm
	that given $L \!=\! O (q^2 \log (\frac{n^2}{\varepsilon}))$ independent samples from $\pi_G$
	outputs 
	the graph $G$
	with probability at least $1-\varepsilon$
	and has running time~$O(L n^2)$.
\end{theorem}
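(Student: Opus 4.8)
The plan is to reduce the whole problem to an independent edge-by-edge test. For each pair of vertices $\{u,v\}$ define the incompatibility probability
$$p_{uv} = \Prob{\{\sigma(u),\sigma(v)\}\notin E(H)},$$
and estimate it by the empirical frequency $\hat p_{uv}$ over the $L$ samples. The algorithm outputs the graph $\hat G$ whose edge set is $\{\,\{u,v\}:\hat p_{uv}<\gamma/2\,\}$, for a threshold $\gamma>0$ fixed below. Computing all the $\hat p_{uv}$ costs $O(1)$ work per pair per sample, i.e. $O(Ln^2)$ in total, matching the claimed running time. Correctness rests on two facts: that $p_{uv}=0$ whenever $\{u,v\}\in E(G)$, and that $p_{uv}\ge\gamma$ whenever $\{u,v\}\notin E(G)$, for a constant gap $\gamma=\Omega(1/q^2)$.

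The first fact is immediate: if $\{u,v\}\in E(G)$, then every $H$-coloring assigns $u,v$ compatible colors, so $p_{uv}=0$ and hence $\hat p_{uv}=0<\gamma/2$. Granting the second fact, a Chernoff bound shows that for a non-edge the empirical frequency concentrates, $\Prob{\hat p_{uv}<\gamma/2}\le \exp(-\Omega(L\gamma))$. A union bound over the fewer than $n^2$ pairs, together with the choice $L=O(\gamma^{-1}\log(n^2/\varepsilon))=O(q^2\log(n^2/\varepsilon))$, makes the probability that any single test errs at most $\varepsilon$, so $\hat G=G$ with probability at least $1-\varepsilon$.

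The crux is therefore the non-edge lower bound. Here is the approach I would take. Since $u,v$ are non-adjacent, the Markov property gives that, conditioned on $\eta=\sigma(V\setminus\{u,v\})$, the colors $\sigma(u)$ and $\sigma(v)$ are independent; writing $A_u(\eta),A_v(\eta)$ for the sets of colors available to $u$ and $v$ under $\eta$, and bounding each single-site conditional marginal from below by $\Omega(1/q)$ over its available colors (exactly $1/|A_u(\eta)|\ge 1/q$ in the unweighted case), one gets
$$p_{uv}\;\ge\;\frac{1}{q^2}\,\Prob{\exists\, a\in A_u(\eta),\;b\in A_v(\eta)\text{ with }\{a,b\}\notin E(H)}.$$
It then remains to show that, with probability $\Omega(1)$ over $\eta$, some incompatible color pair is simultaneously available at $u$ and at $v$.

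This last step is where I expect the main difficulty and where the Dobrushin uniqueness condition is essential. The event above can fail with probability $1$ only if every $H$-coloring forces $\sigma(u)$ and $\sigma(v)$ to be compatible, i.e. only if $\OO_G=\OO_{G\cup\{u,v\}}$; this is exactly the non-identifiability obstruction behind Theorem~\ref{thm:identify}, and it must be ruled out quantitatively rather than merely qualitatively. I would do so by using that Dobrushin's condition yields decay of correlations between the assignments on $N(u)$ and $N(v)$, together with uniform lower bounds on the probability that each neighborhood is compatible with a prescribed color; combining these through an inclusion argument over the incompatible pairs of $H$ (at least one exists since $H\neq K_q^+$) should force an incompatible pair to be jointly available with probability $\Omega(1)$, giving $\gamma=\Omega(1/q^2)$. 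The other delicate point is the weighted Gibbs setting, where the single-site marginal lower bound is no longer the trivial $1/q$ and must itself be extracted from the Dobrushin influence bound.
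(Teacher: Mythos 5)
Your outer layer---the per-pair incompatibility test, the Chernoff bound, and the union bound over the $O(n^2)$ pairs---is sound and is essentially the paper's Lemma~\ref{lemma:hard-contraints:main} (the paper tests for one fixed hard constraint $\{i,j\}\notin E(H)$ rather than for all incompatible pairs at once, which also spares it from needing to know the threshold $\gamma$ that your estimator presupposes). The genuine gap is the non-edge lower bound, which you explicitly leave as a plan (``I would do so by\dots'', ``should force\dots''): you never establish that an incompatible pair is jointly available at $u$ and $v$ with probability $\Omega(1)$, and that is the entire content of the theorem. The route you sketch is also doubtful as stated: ``decay of correlations between the assignments on $N(u)$ and $N(v)$'' has no distance to decay over, since $u$ and $v$ may be at distance two and their neighborhoods may overlap or be adjacent; and the ``uniform lower bound on the probability that each neighborhood is compatible with a prescribed color'' is itself exactly the statement requiring proof---Dobrushin's condition does not hand it to you directly.

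The paper's Lemma~\ref{lemma:dob:main} closes this gap with a coupling argument you would do better to adopt. Fix one hard constraint $\{i,j\}\notin E(H)$. Draw $\sigma\sim\pi_G$ and let $\tau$ agree with $\sigma$ everywhere except that $\tau(u)=i$. Resample the vertices of $\partial u$ one at a time in both copies using the optimal coupling of the two single-site conditional distributions; since the copies differ only at $u$ going into each step (on the event they have agreed so far), the per-step disagreement probability is at most $R_{v_k u}$, and summing over $\partial u$ gives at most $\alpha<1$ by the Dobrushin condition. On the event that the copies agree after these updates (probability at least $1-\alpha$), color $i$ is compatible with the configuration off $u$, so resampling $u$ produces $X_u=i$ with probability at least $1/q$; hence $\Pr[X_u=i]\ge(1-\alpha)/q$. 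The same argument applies conditionally from any starting configuration, and the updates at $\partial v\cup\{v\}$ do not touch $u$ because $\{u,v\}\notin E$, so repeating it for $v$ yields $\Pr[X_u=i,\,X_v=j]\ge(1-\alpha)^2/q^2$. This is the quantitative lower bound $\delta$ that, fed into Lemma~\ref{lemma:hard-contraints:main}, gives the stated sample complexity $L=O(q^2\log(n^2/\varepsilon))$ and running time $O(Ln^2)$. Until you supply an argument of this kind, your proof does not go through.
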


\noindent
The above theorem provides an efficient structure learning algorithm under a fairly strong model assumption.
For soft-constraint systems, under a similar but weaker (i.e., easier to satisfy) condition, Bresler, Mossel and Sly \cite{BMS} 
give a structure learning algorithm with running time \textit{exponential} in the maximum degree $d$ of the graph $G$.
We provide a structure learning algorithm for hard-constraint systems with a similar running time that works under an even weaker assumption. 
Our algorithm works for
{\em permissive systems}, which 
are a class of hard-constraint models widely studied
in statistical physics; see, e.g.,~\cite{DSVW,MSW,DMS}. The precise definition, as well as the running time and sample complexity of our algorithm, are provided in Section \ref{sec:permissive}.

The rest of the paper is organized as follows. In Section \ref{sec:identify-unweight} we prove our characterization theorem (Theorem \ref{thm:identify}) and our learning lower bound for identifiable models (Theorem~\ref{thm:neg-identify}).
In Section \ref{sec:colorings}, we prove our results for colorings (Theorem~\ref{thm:col}). In particular, in Section~\ref{subsec:naive-alg} we introduce a general structure learning algorithm \textsc{structlearn-H} which will be the basis of all our algorithmic result.
Our $\poly(n,d,q)$-time algorithm under the Dobrushin uniqueness condition (Theorem \ref{thm:dob})
is established in Section \ref{sec:Dobrushin}.
In Section \ref{section:approx} we consider the approximation problem of learning a graph $G$ such that $\pi_G$ is close in total variation distance to the true distribution. 
Finally, the case of weighted $H$ and $G$ is considered in Section \ref{sec:weighted}.

\section{Identifiability}
\label{sec:identify-unweight}

As discussed in the introduction, 
given a constraint graph $H$, it is possible that $\pi_{G_1}=\pi_{G_2}$ for two distinct $H$-colorable graphs $G_1$ and $G_2$; i.e., the structure learning problem is not identifiable. 
In this section we  prove Theorem \ref{thm:identify} from the introduction that
characterizes the identifiable constraint graphs. 

Let $\mathcal{G}$ be a family of $H$-colorable graphs. Recall that a constraint graph $H$ is \emph{identifiable with respect to $\mathcal{G}$} if for any two distinct graphs $G_1,G_2\in\mathcal{G}$ we have $\pi_{G_1} \neq \pi_{G_2}$ or equivalently $\OO_{G_1} \neq \OO_{G_2}$. In particular, when $\mathcal{G}$ is the set of all finite $H$-colorable graphs we say that $H$ is identifiable; see Definition \ref{def:identify}.
Before proving Theorem \ref{thm:identify} we provide
the following useful alternative definition of identifiability.

\begin{lemma}
	\label{lemma:id_lem}
	A constraint graph $H \neq K_q^+$ is identifiable if and only if for any $H$-colorable graph $G=(V,E)$ and any two nonadjacent vertices $u,v \in V$ there is an $H$-coloring of $G$ where $u$ and $v$ are assigned incompatible colors.
\end{lemma}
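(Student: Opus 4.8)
The plan is to prove the contrapositive form of each direction, showing that $H$ fails to be identifiable precisely when the stated incompatibility property fails. The one structural fact I will lean on is that an $H$-coloring is simply a map $\sigma\colon V\to V(H)$, so inserting a single edge $\{u,v\}$ into a graph $G$ to form $G'=G\cup\{u,v\}$ deletes exactly the colorings that violated it: $\Omega_{G'}=\{\sigma\in\Omega_G : \{\sigma(u),\sigma(v)\}\in E(H)\}$, while $\Omega_G\supseteq\Omega_{G'}$ always.

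For the direction ``property fails $\Rightarrow$ not identifiable'', I would start from an $H$-colorable $G=(V,E)$ and a nonadjacent pair $u,v$ witnessing the failure, i.e.\ every $\sigma\in\Omega_G$ satisfies $\{\sigma(u),\sigma(v)\}\in E(H)$. Setting $G'=G\cup\{u,v\}$, the displayed identity collapses to $\Omega_{G'}=\Omega_G$, yet $G\neq G'$ because $\{u,v\}\in E(G')\setminus E$; since $\Omega_G\neq\varnothing$ both graphs are $H$-colorable, so this pair witnesses non-identifiability.

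For the reverse direction, I would begin with distinct $H$-colorable $G_1,G_2$ satisfying $\Omega_{G_1}=\Omega_{G_2}$. First I observe that equal, nonempty sets of vertex-functions force a common vertex set $V$, so $G_1$ and $G_2$ can differ only in their edges; I then fix an edge $\{u,v\}$ lying in one edge set but not the other, say $\{u,v\}\in E(G_1)\setminus E(G_2)$, making $u,v$ nonadjacent in $G_2$. Every $\sigma\in\Omega_{G_2}=\Omega_{G_1}$ must respect this edge of $G_1$, forcing $\{\sigma(u),\sigma(v)\}\in E(H)$; hence no $H$-coloring of the $H$-colorable graph $G_2$ makes the nonadjacent pair $u,v$ incompatible, which is exactly the failure of the property.

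I expect no deep obstacle here, since the argument is purely structural. The only points I will take care with are deducing the common vertex set from the equality of the (nonempty) coloring sets, and applying the incompatibility hypothesis to the graph in which $u,v$ are \emph{nonadjacent} while reading off the edge constraint from the \emph{other} graph. It is worth noting that the assumption $H\neq K_q^+$ is what makes ``incompatible'' a nonvacuous notion, as it guarantees a missing edge in $H$, although the biconditional would hold trivially even without it.
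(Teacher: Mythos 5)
Your proposal is correct and follows essentially the same route as the paper: the forward direction is the identical contrapositive argument via $G' = G \cup \{u,v\}$, and your reverse direction is just the contrapositive restatement of the paper's argument that an edge present in $E_1$ but not $E_2$ forces every coloring in $\Omega_{G_1}=\Omega_{G_2}$ to assign compatible colors to a pair nonadjacent in $G_2$. The only cosmetic difference is that you make explicit the deduction of a common vertex set, which the paper leaves implicit.
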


\begin{proof}
	For the forward direction we prove the contrapositive.
	Let $G=(V,E)$ be an $H$-colorable graph and
	suppose there exists two nonadjacent vertices $u,v \in V$ 
	such that in every $H$-coloring $\sigma$ of $G$ these vertices
	receive compatible colors.
	Then, the graph $G$ and the graph $G' = (V,E \cup \{u,v\})$
	have the same set of $H$-colorings. Hence, $\pi_G = \pi_{G'}$ and so $H$ is not identifiable.  
	
	For the reverse direction,
	suppose that for every $H$-colorable graph $G=(V,E)$ and every pair of nonadjacent vertices $u,v \in V$ there exists an $H$-coloring of $G$ such that $u$ and $v$
	are assigned incompatible colors.
	Suppose also that for a pair of $H$-colorable graphs $G_1=(V,E_1)$ and $G_2=(V,E_2)$, we have
	$\pi_{G_1} = \pi_{G_2}$ (or equivalently that $\OO_{G_1}=\OO_{G_2}$). 
	We show that $G_1 = G_2$.
	First consider $\{u,v\} \not\in E_1$. Then, there exists an $H$-coloring $\tau \in \Omega_{G_1}$
	where $u$ and $v$ receive incompatible colors. Since also $\tau \in \Omega_{G_2}$, $\{u,v\} \not\in E_2$ . 
	Similarly, if $\{u,v\} \not\in E_2$, then $\{u,v\} \not\in E_1$. Thus, $G_1=G_2$ and so $H$ is identifiable.	 
\end{proof}
\noindent
To characterize identifiability of $H$-colorings
we previously introduced the supergraph $G_{ij}$ of $H$
obtained by duplicating the colors $i$ and $j$. 
The new copies of $i$ and $j$, denoted $i'$ and $j'$, have the same neighbors as $i$ and $j$, respectively, except for the one
edge $\{i',j'\}$ that is not present in $G_{ij}$---see Figure~\ref{fig:Hij} for examples of the graph $G_{ij}$.
We provide next the formal definition of $G_{ij}$.
\begin{definition}\label{dfn:Hij}
	Let $H=(V(H),E(H))$ be an arbitrary constraint graph with no self-loops.
	For each $\edge{i}{j}\in E(H)$, 
	we define the graph $G_{ij} = (V(G_{ij}),E(G_{ij}))$ as follows:
	\begin{enumerate}
		\item $V(G_{ij})=V(H)\cup\{i',j'\}$ where $i'$ and $j'$ are two new colors;
		\item If $\edge{a}{b} \in E(H)$, then the edge $\edge{a}{b}$ is also in $E(G_{ij})$;
		\item For each $k \in V(G_{ij}) \setminus \{i',j'\}$, 
		the edge $\edge{i'}{k}$ is in $G_{ij}$  
		if and only if the edge $\edge{i}{k}$ is in $H$, and 
		similarly $\edge{j'}{k}\in E(G_{ij})$ if and only if $\edge{j}{k} \in E(H)$;
	\end{enumerate}
\end{definition}

\begin{figure}[t]
	\begin{minipage}{0.22\linewidth}
		\centering
		\begin{tikzpicture}
		\node(3) at (0,0) {$3$};
		\node(1) at (-1.5,1.5) {$1$};
		\node(2) at (1.5,1.5) {$2$};
		\node(1') at (-1.5,2.6) {};
		\node(2') at (1.5,2.6) {};
		
		\draw (1)--(2);
		\draw (1)--(3);
		\draw (2)--(3);
		\end{tikzpicture}
		\caption*{$H$}
	\end{minipage}
	\begin{minipage}{0.22\linewidth}
		\centering
		\begin{tikzpicture}
		\node(3) at (0,0) {$3$};
		\node(1) at (-1.5,1.5) {$1$};
		\node(2) at (1.5,1.5) {$2$};
		\node(1') at (-1.5,2.5) {$1'$};
		\node(2') at (1.5,2.5) {$2'$};
		
		\draw (1)--(2);
		\draw (1')--(2);
		\draw (1)--(2');
		\draw (1)--(3);
		\draw (2)--(3);
		\draw (1')--(3);
		\draw (2')--(3);
		\end{tikzpicture}
		\caption*{$G_{12}$}
	\end{minipage}
	\qquad
	\begin{minipage}{0.22\linewidth}
		\centering
		\begin{tikzpicture}
		\node(3) at (-1.5,0) {$3$};
		\node(4) at (1.5,0) {$4$};
		\node(1) at (-1.5,1.5) {$1$};
		\node(2) at (1.5,1.5) {$2$};
		\node(1') at (-1.5,2.6) {};
		\node(2') at (1.5,2.6) {};
		
		\draw (1)--(2);
		\draw (1)--(3);
		\draw (2)--(4);
		\draw (3)--(4);
		\end{tikzpicture}
		\caption*{$H'$}
	\end{minipage}
	\begin{minipage}{0.22\linewidth}
		\centering
		\begin{tikzpicture}
		\node(3) at (-1.5,0) {$3$};
		\node(4) at (1.5,0) {$4$};
		\node(1) at (-1.5,1.5) {$1$};
		\node(2) at (1.5,1.5) {$2$};
		\node(1') at (-1.5,2.5) {$1'$};
		\node(2') at (1.5,2.5) {$2'$};
		
		\path
		(1) edge (2)
		(1') edge (2)
		(2') edge (1)
		(1) edge (3)
		(1') edge [bend right] (3)
		(2) edge (4)
		(2') edge [bend left] (4)
		(3) edge (4);
		\end{tikzpicture}
		\caption*{$G_{12}'$}
	\end{minipage}
	\caption{Two constraint graphs $H$ and $H'$ with corresponding supergraphs $G_{12}$ and $G_{12}'$.}
	\label{fig:Hij}
\end{figure}

Our characterization of identifiability, i.e., Theorem \ref{thm:identify}
from the introduction, is established by the next two lemmas. Lemma~\ref{lemma:id:no-loops} deals with the case of constraint graphs with at least one self-loop, while Lemma~\ref{lemma:id_thm} considers constraint graphs with no self-loops.

\begin{lemma}
	\label{lemma:id:no-loops}
	If $H \neq K_q^+$ has at least one self-loop, then $H$ is identifiable.
\end{lemma}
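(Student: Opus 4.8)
The plan is to work through the alternative characterization in Lemma~\ref{lemma:id_lem}: it suffices to show that for every $H$-colorable graph $G=(V,E)$ and every pair of nonadjacent vertices $u,v\in V$, there is an $H$-coloring of $G$ assigning $u$ and $v$ incompatible colors. For a color $s$, write $N^+(s)=\{c\in V(H):\{c,s\}\in E(H)\}$ for the set of colors compatible with $s$; note that $s\in N^+(s)$ precisely when $s$ carries a self-loop. The structural fact I would isolate is the following \emph{Main Claim}: if $H\neq K_q^+$ has a self-loop, then there exist a self-loop color $s^{*}$ and a pair of incompatible colors $a,b\in N^+(s^{*})$ (possibly $a=b$).

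Granting the Main Claim, the construction is immediate, and is where I would start the write-up to fix intuition. Assign color $a$ to $u$, color $b$ to $v$, and the self-loop color $s^{*}$ to every other vertex of $G$. Since $a,b\in N^+(s^{*})$ are both compatible with $s^{*}$ and $\{s^{*},s^{*}\}\in E(H)$, every edge of $G$ is satisfied: an edge incident to $u$ (resp.\ $v$) runs to a vertex colored $s^{*}$ and is valid because $\{a,s^{*}\}\in E(H)$ (resp.\ $\{b,s^{*}\}\in E(H)$), while every remaining edge is valid by the self-loop at $s^{*}$; here I use that $u,v$ are nonadjacent, so no edge joins them. As $\{a,b\}\notin E(H)$, the vertices $u,v$ receive incompatible colors, as needed. (Note $a,b\neq s^{*}$: if, say, $a=s^{*}$, then $b\in N^+(s^{*})$ would give $\{a,b\}=\{s^{*},b\}\in E(H)$, contradicting incompatibility; so the colors of $u,v$ do not clash with the background color.)

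The main obstacle is the Main Claim, and I expect the only delicate case to be when the neighborhood of a self-loop color is a clique. I would argue by contradiction: suppose every self-loop color $s$ has $N^+(s)$ inducing a complete graph with all self-loops, i.e.\ $N^+(s)$ contains no incompatible pair. Fix any self-loop color $s_0$. If $N^+(s_0)=V(H)$, then $H$ would equal $K_q^+$, contradicting the hypothesis; hence $N^+(s_0)\subsetneq V(H)$. Since $H$ is connected, there is an edge $\{c,w\}\in E(H)$ with $c\in N^+(s_0)$ and $w\notin N^+(s_0)$. Then $c\neq s_0$ (otherwise $\{s_0,w\}\in E(H)$ would force $w\in N^+(s_0)$), so $\{c,s_0\}\in E(H)$, and moreover $c$ carries a self-loop because $N^+(s_0)$ is a clique-with-loops. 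But now $s_0,w\in N^+(c)$ while $\{s_0,w\}\notin E(H)$, so $N^+(c)$ contains the incompatible pair $\{s_0,w\}$, contradicting the assumed property of the self-loop color $c$. This proves the Main Claim and completes the argument. I would finally double-check the boundary cases $a=b$ (realized by a single color without a self-loop lying in some $N^+(s^{*})$) and the situation where $u$ and $v$ share neighbors, both of which are handled uniformly by the construction above.
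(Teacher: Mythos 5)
Your proof is correct and follows essentially the same approach as the paper: reduce to Lemma~\ref{lemma:id_lem} and exhibit an $H$-coloring that assigns a self-loop color $s^*$ everywhere except at $u,v$, which receive an incompatible pair from $N^+(s^*)$. The only difference is organizational --- your ``Main Claim'' with its single connectivity argument unifies the paper's two cases (the paper separately handles the case where some vertex lacks a self-loop, taking $a=b$ equal to a loopless neighbor of a looped color, and the case where all vertices have self-loops, taking $a\neq b$ to be two looped colors with a common looped neighbor), which is a mild but genuine simplification.
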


\begin{proof}
	The proof is divided into two cases corresponding to whether all vertices of $H$ have self-loops or not. 
	
	\medskip
	\noindent
	\textbf{Case 1:} \textit{At least one but not all vertices of $H$ have self-loops.} 
	Let $U$ be the set of vertices that have self-loops and let $W=V(H)\bl U$ be the set of vertices that do not. By assumption both $U$ and $W$ are not empty. Moreover, $U$ and $W$ are connected because by assumption $H$ is connected. 
	Thus, there exist $i\in U$ and $j\in W$ such that $\edge{i}{i},\edge{i}{j}\in E(H)$ and $\edge{j}{j} \not\in E(H)$.  We use this gadget to show that for any $H$-colorable graph $G=(V,E)$ and any two nonadjacent vertices $u,v \in V$ of $G$,
	there exists an $H$-coloring $\sigma$ of $G$
	where $u$ and $v$ are assigned incompatible colors. Then, by Lemma \ref{lemma:id_lem}, $H$ is identifiable.
	The $H$-coloring $\sS$ is defined as follows: $\sS(w)=i$ for all $w\neq u,v$ and $\sS(u)=\sS(v)=j$. Since $\edge{i}{i},\edge{i}{j} \in E(H)$,  $\sS$ is a valid $H$-coloring of $G$.
	Moreover, since $\edge{j}{j} \not\in E(H)$,  $u$ and $v$ receive incompatible colors and the result follows.
	
	\medskip
	\noindent
	\textbf{Case 2:} \textit{All vertices of $H$ have self-loops}. 
	Observe first that if $H$ is connected, $H \neq K_q^+$ and every vertex in $H$ has a self-loop,
	then there exist $i,j,k \in V(H)$ such that $\edge{i}{j}$, $\edge{j}{k}$, $\edge{i}{i}$, $\edge{j}{j}$, $\edge{k}{k}\in E(H)$ and $\edge{i}{k} \notin E(H)$. We use this gadget to show 
	for any $H$-colorable graph $G=(V,E)$ 
	and any pair of nonadjacent vertices $u,v \in V$
	there is an $H$-coloring $\sigma$ of $G$
	such that $(\sigma(u),\sigma(v)) \not\in E(H)$. Lemma \ref{lemma:id_lem} then implies that $H$ is identifiable.
	The $H$-coloring $\sS$ is given by: $\sS(w)=j$ for all $w\neq u,v$, $\sS(u)=i$ and $\sS(v)=k$. 
	Since color $j$ is compatible with colors $i$, $j$ and $k$ in $H$,
	$\sS$ is a valid $H$-coloring of $G$. Moreover,  $u$ and $v$ receive the incompatible colors $i$ and $k$ and so the result follows.
\end{proof}	
\vspace{-0.3em}

\begin{lemma}
	\label{lemma:id_thm}
	If $H \neq K_q^+$ has no self-loops, $H$ is identifiable if and only if for each $\{i,j\}\in E(H)$ there exists an $H$-coloring of $G_{ij}$ where $i'$ and $j'$ receive incompatible colors.
\end{lemma}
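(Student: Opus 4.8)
The plan is to work entirely through the reformulation in Lemma~\ref{lemma:id_lem}, which states that $H$ is identifiable precisely when, for every $H$-colorable graph $G=(V,E)$ and every pair of nonadjacent vertices $u,v\in V$, some $H$-coloring of $G$ assigns incompatible colors to $u$ and $v$. Before either direction, I would record one elementary fact: each $G_{ij}$ is itself $H$-colorable, and $i',j'$ are nonadjacent in it. Indeed, the map sending $i'\mapsto i$, $j'\mapsto j$ and fixing every vertex of $V(H)$ is a valid $H$-coloring of $G_{ij}$, because by Definition~\ref{dfn:Hij} every edge of $G_{ij}$ incident to $i'$ (resp.\ $j'$) corresponds to an edge of $H$ incident to $i$ (resp.\ $j$), and the only omitted edge $\{i',j'\}$ imposes no constraint.

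For the forward direction I would argue that it is essentially immediate: if $H$ is identifiable, then applying Lemma~\ref{lemma:id_lem} to the $H$-colorable graph $G_{ij}$ and its nonadjacent pair $i',j'$ yields an $H$-coloring of $G_{ij}$ in which $i'$ and $j'$ receive incompatible colors, which is exactly the stated condition for the edge $\{i,j\}$.

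The substance is the reverse direction, where I assume that for each $\{i,j\}\in E(H)$ there is an $H$-coloring $\psi=\psi_{ij}$ of $G_{ij}$ with $\{\psi(i'),\psi(j')\}\notin E(H)$, and I must verify the hypothesis of Lemma~\ref{lemma:id_lem}. Fix an $H$-colorable $G=(V,E)$, nonadjacent $u,v\in V$, and any $H$-coloring $\sigma$ of $G$. If $\{\sigma(u),\sigma(v)\}\notin E(H)$ we are done, so assume $i:=\sigma(u)$ and $j:=\sigma(v)$ satisfy $\{i,j\}\in E(H)$, and take the corresponding $\psi=\psi_{ij}$. I would then define a recoloring $\sigma'$ of $G$ by the composition rule $\sigma'(w)=\psi(\sigma(w))$ for $w\neq u,v$, together with the two special assignments $\sigma'(u)=\psi(i')$ and $\sigma'(v)=\psi(j')$. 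By construction $\{\sigma'(u),\sigma'(v)\}=\{\psi(i'),\psi(j')\}$ is incompatible, so the only thing left is to check that $\sigma'$ is a valid $H$-coloring of $G$.

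The validity check is the main obstacle, and I would carry it out by a case analysis on an arbitrary edge $\{w,x\}\in E$. When neither endpoint is $u$ or $v$, the edge $\{\sigma(w),\sigma(x)\}$ lies in $E(H)\subseteq E(G_{ij})$, so validity of $\psi$ forces $\{\sigma'(w),\sigma'(x)\}\in E(H)$. When $w=u$ and $x\notin\{u,v\}$, the edge $\{u,x\}$ and validity of $\sigma$ give $\{i,\sigma(x)\}\in E(H)$; by the neighbor-duplication rule of Definition~\ref{dfn:Hij} this yields $\{i',\sigma(x)\}\in E(G_{ij})$, and again validity of $\psi$ gives $\{\psi(i'),\psi(\sigma(x))\}=\{\sigma'(u),\sigma'(x)\}\in E(H)$; the case $x=v$ is symmetric via $j'$. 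The remaining case $\{w,x\}=\{u,v\}$ never arises since $u,v$ are nonadjacent---which is exactly why the missing edge $\{i',j'\}$ of $G_{ij}$ never needs to be respected. This completes the verification, so Lemma~\ref{lemma:id_lem} gives identifiability. The key idea worth highlighting is precisely that $G_{ij}$ encodes the correct local recoloring: duplicating $i,j$ while deleting the edge between the copies makes the composition $\sigma'=\psi\circ\sigma$ push the single offending pair $(u,v)$ onto the incompatible pair $(i',j')$ without violating any other constraint.
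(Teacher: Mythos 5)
Your proposal is correct and follows essentially the same route as the paper: both directions go through Lemma~\ref{lemma:id_lem}, the forward direction applies it to the $H$-colorable graph $G_{ij}$ with the nonadjacent pair $i',j'$, and the reverse direction uses exactly the same composition $\sigma'=\psi\circ\sigma$ with $\sigma'(u)=\psi(i')$, $\sigma'(v)=\psi(j')$. The only difference is that you spell out the case analysis verifying that $\sigma'$ is a valid $H$-coloring, which the paper leaves as ``straightforward to check''; your verification is accurate.
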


\begin{proof}
	Assume first that $H \neq K_q^+$ is identifiable and has no self-loops. 
	For every $\{i,j\}\in E(H)$, $G_{ij}$ is clearly $H$-colorable 
	(simply assign 
	color $k \in V(H)$ to the corresponding vertex in $G_{ij}$, color $i$ to $i'$ and color $j$ to $j'$).
	Hence, Lemma~\ref{lemma:id_lem} implies that there exists an $H$-coloring of $G_{ij}$ where $i'$ and $j'$ receive incompatible colors. This proves the forward direction of the lemma.
	
	For the reverse direction, suppose that for every $\{i,j\} \in E(H)$ there exists an $H$-coloring of $G_{ij}$ where $i'$ and $j'$ are assigned incompatible colors.
	Let $G=(V,E)$ be an arbitrary $H$-colorable graph. 
	We show that for every pair of nonadjacent vertices $u,v \in V$ in $G$
	there exists an $H$-coloring of $G$ where $u$ and $v$ receive incompatible colors.
	It then follows from Lemma \ref{lemma:id_lem} that $H$ is identifiable.
	
	Let $\sS$ be an $H$-coloring of $G$ 
	and let us assume that $\sS(u)$ and $\sS(v)$ are compatible colors. 
	(If $\sS(u)$ and $\sS(v)$ are incompatible colors in $H$, 
	there is nothing to prove.)
	We use $\sigma$ to construct an $H$-coloring $\sigma'$ where
	$u$ and $v$ receive incompatible colors.
	Let 
	$a = \sigma(u)$ and $b = \sigma(v)$.
	By assumption, there exists
	an $H$-coloring $\tau$ of $G_{ab}$ where the corresponding copies of $a$ and $b$, $a'$ and $b'$, receive incompatible colors. Define the $H$-coloring $\sS'$ of $G$ as follows:
	\begin{equation*}
	\sS'(w) = \tau(\sS(w)), \quad \forall w\neq u,v; \quad \sS'(u) = \tau(a');\quad \sS'(v) = \tau(b').
	\end{equation*}
	It is straightforward to check that $\sS'$ is a proper $H$-coloring of $G$. Since $u$ and $v$ receive incompatible colors in $\sigma'$ (i.e., $\tau(a')$ and $\tau(b')$), the proof is complete.
\end{proof}

\subsection{Learning lower bounds for identifiable models}

In this section we prove Theorem \ref{thm:neg-identify} from the introduction. In particular, we provide a constraint graph $F$ and a family of $F$-colorable $n$-vertex graphs of maximum degree $7$ such that the number of samples from $\pi_G^F$ required to learn any graph in this family, even with success probability $\exp(-O(n))$, is exponential in~$n$. 

We define the constraint graph $F=(V(F),E(F))$ first, which consists of an independent set of size $32$, denoted $I_{32}$, and four additional vertices $\{1,2,3,4\}$. Every vertex in the independent set $I_{32}$ is connected to these four vertices and also $\edge{1}{2},\edge{2}{3},\edge{3}{4} \in E(F)$; see Figure~\ref{fig:lb_id}(a).
\begin{figure}[t]
	\begin{minipage}{0.3\linewidth}
		\centering
		\begin{tikzpicture}[every loop/.style={}]
		\node(1) at (-2,0) {$1$};
		\node(2) at (-1,1.732) {$2$};
		\node(3) at (1,1.732) {$3$};
		\node(4) at (2,0) {$4$};
		\node[shape=circle, draw=black](5) at (0,0) {$I_{32}$};
		
		\path
		(1) edge [-] (2)
		(2) edge [-] (3)
		(3) edge [-] (4)
		(5) edge [ultra thick] (1)
		(5) edge [ultra thick] (2)
		(5) edge [ultra thick] (3)
		(5) edge [ultra thick] (4);
		\end{tikzpicture}
		\caption*{(a)}
	\end{minipage}
	\quad\quad\quad
	\begin{minipage}{0.55\linewidth}
		\centering
		\begin{tikzpicture}
		\node(a1)  [inner sep=1pt] at (0,0) {$a_1$};
		\node(a1') [inner sep=1pt] at (0,-1.5) {$a_1'$};
		\node(b1)  [inner sep=1pt] at (0,-3) {$b_1$};
		\node(c1)  [inner sep=1pt] at (0,-4.5) {$c_1$};
		
		\node(ai)  [inner sep=1pt] at (3,0) {$a_i$};
		\node(ai') [inner sep=1pt] at (3,-1.5) {$a_i'$};
		\node(bi)  [inner sep=1pt] at (3,-3) {$b_i$};
		\node(ci)  [inner sep=1pt] at (3,-4.5) {$c_i$};
		
		\node(ai1) [inner sep=1pt] at (5,-0) {$a_{i+1}$};
		\node(ai1')[inner sep=1pt] at (5,-1.5) {$a_{i+1}'$};
		\node(bi1) [inner sep=1pt] at (5,-3) {$b_{i+1}$};
		\node(ci1) [inner sep=1pt] at (5,-4.5) {$c_{i+1}$};
		
		\node(am)  [inner sep=1pt] at (8,0) {$a_m$};
		\node(am') [inner sep=1pt] at (8,-1.5) {$a_m'$};
		\node(bm)  [inner sep=1pt] at (8,-3) {$b_m$};
		\node(cm)  [inner sep=1pt] at (8,-4.5) {$c_m$};
		
		\node(x1) at (0,-2.25) {};
		\node(x2) at (3,-2.25) {};
		\node(x3) at (5,-2.25) {};
		\node(x4) at (8,-2.25) {};
		
		
		\path
		(a1) edge [bend left] (b1)
		(a1) edge [bend right] (c1)
		(a1') edge  (b1)
		(a1') edge [bend right] (c1)
		(b1) edge (c1)
		
		(ai1) edge (bi)
		(ai1) edge (ci)
		(ai1') edge (bi)
		(ai1') edge (ci)
		(bi1) edge (ai)
		(bi1) edge (ai')
		(bi1) edge (ci)
		(ci1) edge (ai)
		(ci1) edge (ai')
		(ci1) edge (bi)
		
		(x1) edge[white] node[black]{$\cdots\cdots$} (x2)
		(x3) edge[white] node[black]{$\cdots\cdots$} (x4);
		\end{tikzpicture}
		\caption*{(b)}
	\end{minipage}
	\caption{(a) The constraint graph $F$; each thick edge corresponds to $32$ edges, one incident to each vertex of $I_{32}$. (b) The graph $G_m$ showing the edges in $E_{i,i+1}$ and the connections between the vertices $a_1,a_1',b_1,c_1$.}
	\label{fig:lb_id}
\end{figure}
\begin{lemma}
	\label{lemma:ident-example}
	$F$ is identifiable.
\end{lemma}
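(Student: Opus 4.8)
The plan is to apply Lemma~\ref{lemma:id_thm}. Since $F$ has no self-loops, it suffices to show that for every edge $\edge{i}{j}\in E(F)$ there is an $F$-coloring of the supergraph $G_{ij}$ in which the new copies $i'$ and $j'$ receive incompatible colors. My strategy is uniform: in every case I aim to color $i'$ and $j'$ with the colors $1$ and $4$, which are incompatible since $\edge{1}{4}\notin E(F)$. Writing $N(v)$ for the set of neighbors of $v$ in $F$, recall that in $G_{ij}$ the vertex $i'$ has exactly the neighbors $N(i)$, the vertex $j'$ has exactly the neighbors $N(j)$, and $i',j'$ are non-adjacent. Hence the task reduces to finding a single valid $F$-coloring $\sigma$ of $F$ itself with $\sigma(N(i))\subseteq N(1)=\{2\}\cup I_{32}$ and $\sigma(N(j))\subseteq N(4)=\{3\}\cup I_{32}$; one then extends $\sigma$ to $G_{ij}$ by $\sigma(i')=1$ and $\sigma(j')=4$. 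Note that $j\in N(i)$ and $i\in N(j)$, so these two containments already force $\sigma(j)\in\{2\}\cup I_{32}$ and $\sigma(i)\in\{3\}\cup I_{32}$, which is exactly what makes the extension proper on the edges joining the copies to the originals.

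Using the symmetries of $F$ (arbitrary permutations of $I_{32}$, together with the reflection $1\!\leftrightarrow\!4,\ 2\!\leftrightarrow\!3$ of the path), and the fact that $I_{32}$ is independent, only two kinds of edges must be handled. \emph{Path--path edges} ($i,j\in\{1,2,3,4\}$, e.g.\ $\edge{1}{2}$ or $\edge{2}{3}$): I color each vertex of $I_{32}$ with a distinct color of $I_{32}$. Every path vertex is then adjacent to $I_{32}$-colored vertices, so (two $I_{32}$-colors being incompatible) it must take a color in $\{1,2,3,4\}$; I pick the path coloring so that every path-neighbor of $i$ gets color $2$ and every path-neighbor of $j$ gets color $3$ (for instance $2,3,2,3$ on $1,2,3,4$ for the edge $\edge{2}{3}$, and $3,2,3,4$ for $\edge{1}{2}$). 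The $I_{32}$-colored neighbors lie in both $\{2\}\cup I_{32}$ and $\{3\}\cup I_{32}$, so the required containments hold and $\sigma(i')=1,\ \sigma(j')=4$ is a valid extension with $1,4$ incompatible.

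\emph{Independent-set--path edges} ($i\in I_{32}$, $j\in\{1,2,3,4\}$): now $N(i)$ is all four path vertices, which must therefore be colored from $\{2\}\cup I_{32}$, while $N(j)\supseteq I_{32}$, forcing all of $I_{32}$ to be colored from $\{3\}\cup I_{32}$. I color every vertex of $I_{32}$ with color $3$ and color the path with an alternating pattern that uses color $2$ and $I_{32}$-colors (e.g.\ $2,s,2,s'$ with $s,s'\in I_{32}$), arranging it so that the path-neighbor(s) of $j$ receive $I_{32}$-colors, which lie in $\{3\}\cup I_{32}$. A direct check confirms this is a proper $F$-coloring: each path vertex is compatible with color $3$ on $I_{32}$ and with its path-neighbor, and then $\sigma(i')=1,\ \sigma(j')=4$ extends it properly.

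The main obstacle is precisely these independent-set--path edges. The naive attempts—the identity coloring, a monochromatic coloring, or simply coloring all of $I_{32}$ with $I_{32}$-colors—all fail there because of a type-forcing phenomenon: coloring $I_{32}$ with $I_{32}$-colors forces every path vertex into $\{1,2,3,4\}$, while any $I_{32}$-color on a path vertex forces all of $I_{32}$ into $\{1,2,3,4\}$, and consequently the copy of the $I_{32}$-vertex and the copy of the path vertex are driven into opposite, hence \emph{compatible}, color types. Overcoming this is the crux, and it relies on two ideas: choosing the incompatible target for the copies to be the two path-endpoints $\{1,4\}$ (rather than a repeated color), and breaking the forcing with the mixed/alternating coloring above. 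Verifying properness in this case is the only place where some care is needed, but it is a finite and routine check.
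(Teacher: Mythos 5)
Your proof is correct, but it takes a genuinely different route from the paper's. You invoke the characterization of Lemma~\ref{lemma:id_thm} (legitimate, since $F$ has no self-loops and $F\neq K_q^+$), which reduces identifiability to a finite check: for each edge $\edge{i}{j}$ of $F$, exhibit one $F$-coloring of the fixed gadget $G_{ij}$ giving $i'$ and $j'$ incompatible colors. Your reduction to finding a homomorphism $\sigma\colon F\to F$ with $\sigma(N(i))\subseteq N(1)$ and $\sigma(N(j))\subseteq N(4)$ is sound, and the explicit colorings check out in both cases — in particular the alternating $2$/$I_{32}$ pattern on the path does break the type-forcing obstruction you correctly identify for the independent-set--path edges. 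The paper instead argues directly via Lemma~\ref{lemma:id_lem}: it observes that $F$ is tripartite with the unique tripartition $\{\{1,3\},\{2,4\},I_{32}\}$, so every $F$-colorable graph $G$ is tripartite, and then for any two nonadjacent $u,v$ in $G$ it writes down a coloring giving them incompatible colors (the same color if they lie in the same part, the colors $1$ and $4$ otherwise). The trade-off: the paper's argument needs the structural insight about tripartitions but handles all target graphs $G$ in one stroke and is self-contained; yours leans on the already-proved characterization theorem and turns the problem into a small, mechanical verification on gadgets of bounded size, at the cost of a slightly fussier case analysis (and a minor notational collision between the duplicated vertex $s\in I_{32}$ and the colors $s,s'$ in your alternating pattern, which you should rename).
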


\begin{proof}
	Let $G=(V,E)$ be an $F$-colorable graph. 
	Since $F$ is tripartite with a unique tripartition $\{\{1,3\}, \{2,4\}, I_{32}\}$, then so is $G$. Let $\{V_1,V_2,V_3\}$ be a tripartition of $G$
	and let $u,v$ be any two nonadjacent vertices of $G$. 
	We show that there is always an $F$-coloring of $G$
	where $u$ and $v$ receive incompatible colors. The result then follows from Lemma \ref{lemma:id_lem}.
	
	If $u$ and $v$ belong to the same $V_i$, 
	then by coloring all the vertices $V_1$ with color $1$, 
	all the vertices of $V_2$ with color $2$ 
	and all the vertices in $V_3$ with any color $c$ from $I_{32}$,
	we have a coloring of $G$ where $u$ and $v$ receive the same color.
	Since $F$ has no self-loops $u$ and $v$ are assigned incompatible colors.   
	
	If $u$ and $v$ belong to different $V_i$'s,
	suppose without loss of generality that $u \in V_1$ and $v \in V_2$.
	Consider the following $F$-coloring $\sS$ of $G$ where $c$ is any color from $I_{32}$: 
	\begin{equation*}
	\sS(w)=
	\left\{
	\begin{aligned}
	1\quad & \textrm{if}~w=u;\\
	4\quad &\textrm{if}~w=v;\\
	3\quad &\textrm{if}~w\in V_1\bl\{u\};\\
	2\quad &\textrm{if}~w\in V_2\bl\{v\};\\
	c\quad &\textrm{if}~w\in V_3.
	\end{aligned}
	\right.
	\end{equation*}
	In $\sS$, $u$ and $v$ receive the incompatible colors $1$ and $4$. Thus, we have shown that it is always possible to color nonadjacent vertices of $G$ with incompatible colors and the result follows from Lemma \ref{lemma:id_lem}.
\end{proof}
\noindent
We define next a family $\mathcal{G}_n$
of $n$-vertex graphs
of maximum degree $7$ such that every graph in the family have almost the same set of $F$-colorings. 
Every graph in $\mathcal{G}_n$ will be a supergraph of the graph $G_m =(V_m,E_m)$, whose vertex set is given by 
$$V_m = \{a_i,a'_i,b_i,c_i:1\leq i\leq m\}$$ 
with $m = n/4$.
(For clarity we assume first that $4$ divides $n$; we later explain how to adjust the definition of $\mathcal{G}_n$ when $n$ is not divisible by $4$.) 
For each $2 \le i \le m$, $\{a_i,a'_i,b_i,c_i\}$ is an independent set.
The edges with both endpoints in  $\{a_1,a'_1,b_1,c_1\}$ are: 
$$E_{1,1} := \{\{a_1,b_1\},\, \{a_1,c_1\},\,\{a'_1,b_1\},\, \{a'_1,c_1\},\{b_1,c_1\}\}.$$
The edges 
between the independent sets
$\{a_i,a'_i,b_i,c_i\}$ and~$\{a_{i+1},a'_{i+1},b_{i+1},c_{i+1}\}$ for $1\leq i < m$ are:
\begin{align*}
E_{i,i+1} :=
\big\{& {} \{a_i,b_{i+1}\},\{a_i,c_{i+1}\},\{a_i',b_{i+1}\},\{a_i',c_{i+1}\},\{b_i,a_{i+1}\},\\
&\{b_i,a'_{i+1}\},\{b_i,c_{i+1}\},
\{c_i,a_{i+1}\},\{c_i,a'_{i+1}\},\{c_i,b_{i+1}\}\}.
\end{align*}
We then let
$ E_m = \left( \bigcup_{i=1}^{m-1} E_{i,i+1} \right) \cup E_{1,1}$; see Figure \ref{fig:lb_id}(b). 

Now, let 
\begin{equation}\label{eq:M}
M = \{\{a_i,b_{i+2}\}:i=1,2,\ldots,m-2\}
\end{equation}
and let 
$E^{(1)},E^{(2)},\dots,E^{(t)}$ be all the possible subsets of $M$; hence $t=2^{m-2}$.
We define $\mathcal{G}_n$ as:
$$\mathcal{G}_n = \{G^{(1)}=(V_m,E_m\cup E^{(1)}),G^{(2)}=(V_m,E_m \cup E^{(2)}),\dots,G^{(t)}=(V_m,E_m\cup E^{(t)})\}.$$
Since the maximum degree of $G_m$ is $6$, every graph in $\mathcal{G}_n$ has maximum degree 7. Moreover, if we let $A_{m}=\{a_i,a'_i:1\leq i\leq m\}$, $B_{m}=\{b_i:1\leq i\leq m\}$ and $C_{m}=\{c_i:1\leq i\leq m\}$, then it is clear from our construction that $(A_{m},B_{m},C_{m})$ is a tripartition for every graph in $\mathcal{G}_n$. An immediate consequence of this is that every graph in $\mathcal{G}_n$ has an $F$-coloring that assigns, for example, color $2$ to every vertex in $A_m$, color $3$ to every vertex in $B_m$ and a color from the independent set $I_{32}$ to every vertex in $C_m$. Therefore, all graphs in the family $\mathcal{G}_n$ are $F$-colorable.


The next theorem shows that structure learning is hard for $F$ and $\mathcal{G}_n$.
\begin{theorem}
	\label{thm:ident-lower-bound}
	Let $m \in \N^+$ such that $m \ge 2$ and let $n=4m$.
	Then, any structure learning algorithm 
	for the constraint graph $F$ and the family of graphs $\mathcal{G}_n$ 
	that succeeds with probability at least $2^{-(m-3)}$ requires at least 
	$2^{m+1}$ samples.
\end{theorem}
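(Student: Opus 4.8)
The plan is to show that the graphs in $\mathcal{G}_n$ are almost indistinguishable by proving that any two of them share all but an exponentially small fraction of their $F$-colorings; then a standard information-theoretic (coupling) argument forces any learner to see exponentially many samples before it can even witness a distinguishing coloring. First I would understand the role of the optional edges in $M = \{\{a_i,b_{i+2}\}\}$. The key structural claim is that in a \emph{typical} $F$-coloring of the base graph $G_m$, the constraint imposed by an edge $\{a_i,b_{i+2}\}$ is automatically satisfied, so adding or removing it changes nothing; only an exponentially small fraction of colorings actually use the forbidden pair of colors on $(a_i,b_{i+2})$. The heart of the argument is therefore a counting estimate: I would fix the partition $(A_m,B_m,C_m)$, analyze how the ``layered'' structure $E_{i,i+1}$ forces colors to propagate down the chain of independent sets, and show that the event ``$a_i$ and $b_{i+2}$ receive an $F$-incompatible color pair'' has probability at most some $\rho = 2^{-\Theta(1)}$ per edge under $\pi_{G_m}$, hence the symmetric difference $\Omega_{G^{(s)}} \triangle \Omega_{G^{(s')}}$ is an exponentially small fraction of $\Omega_{G_m}$.

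Concretely, I would compute or lower-bound the total number of $F$-colorings of $G_m$ by exhibiting a large block of colorings—e.g., all colorings respecting the tripartition with the $C_m$ vertices free to take any of the $32$ colors in $I_{32}$—giving a count that is at least $32^{|C_m|}$ up to polynomial factors, so $|\Omega_{G_m}| \ge 2^{\Omega(m)}$ with a large base. Against this I would bound the number of colorings in which a given forbidden edge $\{a_i,b_{i+2}\}$ is ``active'' (the two endpoints take an incompatible pair), which loses at least a constant factor relative to the free count at that coordinate. Summing over the at most $m-2$ edges in $M$ via a union bound, the fraction of colorings of $G_m$ that distinguish \emph{any} pair of graphs in $\mathcal{G}_n$ is at most $(m-2)\rho$, and more importantly for any \emph{fixed} target graph $G^{(s)}$, the fraction of colorings on which $G^{(s)}$ differs from $G_m$ itself is tiny. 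This yields $\TV{\pi_{G^{(s)}} - \pi_{G^{(s')}}} \le \delta$ for an exponentially small $\delta$, and the explicit constant in the statement ($2^{-(m-3)}$ success probability requiring $2^{m+1}$ samples) tells me the bound I must target is roughly $\delta = O(2^{-m})$.

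The final step is the information-theoretic reduction. With $t = 2^{m-2}$ graphs and any pair at total-variation distance at most $\delta$, I would argue by a coupling/indistinguishability argument: there is a coupling of the $L$-sample product measures $(\pi_{G^{(s)}})^{\otimes L}$ and $(\pi_{G^{(s')}})^{\otimes L}$ that agrees with probability at least $1 - L\delta$ (subadditivity of TV distance over product measures). If $L\delta$ is small, no algorithm can distinguish $G^{(s)}$ from $G^{(s')}$ with advantage better than $L\delta$, so to succeed with probability exceeding $2^{-(m-3)}$ we need $L\delta = \Omega(2^{-(m-3)})$, i.e.\ $L = \Omega(2^{-(m-3)}/\delta) = \Omega(2^{m+1})$ once $\delta$ is pinned to the right exponentially small scale. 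I would phrase this cleanly as: any two samples look identical with overwhelming probability, so the number of samples needed to ever observe a distinguishing coloring is the reciprocal of the per-sample distinguishing probability.

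The main obstacle I expect is the counting estimate in the second paragraph—precisely controlling the fraction of $F$-colorings of $G_m$ on which a forbidden edge $\{a_i,b_{i+2}\}$ is active. The layered gadget $E_{i,i+1}$ is engineered so that colorings propagate through the chain in a constrained way, and I must verify both a matching lower bound on $|\Omega_{G_m}|$ and an upper bound on the ``active'' colorings that together give an exponentially small ratio with the \emph{correct} constant in the exponent; getting the per-edge loss factor and then the union bound to land at $\delta = O(2^{-m})$, rather than a weaker polynomial or sub-exponential factor, is the delicate part. I would therefore spend most effort decomposing $\Omega_{G_m}$ according to the colors assigned along the chain and carefully tracking which local color choices at layer $i$ force $a_i,b_{i+2}$ to be compatible for free.
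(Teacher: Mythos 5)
Your high-level skeleton matches the paper's: show that all graphs in $\mathcal{G}_n$ have nearly the same set of $F$-colorings, then convert this into a sample-complexity lower bound via an information-theoretic argument over the $2^{m-2}$ hypotheses. Your final reduction (TV distance between product measures, advantage at most $L\delta$, so $L\gtrsim(p-1/t)/\delta$) is essentially equivalent to the paper's Fact~\ref{fact:lb_eta}, which exploits the nesting $\OO_{G^{(t)}}\subseteq\OO_{G^{(s)}}\subseteq\OO_{G^{(1)}}$ and bounds everything by $\eta=1-|\OO_{G^{(t)}}|/|\OO_{G^{(1)}}|$; that part of your plan is sound.

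The genuine gap is in the counting estimate, which you yourself flag as ``the delicate part'' without resolving it, and the mechanism you propose cannot deliver the required bound. You suggest bounding, per edge $\{a_i,b_{i+2}\}\in M$, the probability $\rho=2^{-\Theta(1)}$ that its endpoints receive incompatible colors, and then union-bounding over the $m-2$ edges to get $(m-2)\rho$. A constant per-edge loss summed over $m-2$ edges gives something that grows with $m$, not $\delta=O(2^{-m})$; and no purely local analysis of a single edge will give an exponentially small per-edge probability, because the events are not local. The actual mechanism is global: by Fact~\ref{fact:tripartition} the graph $G_m$ has a \emph{unique} tripartition $(A_m,B_m,C_m)$ with $|A_m|=2m$ twice the size of the other parts, and $F$ is tripartite with parts $\{1,3\},\{2,4\},I_{32}$ where $|I_{32}|=32$ and every color of $I_{32}$ is compatible with everything. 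A direct count ($K\cdot 32^{2m}$ versus at most $4\cdot 32^m\cdot 8^m$) shows that in all but a $2^{-(2m-1)}$ fraction of $F$-colorings of $G_m$, the large part $A_m$ receives colors from $I_{32}$ --- and on that single event \emph{every} edge of $M$ is simultaneously non-binding, since each $a_i$ then carries a universally compatible color. Note also that your proposed ``large block'' of colorings assigns the $I_{32}$ colors to $C_m$; that is in fact the exponentially rare case, which suggests the intended role of the size-$32$ independent set and the doubled part $A_m$ was not identified. Without this global tripartition argument (or an equivalent one), the estimate $\eta\le 2^{-(2m-1)}$, and hence the stated bound $L\ge 2^{m+1}$, does not follow.
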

\noindent
Before proving this theorem we state two key facts that will be used in its proof.
In particular, Fact~\ref{fact:tripartition} shows that actually $(A_{m},B_{m},C_{m})$ is the unique tripartition of $G_m$, and
Fact \ref{fact:lb_eta} gives a lower bound for the number of samples required 
by a structure learning algorithm
to guarantee a prescribed success probability.

\begin{fact}
	\label{fact:tripartition}
	For any $m \in \N^+$ the graph $G_m$ is tripartite. Moreover, it has a unique 
	tripartition $(A_{m},B_{m},C_{m})$, where  $A_{m}=\{a_i,a'_i:1\leq i\leq m\}$, $B_{m}=\{b_i:1\leq i\leq m\}$ and $C_{m}=\{c_i:1\leq i\leq m\}$. 
\end{fact}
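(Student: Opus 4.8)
The statement to prove is Fact~\ref{fact:tripartition}: that $G_m$ is tripartite with \emph{unique} tripartition $(A_m, B_m, C_m)$.

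\medskip

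The plan is to verify tripartiteness first and then prove uniqueness by propagating forced partition-class assignments along the chain of gadgets. First I would confirm that $(A_m, B_m, C_m)$ is a valid tripartition: each of $A_m, B_m, C_m$ must be an independent set in $G_m$. Since each layer $\{a_i, a_i', b_i, c_i\}$ is independent and all edges in $E_m$ run either between $\{a_1,a_1',b_1,c_1\}$ internally ($E_{1,1}$) or between consecutive layers ($E_{i,i+1}$), I would check that no edge of $E_m$ joins two vertices of the same class. Inspecting $E_{1,1}$ and the list $E_{i,i+1}$, every listed edge pairs an $A$-type vertex ($a$ or $a'$) with a $B$- or $C$-type vertex, or pairs $b$ with $c$; so indeed $A_m$, $B_m$, $C_m$ are each independent, establishing that $G_m$ is tripartite.

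\medskip

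For uniqueness, I would argue that the partition class of each vertex is forced once we fix a base case, using the rich adjacency structure of the gadgets. The key observation is that in each layer the four vertices $a_i, a_i', b_i, c_i$ must all lie in \emph{different} relationships dictated by their common and distinct neighbors: because $b_i$ and $c_i$ are adjacent (via $E_{1,1}$ for $i=1$, and forced by cross-layer edges for $i>1$) they must lie in different classes, and both $a_i, a_i'$ are adjacent to the same set of vertices that $b_i$ and $c_i$ are adjacent to. I would first pin down the base layer using $E_{1,1}$, which contains the triangle $\{a_1, b_1, c_1\}$ (edges $\{a_1,b_1\}, \{a_1,c_1\}, \{b_1,c_1\}$), forcing $a_1, b_1, c_1$ into three distinct classes; and since $a_1'$ is adjacent to both $b_1$ and $c_1$ but not to $a_1$, it is forced into the same class as $a_1$. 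This fixes (up to relabeling the three classes, which is the permissible ambiguity of a tripartition) the assignment of layer $1$. Then I would induct: assuming layer $i$ is fully determined, the edges $E_{i,i+1}$ force layer $i+1$. Specifically, $a_{i+1}$ and $a_{i+1}'$ are both adjacent to $b_i$ and $c_i$ (which lie in two distinct classes) hence must lie in the third class together; and $b_{i+1}, c_{i+1}$ are each adjacent to $a_i$ (or $a_i'$) and to one of $b_i/c_i$, which pins them down and forces $b_{i+1}, c_{i+1}$ into the two classes distinct from that of $a_{i+1}, a_{i+1}'$. I would check that these constraints are not only consistent but leave no freedom, yielding exactly $(A_m, B_m, C_m)$.

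\medskip

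The main obstacle I anticipate is the bookkeeping in the inductive step: one must verify that the cross-layer edges $E_{i,i+1}$ genuinely force \emph{each} of the four vertices of layer $i+1$ into a unique class, rather than merely constraining them up to a swap. The delicate points are (i) showing $b_{i+1}$ and $c_{i+1}$ cannot be interchanged — this should follow from the asymmetry in their adjacencies to $a_i, a_i', b_i, c_i$ in the edge list $E_{i,i+1}$ (for instance $b_i$ is adjacent to $c_{i+1}$ but the adjacency pattern distinguishes $b_{i+1}$ from $c_{i+1}$ via their respective neighbors among $\{a_i, a_i'\}$ versus $\{b_i, c_i\}$) — and (ii) confirming that $a_{i+1}$ and $a_{i+1}'$, being nonadjacent to each other and sharing neighbors, are indeed forced into the \emph{same} class. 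Once the base case rigidly fixes the three class labels via the triangle in $E_{1,1}$, the remaining ambiguity is only the global relabeling inherent to any tripartition, so uniqueness in the stated sense follows by induction on $i$.
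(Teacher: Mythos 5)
Your proof is correct and takes essentially the same route as the paper: an induction on the layers in which each vertex of layer $i+1$ is adjacent to vertices in two distinct classes of layer $i$ and is therefore forced into the third (the paper's base case is the single observation that $G_1$ has the unique tripartition $(\{a_1,a_1'\},b_1,c_1)$). One small inaccuracy worth fixing: for $i>1$ the vertices $b_i$ and $c_i$ are \emph{not} adjacent, since each layer $\{a_i,a_i',b_i,c_i\}$ with $i\ge 2$ is an independent set; they lie in different classes by the inductive hypothesis alone, which is all your argument actually uses.
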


\begin{fact}
	\label{fact:lb_eta}
	Let $H$ be an arbitrary constraint graph.
	Suppose $\mathcal{\hat{G}}=\{\hat{G}_1,\hat{G}_2,\ldots,\hat{G}_r\}$ is a family of $r$ distinct $H$-colorable graphs such that $H$ is identifiable with respect to $\mathcal{\hat{G}}$.
	Assume also that $\hat{G}_1$ is a subgraph of $\hat{G}_i$ for all $2\leq i\leq r$ and that $\hat{G}_r$ is a supergraph of $\hat{G}_i$ for all $1\leq i\leq r-1$. Let
	\begin{equation*}
	\eta = 1-\frac{|\OO_{\hat{G}_r}|}{|\OO_{\hat{G}_1}|}.
	\end{equation*}
	If there exists a structure learning algorithm for $H$ and $\mathcal{\hat{G}}$ such that for any $G^*\in \mathcal{\hat{G}}$, given $L$ independent samples from $\pi_{G^*}^H$ as input, it outputs $G^*$ with probability at least $1/r+\aA$ with $\alpha > 0$, then $L\geq \aA/\eta$. 
\end{fact}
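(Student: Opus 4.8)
The plan is to prove this information-theoretic lower bound by exploiting the observation that the instances in $\hat{\mathcal{G}}$ become statistically indistinguishable once all the samples land in the common coloring set $\OO_{\hat{G}_r}$. First I would record the structural consequence of the nesting hypothesis: since $\hat{G}_1$ is a subgraph of every $\hat{G}_i$ and $\hat{G}_r$ is a supergraph of every $\hat{G}_i$, and adding edges only deletes $H$-colorings, we get $\OO_{\hat{G}_r} \subseteq \OO_{\hat{G}_i} \subseteq \OO_{\hat{G}_1}$ for all $i$. In particular every coloring in $\OO_{\hat{G}_r}$ is a valid $H$-coloring of every graph in the family, and $|\OO_{\hat{G}_i}| \le |\OO_{\hat{G}_1}|$.

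The key step is a conditional-indistinguishability claim. Because $\pi_{\hat{G}_i}^H$ is uniform on $\OO_{\hat{G}_i}$ and $\OO_{\hat{G}_r} \subseteq \OO_{\hat{G}_i}$, conditioning a single sample on the event that it lies in $\OO_{\hat{G}_r}$ produces the uniform distribution on $\OO_{\hat{G}_r}$, independent of $i$. Hence, letting $E$ be the event that all $L$ samples fall in $\OO_{\hat{G}_r}$, the conditional law of the sample sequence given $E$ is uniform on $\OO_{\hat{G}_r}^L$ for every generating graph $\hat{G}_i$. Consequently, if $p_i$ denotes the probability that the algorithm outputs $\hat{G}_i$ when fed a sample sequence drawn uniformly from $\OO_{\hat{G}_r}^L$, then $\Pr[\mathcal{A} = \hat{G}_i \mid E] = p_i$ regardless of which $\hat{G}_i$ actually generated the data, and $\sum_{i=1}^r p_i \le 1$ since the events $\{\mathcal{A} = \hat{G}_i\}$ are disjoint (the graphs, and by identifiability their coloring sets, are distinct).

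Next I would control the probability of escaping $E$. Under $\pi_{\hat{G}_i}^H$ a single sample misses $\OO_{\hat{G}_r}$ with probability $1 - |\OO_{\hat{G}_r}|/|\OO_{\hat{G}_i}| \le 1 - |\OO_{\hat{G}_r}|/|\OO_{\hat{G}_1}| = \eta$, using $|\OO_{\hat{G}_i}| \le |\OO_{\hat{G}_1}|$; a union bound over the $L$ samples then gives $\Pr_{\pi_{\hat{G}_i}}[E^c] \le L\eta$. Splitting the success probability over $E$ and $E^c$ and using $p_i \Pr[E] \le p_i$ and $\Pr[\mathcal{A}=\hat{G}_i\mid E^c]\le 1$,
\[
\frac{1}{r} + \alpha \;\le\; \Pr_{\pi_{\hat{G}_i}}[\mathcal{A} = \hat{G}_i] \;\le\; p_i\,\Pr[E] + \Pr[E^c] \;\le\; p_i + L\eta .
\]
Summing this inequality over $i = 1,\dots,r$ and invoking $\sum_i p_i \le 1$ yields $1 + r\alpha \le 1 + rL\eta$, and therefore $L \ge \alpha/\eta$, as claimed.

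The main obstacle is getting the conditional-indistinguishability step exactly right: the point is that it is the restriction to the \emph{smallest} common set $\OO_{\hat{G}_r}$ that forces the conditional output probabilities $p_i$ to be genuinely independent of the generating graph (so that they may be summed against $\sum_i p_i \le 1$), rather than any pairwise total-variation comparison among the $\pi_{\hat{G}_i}$. Once this is in place, the union bound controlling $\Pr[E^c]$ and the averaging over the $r$ instances are routine.
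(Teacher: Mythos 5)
Your proposal is correct and follows essentially the same route as the paper's proof: the paper splits the sum over sample sequences into those lying in $\OO_{\hat{G}_r}^L$ (where it bounds $\Pr_{\pi_{G^*}}[\Gamma=x]$ by $1/|\OO_{\hat{G}_r}|^L$, which is exactly your conditional-uniformity observation) and the rest (bounded by $1-(1-\eta)^L\le L\eta$, matching your union bound), then sums over the $r$ instances using $\sum_{G^*}\Pr[\mathcal{A}(x)=G^*]=1$. The only cosmetic difference is that you phrase the first term via conditioning on the event $E$ rather than as an explicit sum, and your aside that disjointness of the output events requires identifiability is unnecessary (it only needs the graphs to be distinct), but neither affects correctness.
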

\noindent
We observe that when $\alpha = 0$, the structure learning algorithm 
that outputs a graph from $\mathcal{\hat{G}}$ uniformly at random
has success probability $1/r$
without requiring any samples.
We are now ready to prove Theorem~\ref{thm:ident-lower-bound}.

\newenvironment{myproof-thm:ident-lower-bound}{\paragraph{\textbf{Proof of Theorem \ref{thm:ident-lower-bound}}}}{}
\begin{myproof-thm:ident-lower-bound}
	Let $m \ge 2$, $n=4m$ and for ease of notation let  $\mathcal{G}=\mathcal{G}_n$. Let $G^{(1)}=G_m$ and $G^{(t)}=G_{m}\cup M$ where $M$ is defined in (\ref{eq:M}) and $t=2^{m-2}$. Hence,
	the graph
	$G^{(1)}$ (resp., $G^{(t)}$) is a subgraph (resp., supergraph) of every other graph in $\mathcal{G}$.
	Moreover, all graphs in $\mathcal{G}$ are distinct
	and $F$ is identifiable by Lemma~\ref{lemma:ident-example}. 
	Hence, to apply Fact~\ref{fact:lb_eta} all we need is a lower bound for $\eta =1-|\OO_{G^{(t)}}|/|\OO_{G^{(1)}}|$.
	
	By Fact \ref{fact:tripartition}, $G_m$ has a unique tripartition $(A_m,B_m,C_m)$. 
	Since $\{\{1,3\},\{2,4\},I_{32}\}$ is a tripartition for $F$, every $F$-coloring of $G_m$ induces the same tripartition of $G_m$. 
	That is to say, in every $F$-coloring of $G_m$
	one of the sets $A_m$, $B_m$ or $C_m$ is colored with colors $\{1,3\}$, another is colored with $\{2,4\}$ and the third one is colored using colors from the independent set $I_{32}$ of $F$.
	Then, the number of $F$-colorings of $G_m$ such that $A_{m}$ is colored with colors from $I_{32}$ is $K \cdot 32^{2m}$, 
	where $K$ is the number of $F$-colorings of $B_m \cup C_m$ given  a fixed $F$-coloring of $A_m$ that only uses colors from $I_{32}$.
	Observe that $K$ is the same for every $F$-coloring of $A_m$, and $K\geq 2$ since we can always color $B_m$ with color $2$ and $C_m$ with $3$, or color $B_m$ with $3$ and $C_m$ with $2$.
	On the other hand, the number of $F$-colorings where $B_{m}$ 
	receives colors from $I_{32}$ is at most $2\cdot 32^m \cdot 2^m \cdot 2^{2m} = 2\cdot 32^m \cdot 8^m$, and similarly for $C_{m}$. Hence, the probability that in a uniformly random $F$-coloring of $G_m$, $A_{m}$ is colored with colors from the independent set $I_{32}$ is at least
	\[\frac{K\cdot 32^{2m}}{K\cdot 32^{2m}+4\cdot 32^m \cdot 8^m} = 1 - \frac{4 \cdot 8^{m}}{K\cdot 32^{m}+4\cdot 8^m } \geq 1 - \frac{1}{2^{2m-1}}.
	\]
	
	Let $\sS$ be an $F$-coloring of $G_m$ such that $A_{m}$ is colored with colors from $I_{32}$. Since colors from $I_{32}$ are compatible with any other color, the pair of colors $\sS(a_i),\sS(b_{i+2})$ are compatible for any $1\leq i\leq m-2$. Therefore, $\sS$ is also a valid $F$-coloring of $G^{(t)}$ and thus $\sS\in\OO_{G^{(t)}}$. We then deduce that
	\begin{equation*}
	1-\eta = \frac{|\OO_{G^{(t)}}|}{|\OO_{G^{(1)}}|}  \geq 1-\frac{1}{2^{2m-1}}.
	\end{equation*}
	Since $|\mathcal{G}|=2^{m-2}$,
	it follows from Fact~\ref{fact:lb_eta} that the number of samples required to learn a graph in $\mathcal{G}$ with success probability $2^{-(m-3)}$ is at least
	\begin{equation*}
	L \geq \frac{2^{-(m-3)}-2^{-(m-2)}}{\eta} \geq 2^{m+1}.\tag*{$\square$}
	\end{equation*}
\end{myproof-thm:ident-lower-bound}

\noindent
To finish the proof of Theorem \ref{thm:ident-lower-bound} we provide the proofs of Facts~\ref{fact:tripartition} and \ref{fact:lb_eta}.

\newenvironment{myproof-fact:tripartition}{\paragraph{\textbf{Proof of Fact \ref{fact:tripartition}}}}{\hfill$\square$}
\begin{myproof-fact:tripartition}
	We prove this by induction. 
	$G_1$ has exactly one tripartition $(\{a_1,a_1'\},b_1,c_1)$. 
	Suppose inductively that
	$(A_{m-1},B_{m-1},C_{m-1})$
	is the only tripartition of $G_{m-1}$. 
	Since $\{a_i,b_{i-1}\}$, $\{a_i,c_{i-1}\}\in E_m$, $a_m$ belongs to $A_{m}$ in any tripartition of $G_m$. Similar statements hold for $a'_m$, $b_m$ and $c_m$ as well. Therefore, $(A_{m},B_{m},C_{m})$ is the unique tripartition of $G_m$.
\end{myproof-fact:tripartition}

\newenvironment{myproof-fact:lb_eta}{\paragraph{\textbf{Proof of Fact \ref{fact:lb_eta}}}}{\hfill$\square$}
\begin{myproof-fact:lb_eta}
	Let $\mathcal{A}$ be any (possibly randomized) structure learning algorithm that given $L$ independent samples $\Gamma = (\sS^{(1)},\ldots,\sS^{(L)}) \in \OO_{G^*}^L$ from an unknown distribution $\pi_{G^*}$ for some $G^*\in\mathcal{\hat{G}}$, outputs a graph $\mathcal{A}(\Gamma)$ in $\mathcal{\hat{G}}$. 
	For any $G^*\in\hat{\mathcal{G}}$, the probability that $\mathcal{A}$ learns the graph correctly given $L$ independent samples from $\pi_{G^*}$ is
	\begin{align*}
	\Pr [ \mathcal{A}(\Gamma) = G^* ]
	={}\sum_{x \in \OO_{G^*}^L} {\Pr}_{\pi_{G^*}} [\Gamma=x]
	\Pr[\mathcal{A}(x) = G^*].
	\end{align*}
	Since $\hat{G}_r$ is a supergraph of $G^*$, we have $\OO_{\hat{G}_r}\subseteq \OO_{G^*}$. Let $\mathcal{T}$ be the set of all sample sequences $\sS^{(1)},\ldots,\sS^{(L)}$
	such that $\sS^{(i)} \not\in \OO_{\hat{G}_r}$ for at least one $i$; namely, $\mathcal{T} = \OO_{G^*}^L \bl \OO_{\hat{G}_r}^L$. 
	Note that $|\mathcal{T}| = |\OO_{G^*}|^L - |\OO_{\hat{G}_r}|^L$.
	Then,
	\begin{align*}
	{\Pr} [ A(\Gamma) = G^* ]
	=& \sum_{x \in \OO_{\hat{G}_r}^L} \frac{1}{|\OO_{G^*}|^L} \cdot \Pr[\mathcal{A}(x) = G^*] + \sum_{x \in \mathcal{T}} \frac{1}{|\OO_{G^*}|^L} \cdot \Pr[\mathcal{A}(x) = G^*]\\
	\leq& \frac{1}{|\OO_{\hat{G}_r}|^L} \sum_{x \in \OO_{\hat{G}_r}^L} \Pr[\mathcal{A}(x) = G^*] +  \frac{|\OO_{G^*}|^L - |\OO_{\hat{G}_r}|^L}{|\OO_{G^*}|^L}.
	\end{align*}
	Since $\hat{G}_1$ is a subgraph of $G^*$, we have $\OO_{G^*} \subseteq \OO_{\hat{G}_1}$. Thus,
	\begin{equation*}
	\frac{|\OO_{G^*}|^L - |\OO_{\hat{G}_r}|^L}{|\OO_{G^*}|^L} = 1 - \frac{|\OO_{\hat{G}_r}|^L}{|\OO_{G^*}|^L} \leq 1 - \frac{|\OO_{\hat{G}_r}|^L}{|\OO_{\hat{G}_1}|^L} = 1-(1-\eta)^L \leq L\eta. 
	\end{equation*}
	Suppose the structure learning algorithm $\mathcal{A}$ has success probability at least $1/r+\alpha$ for any $G^*\in\hat{\mathcal{G}}$; that is,
	\begin{equation*}
	{\Pr} [ \mathcal{A}(\Gamma) = G^* ] \geq \frac{1}{r}+\aA, \quad \forall G^*\in\hat{\mathcal{G}}.
	\end{equation*}
	Then,
	\begin{equation*}
	\frac1r+\aA \leq \frac{1}{|\OO_{\hat{G}_r}|^L} \sum_{x \in \OO_{\hat{G}_r}^L} \Pr[\mathcal{A}(x) = G^*] + L\eta, \quad \forall G^*\in\hat{\mathcal{G}}.
	\end{equation*}
	Since $\sum_{G^*\in\hat{\mathcal{G}}} \pr[\mathcal{A}(x) = G^*] = 1$ for any fixed sample sequence $x$, summing up over $\hat{\mathcal{G}}$ we get
	\begin{gather*}
	1+r\aA \leq \frac{1}{|\OO_{\hat{G}_r}|^L} \sum_{x \in \OO_{\hat{G}_r}^L} \sum_{G^*\in\hat{\mathcal{G}}} \Pr[\mathcal{A}(x) = G^*] + rL\eta = 1+rL\eta.
	\end{gather*}
	Hence, $L\geq\aA/\eta$ as claimed.
\end{myproof-fact:lb_eta}

\subsection{\texorpdfstring{Proof of Theorem \ref{thm:neg-identify}}{Proof of Theorem 2}}
To conclude this section, we provide the proof of Theorem~\ref{thm:neg-identify} from the introduction, which follows straightforwardly from Theorem~\ref{thm:ident-lower-bound}.
\newenvironment{myproof-thm:neg-identify}{\paragraph{\textbf{Proof of Theorem \ref{thm:neg-identify}}}}{\hfill$\square$}
\begin{myproof-thm:neg-identify}
	Let $n \ge 8$. If $4$ divides $n$,
	then Theorem \ref{thm:ident-lower-bound}
	implies there exists a constant $c>0$ such that
	any structure learning algorithm for the constraint graph $F$ and the graph family $\mathcal{G}_n$ 
	with success probability at least $\exp(-cn))$ 
	requires at least $\exp(cn)$ samples. Since  $\mathcal{G}_n \subseteq \mathcal{G}(n,7)$, the result follows.
	
	The same ideas carry over without significant modification to the case when $4$ does not divide $n$.
	For example, 
	suppose that $n = 4m + 1$ for some $m \ge 2$.
	For $G \in \mathcal{G}_{4(m+1)}$, let $\hat{G}$ be the subgraph of $G$ induced by $V_{m+1} \setminus \{a_{m+1}',b_{m+1},c_{m+1}\}$.
	We define $\mathcal{G}_n$ as the family of the subgraphs $\hat{G}$ for each $G \in \mathcal{G}_{4(m+1)}$.
	When $n=4m+2$ or $n=4m+3$, we consider instead the graph families of the subgraphs induced by  $V_{m+1} \setminus \{b_{m+1},c_{m+1}\}$ and $V_{m+1} \setminus \{c_{m+1}\}$, respectively.    
	The argument in the proof of Theorem \ref{thm:ident-lower-bound} carries over to these graph families straightforwardly.
	Since in every case $\mathcal{G}_n \subseteq \mathcal{G}(n,7)$ the result follows.	  
\end{myproof-thm:neg-identify}

\section{Learning proper $q$-colorings}
\label{sec:colorings}

In this section we consider statistical identifiability, structure learning and  equivalent-structure learning for proper $q$-colorings, where $H = K_q$ and $\pi_G$ is the uniform distribution over the proper $q$-colorings of the graph $G$. In particular, we prove Theorem~\ref{thm:col} from the introduction.

\subsection{A structure learning algorithm}
\label{subsec:naive-alg}
In this subsection we introduce a general structure learning algorithm for any constraint graph $H$ with at least one hard constraint; i.e., $H\neq K_q^+$. 
In Section~\ref{subsection:efficient-coloring}, we analyze its running time and sample complexity for proper colorings. Later in Sections~\ref{sec:Dobrushin} and \ref{sec:permissive}, we consider more general settings where this algorithm is also efficient.

Fix $H \neq K_q^+$ and
suppose  $\edge{i}{j} \not\in E(H)$. Given independent samples $\sigma^{(1)},\dots,\sigma^{(L)}$ from $\pi_G=\pi_G^H$ for some unknown graph $G=(V,E)$,
the algorithm checks for every pair of vertices $u,v \in V$
whether there is at least one sample $\sigma^{(k)}$ such that
$\sigma^{(k)}_u = i$ and $\sigma^{(k)}_v = j$.
If this is the case, then the edge $\edge{u}{v}$ does not belong to $E$.
Otherwise, the algorithm adds the edge $\edge{u}{v}$ to the estimator $\hat{E}$ of $E$.
This structure learning algorithm, which we call \textsc{structlearn-H}, has running time $O(Ln^2)$ and was used before in \cite{BGS-hc} for the hard-core model. The effectiveness of \textsc{structlearn-H} depends crucially on how likely are nonadjacent vertices to receive colors $i$ and $j$.
For $v \in V$, let
$X_v$ be the random variable for the color of $v$ under $\pi_G$.

\begin{lemma}
	\label{lemma:hard-contraints:main}
	Let $H\neq K_q^+$ and $\edge{i}{j} \not\in E(H)$. Suppose that for all $\edge{u}{v} \not\in E$
	$$\Pr[X_u=i,X_v=j] \ge \delta,$$
	for some $\delta > 0$. 
	Let $\hat{G} = (V,\hat{E})$ be the output of the algorithm \textsc{structlearn-H}.
	Then, for all $\varepsilon \in (0,1)$,
	$
	\Pr[E=\hat{E}] \ge 1-\varepsilon
	$
	provided $L \ge 8\delta^{-1} \log (\frac{n^2}{2\varepsilon})$.
\end{lemma}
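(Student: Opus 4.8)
The plan is to separate the two ways in which the estimator $\hat{E}$ can differ from the true edge set $E$, show that one of them cannot occur at all, and bound the other by a union bound over pairs. First I would observe that the algorithm never discards a genuine edge. If $\{u,v\}\in E$, then in every $H$-coloring $\sigma\in\Omega_G$ the colors $\sigma_u,\sigma_v$ are compatible, and since $\{i,j\}\notin E(H)$ the pair $(i,j)$ is incompatible; hence no sample can satisfy $\sigma^{(k)}_u=i$ and $\sigma^{(k)}_v=j$, so \textsc{structlearn-H} always adds $\{u,v\}$ to $\hat{E}$. Thus $E\subseteq\hat{E}$ with probability $1$, and the only possible error is a \emph{false positive}: some non-edge $\{u,v\}\notin E$ is mistakenly placed in $\hat{E}$, which happens precisely when none of the $L$ samples exhibits colors $(i,j)$ on $(u,v)$.

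Next I would bound this failure probability for a single fixed non-edge. By hypothesis $\Pr[X_u=i,X_v=j]\ge\delta$ for every $\{u,v\}\notin E$, so in one independent sample the probability of \emph{not} seeing this color pattern on $(u,v)$ is at most $1-\delta$. Because the $L$ samples are independent, the probability that the pattern is never observed across all $L$ of them is at most $(1-\delta)^L\le e^{-\delta L}$, using $1-x\le e^{-x}$.

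Finally I would take a union bound over the at most $\binom{n}{2}\le n^2/2$ non-adjacent pairs, obtaining
$$\Pr[E\neq\hat{E}]\;\le\;\frac{n^2}{2}\,e^{-\delta L}.$$
Requiring the right-hand side to be at most $\varepsilon$ and solving for $L$ gives the sufficient condition $L\ge\delta^{-1}\log\!\big(\tfrac{n^2}{2\varepsilon}\big)$; the stated bound $L\ge 8\delta^{-1}\log\!\big(\tfrac{n^2}{2\varepsilon}\big)$ is comfortably larger, so it certainly suffices. I do not expect a genuine obstacle here, as the argument reduces to a clean union bound; the one point requiring care is the deterministic observation that an incompatible color pair never appears across a true edge, since this is exactly what rules out false negatives and lets the whole analysis focus on controlling false positives on non-edges.
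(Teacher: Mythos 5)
Your proof is correct and follows essentially the same route as the paper: the only possible errors are false positives on non-edges, and these are controlled by a union bound over the at most $\binom{n}{2}$ pairs. The one difference is in how the single-pair failure probability is bounded: the paper applies a Chernoff bound to the count $Z_{uv}$ of witnessing samples, obtaining $\Pr[Z_{uv}=0]\le \exp(-\delta L/8)$, whereas you compute the exact probability $(1-\delta)^L\le e^{-\delta L}$ directly; your estimate is both more elementary and sharper, and it correctly explains why the constant $8$ in the stated sample bound is pure slack. Your explicit observation that $E\subseteq\hat{E}$ always holds (because the incompatible pair $(i,j)$ can never appear across a true edge) is left implicit in the paper's proof but is exactly the right point to make the argument complete.
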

\begin{proof}
	Suppose $\edge{u}{v} \not\in E$ and let $Z_{uv}$ be the number of samples 
	where vertices $u$ and $v$ are assigned colors $i$ and $j$, respectively. Since $\E[Z_{uv}] \ge \delta L $, a Chernoff bound implies
	\[\Pr[Z_{uv} = 0] \le \Pr\left[Z_{uv} \le \frac{\delta L}{2}\right]  \le \exp\left(\frac{-\delta L}{8} \right) \le \frac {2\varepsilon}{n^2}.\]
	The result follows from a union bound over the edges.
\end{proof}

\subsection{\texorpdfstring{Efficient structure learning when $q \geq d+1$}{Efficient structure learning when q >= d+1}}
\label{subsection:efficient-coloring}

In this subsection we prove Part~\ref{part:1} of Theorem~\ref{thm:col}. 
We show that for proper $q$-colorings with $q\geq d+1$ and any graph in $\mathcal{G}(n,d)$, 
the structure learning algorithm \textsc{structlearn-H} (see Section \ref{subsec:naive-alg}) 
requires $O(qd^3\log{(n/\varepsilon)})$ samples to 
succeed with probability at least $1-\varepsilon$ and
has running time $O(qd^3n^2\log{(n/\varepsilon)})$.
This can be deduced immediately from the next lemma and Lemma~\ref{lemma:hard-contraints:main}, since $H=K_q\neq K_q^+$ in this setting.

\begin{lemma}
	\label{lemma:colorings:main}
	Suppose that $q\geq d+1$ and
	let $\edge{u}{v} \not\in E$.
	Then 
	$$
	\Pr[X_u=X_v] \geq\frac{1}{q( d+1)^3}.
	$$
\end{lemma}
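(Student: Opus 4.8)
The plan is to lower bound the probability that two nonadjacent vertices $u,v$ receive the \emph{same} color in a uniformly random proper $q$-coloring of $G$, using the condition $q \ge d+1$. The natural approach is a \textbf{local resampling / coupling-free switching argument}: given a random proper coloring $X$, compare the event $\{X_u = X_v\}$ against the ``generic'' situation by recoloring one of the two vertices. Specifically, I would set up a map (or an injection) between colorings where $u$ and $v$ disagree and colorings where they agree, and control the multiplicity of this map to extract a constant-factor (in $q,d$) lower bound.

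Concretely, here is the sequence of steps I expect to carry out. First, condition on the coloring $\sigma$ of all vertices except $u$. Since $u$ has at most $d$ neighbors and $q \ge d+1$, there is always at least one free color available for $u$, so $X_u$ has a genuine conditional distribution supported on $\ge 1$ color. The key quantity is $\Pr[X_u = c \mid X_{V \setminus u}]$ for the specific color $c = X_v$. I would argue that, after further conditioning appropriately, the color of $v$ is available to $u$ with reasonable probability and, when available, is chosen with probability at least $1/(d+1)$ (since $u$ chooses uniformly among its $\ge q-d \ge 1$ free colors, and the number of free colors is at most $q$ but we only need a crude bound). The main work is to show that the color class of $v$ lands in $u$'s free set often enough — this is where the factor $1/(q(d+1)^3)$ will come from, accounting for the probability that $v$'s color is not blocked at $u$ by one of $u$'s neighbors, and symmetrically balancing the roles of $u$ and $v$.

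The cleanest route to the stated bound is probably a \textbf{two-step revealing argument combined with a union/counting bound over the neighborhoods}. I would reveal the colors of $N(u) \cup N(v)$ (a set of at most $2d$ vertices), and show that with probability bounded below by a $\poly(1/d)$ factor, the set of colors forbidden to both $u$ and $v$ leaves a common free color available; conditioned on that, $u$ and $v$ independently (given the boundary) both take that common color with probability $\ge 1/q$ each. Multiplying the ``common free color exists'' probability against the two $\ge 1/(d+1)$-type selection probabilities should yield the $1/(q(d+1)^3)$ form. I expect the exponent $3$ on $(d+1)$ to arise as: one factor from $u$ selecting the target color, one from $v$ selecting it, and one from ensuring the target color survives as free at both endpoints simultaneously.

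The hard part will be making the conditioning rigorous without the events at $u$ and $v$ interfering: revealing $N(u)$ affects the conditional law at $v$ if $N(u)$ and $N(v)$ overlap or are adjacent, so the two selections are not truly independent. The main obstacle is therefore decoupling the two local choices. I would handle this by revealing the coloring of the \emph{joint} neighborhood $N(u) \cup N(v)$ first (so that, conditioned on this boundary, $X_u$ and $X_v$ become conditionally independent, as $u \not\sim v$ and each depends only on its own revealed neighborhood), and then lower bounding the probability that this boundary configuration leaves a color simultaneously free at $u$ and at $v$. Establishing that this ``simultaneously free'' event has probability at least $\Omega(1/(d+1))$ — rather than something exponentially small — is the crux, and I expect it to follow from a pigeonhole/counting estimate on how many of the $q \ge d+1$ colors can be forbidden at each endpoint.
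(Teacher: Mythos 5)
Your overall framing is consistent with the paper's: the paper also works with the graph $G'$ obtained by deleting $u$ and $v$ (equivalently, conditioning on the rest of the coloring), uses a rejection-sampling step in which $u$ and $v$ each propose a uniform color independently, and isolates as the crux the probability that some color is simultaneously available at $u$ and at $v$. However, there is a genuine gap at exactly the step you yourself flag as the crux. You propose to show that a common free color exists with probability $\Omega(1/(d+1))$ via a ``pigeonhole/counting estimate on how many of the $q\ge d+1$ colors can be forbidden at each endpoint.'' Counting forbidden colors per endpoint only gives that at most $d$ colors are blocked at $u$ and at most $d$ at $v$, hence at least $q-2d$ colors are free at both; this is vacuous throughout the regime $d+1\le q\le 2d$, which the hypothesis permits. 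Indeed, in that regime there are boundary configurations with no common free color at all (e.g., $q=d+1$ with $\partial u$ colored $1,\dots,d$ and $\partial v$ colored $2,\dots,d+1$), so a per-endpoint count of forbidden colors cannot by itself yield any positive lower bound on the ``simultaneously free'' event.

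The missing ingredient is a switching/injection argument on the colorings of $G'$. The paper's pigeonhole is different from yours: summing occurrences over colors, at most $d$ colors can appear twice or more in $\partial u$ or twice or more in $\partial v$, so at least $q-d\ge 1$ colors appear \emph{at most once} in each neighborhood, and by symmetry a fixed color $c$ appears at most once in each with probability at least $(q-d)/q\ge 1/(d+1)$. To upgrade ``at most once'' to ``zero times,'' the paper maps each coloring of $G'$ in which $c$ occurs $s$ times in $\partial u$ and $t$ times in $\partial v$ to one in which $c$ is absent from $\partial u\cup\partial v$, recoloring the offending vertices one at a time; each such vertex has degree at most $d-1$ in $G'$, hence at least $q-(d-1)\ge 2$ available colors, and the map is at most $d^{s+t}$-to-one. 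With $s,t\le 1$ this costs a factor $(1+d)^2$, giving probability at least $1/(d+1)^3$ that $c$ is absent from $\partial u\cup\partial v$; both vertices then propose $c$ with probability $1/q^2$, and summing over the $q$ choices of $c$ yields $1/(q(d+1)^3)$. This also corrects your accounting of the exponent: all three factors of $(d+1)$ come from the availability event for a single fixed color (one from the pigeonhole, two from the injection multiplicity), not one each from the two color selections. Without the injection, or an equivalent device, your argument does not close.
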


\begin{proof}
	Let $u,v \in V$ be such that $\edge{u}{v} \not\in E$, 
	and let $\partial u$, $\partial v$ denote the neighborhoods of $u$ and $v$, respectively, which may
	overlap. Let $G'$ be $G$ with the vertices $u,v$ removed (the edges adjacent to $u$ and $v$ are removed as well). 
	We will generate a uniformly random coloring of $G$ using rejection sampling as follows. Pick a uniformly random 
	coloring of $G'$, a uniformly random color $c_1$ for $u$, and a uniformly random color $c_2$ for $v$. 
	If the resulting coloring is valid for $G$ then accept, otherwise reject. 
	Since in every round each coloring has the same probability of being picked, the generated coloring is a uniformly random coloring of $G$. 
	
	Let $A(c,s,t)$ be the set of colorings of $G'$ where color $c$ appears exactly $s$ times in $\partial u$ and exactly~$t$ times in $\partial v$.
	Given a coloring in $A(c,s,t)$ we produce a coloring in $A(c,0,0)$ as follows. List the positions where $c$ occurs in $\partial u\cup \partial v$
	and then re-color the vertices in the order of the list. Note that every vertex $w$ in $\partial u\cup \partial v$ has at least $2$ colors that 
	do not occur in its neighborhood since in $G'$ the vertex $w$ has degree at most $ d-1$ (recall that we removed $u$ and $v$ from $G$ to obtain $G'$).
	This maps at most $ d^{s+t}$ colorings from $A(c,s,t)$ to a coloring in $A(c,0,0)$ (given the list of positions we can recover
	the original coloring; there are at most $ d^{s+t}$ lists where we first list the vertices in $\partial u$ and then vertices in 
	$\partial v\setminus \partial u$). Hence we have
	\begin{equation}\label{ine1}
	| A(c,0,0) | \geq \frac{| A(c,s,t) | }{  d^{s+t}}.
	\end{equation}
	
	Let $A(c,\leq j,\leq k) := \sum_{s\leq j,t\leq k} A(c,s,t)$ and let $\mu$ be the uniform distribution over the colorings of $G'$. We claim that in any coloring there exists at least $q-d$ colors that satisfy the following: 
	$c$ occurs at most once in $\partial u$ and at most once in $\partial v$.
	Indeed, adding over all colors the number 
	of occurrences in  $\partial u$ and the number of occurrences in $\partial v$ we can get at most $2 d$;
	thus at most $ d$ colors can occur at least twice in $\partial u$ or at least twice in $\partial v$.
	Thus 
	$$
	\mu(A(1,\leq 1,\leq 1)) + \mu(A(2,\leq 1,\leq 1)) + \cdots + \mu(A(q,\leq 1,\leq 1)) \geq q- d, 
	$$
	and by symmetry 
	$\mu(A(1,\leq 1,\leq 1)) \geq \frac{q- d}{q}\geq \frac{1}{ d+1}$.
	Since
	$$\mu(A(1,\leq 1,\leq 1)) = \sum_{i,j \in \{0,1\}} \mu(A(1,i,j)),$$
	from~\eqref{ine1} we get 
	$$
	\mu(A(1,0,0)) \geq \frac{1}{(d+1)^3}.
	$$
	Therefore with probability at least $1/(d+1)^3$ color $1$ does not occur in $\partial u\cup \partial v$ 
	in a random coloring of $G'$. With probability $1/q^2$ we propose $c_1=c_2=1$ in the rejection sampling procedure and hence with probability 
	at least 
	$
	\frac{1}{q^2( d+1)^3}
	$ 
	our process accepts and produces a coloring of $G$ where $u,v$ both receive color~$1$. Thus in a uniformly 
	random coloring of $G$ vertices $u,v$ receive color $1$ with probability at least 
	$\frac{1}{q^2( d+1)^3}$, and by symmetry the probability that they receive the same color is at least
	$\frac{1}{q( d+1)^3}$, as claimed.
\end{proof}



\subsection{Identifiability for proper $q$-colorings}
\label{subsection:identifiability}
In this subsection we prove Part~\ref{part:2} of Theorem~\ref{thm:col}. We show that when $q \le d$ there exist two distinct graphs $G,G' \in \mathcal{G}(n,d)$ such that $\pi_G = \pi_{G'}$, or equivalently that $G$ and $G'$ have the same set of $q$-colorings.

\begin{theorem}\label{thm:id-col}
	Let $n,q,d\in\mathbb{N}^+$ such that $q\leq d$ and $n \ge q+2$. Then, the structure learning problem for $q$-colorings is not identifiable with respect to the family of graphs $\mathcal{G}(n,d)$.
\end{theorem}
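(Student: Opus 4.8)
The plan is to prove non-identifiability directly, by exhibiting two distinct graphs $G_1,G_2\in\mathcal{G}(n,d)$ with $\Omega_{G_1}=\Omega_{G_2}$ (equivalently $\pi_{G_1}=\pi_{G_2}$); by Definition~\ref{def:identify} this is exactly what it means for $K_q$ to fail to be identifiable with respect to $\mathcal{G}(n,d)$, and it is precisely the mechanism appearing in the forward direction of Lemma~\ref{lemma:id_lem}. Throughout I assume $q\ge 2$, since for $q=1$ the only $K_q$-colorable graphs are edgeless and the statement is vacuous.

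The core gadget uses $q+2$ vertices. Let $S=\{s_1,\dots,s_{q-1}\}$ be a clique on $q-1$ vertices, and add three further vertices $u,v,x$. Join both $u$ and $v$ to every vertex of $S$, and join $x$ to $u$ only. Since $S$ is a clique of size $q-1$, in any proper $q$-coloring it uses $q-1$ distinct colors, leaving exactly one color $c$ unused; as $u$ and $v$ are each adjacent to all of $S$, both are forced to take the color $c$. Hence $u$ and $v$ receive the same color in every proper $q$-coloring. Consequently, since $x$ is adjacent to $u$, we have $\sigma(x)\neq\sigma(u)=\sigma(v)$ for every coloring $\sigma$, so the \emph{nonadjacent} pair $x,v$ always receives distinct (compatible) colors.

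I would then set $G_1$ to be this gadget together with $n-(q+2)$ isolated vertices, so that $|V|=n$; note that $n\ge q+2$ is exactly what the construction requires. Let $G_2=(V,E(G_1)\cup\{\edge{x}{v}\})$. Because $\sigma(x)\neq\sigma(v)$ holds for every $\sigma\in\Omega_{G_1}$, adding the edge $\edge{x}{v}$ removes no colorings, and since $G_2\supseteq G_1$ we obtain $\Omega_{G_1}=\Omega_{G_2}$; the isolated vertices contribute the same uniform factor to both, so $\pi_{G_1}=\pi_{G_2}$. The two graphs are distinct (they differ exactly in the edge $\edge{x}{v}$) and both are $q$-colorable (color $S$ with $1,\dots,q-1$, set $u=v=q$ and $x=1$), which is all that is needed.

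The only delicate point I expect is the degree bound, which must hold even at the boundary $q=d$. Here the vertices of $S$ have degree $(q-2)+2=q$, vertex $u$ has degree $(q-1)+1=q$ in both graphs, $v$ has degree $q-1$ in $G_1$ and $q$ in $G_2$, and $x$ has degree at most $2$; thus the maximum degree is exactly $q\le d$, placing $G_1,G_2\in\mathcal{G}(n,d)$. This is the reason for the indirect design: I force \emph{equality} of colors on $u,v$ and read off the forced \emph{inequality} on the auxiliary pair $x,v$, rather than trying to force two nonadjacent vertices to differ outright, since the latter would naturally cost degree $q+1$ and break down precisely when $q=d$.
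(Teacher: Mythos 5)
Your proof is correct and is essentially the paper's own argument: the same gadget (a clique on $q+1$ vertices with the single edge $\edge{u}{v}$ removed, forcing $\sigma(u)=\sigma(v)$ in every proper $q$-coloring) together with a pendant vertex whose color is then forced to differ from $v$, so that the missing edge can be added without changing the set of colorings, and the same degree accounting showing the maximum degree is exactly $q\le d$. The only cosmetic difference is that the paper pads to $n$ vertices with a path attached to $v$ and uses the path's first vertex as the witness, whereas you attach a single pendant vertex to $u$ and pad with isolated vertices.
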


\begin{proof}
	Let $G=(V,E)$ be a graph with 
	$$V=\{c_1,\ldots,c_{q-1},u,v,w_0,\ldots,w_{n-q-2}\},$$
	where  
	$\{c_1,\ldots,c_{q-1},u,v\}$ is a clique of size $q+1$ except for the one edge $\edge{u}{v}$
	that is not in $E$, and 
	$\{w_0,\ldots,w_{n-q+2}\}$ is a simple path from $w_0$ to $w_{n-q-2}$. 
	$G$ has one additional edge connecting $v$ and $w_0$.
	Then, in every $q$-coloring of $G$ the vertices $u$ and $v$ receive the same color,
	and so $u$ and $w_0$ are assigned distinct colors.
	Hence, the graph $G$ and the graph $G' =(V,E \cup \edge{u}{w_0})$ have the same set of $q$-colorings.
	Since both $G$ and $G'$ are $n$-vertex graphs of maximum degree at most $q \le d$, the structure learning for $q$-colorings is not identifiable with respect to $\mathcal{G}(n,d)$.
\end{proof}

\subsection{\texorpdfstring{Strong lower bound when $q < d -\sqrt{d} + \Theta(1)$}{Strong lower bound}}
\label{subsection:strong-lower}
In this subsection we prove Part~\ref{part:col-exp} of Theorem~\ref{thm:col}, establishing a strong learning lower bound for proper colorings when $q < d -\sqrt{d} + \Theta(1)$. 

As previously defined, an equivalent-structure learning algorithm for a graph family $\mathcal{G}$ finds a graph $\hat{G} \in \mathcal{G}$ such that $\OO_{G}=\OO_{\hat{G}}$, where $G \in \mathcal{G}$ is the actual hidden graph. 
We exhibit a family of graphs of maximum degree $q+\sqrt{q}+\TT(1)$ 
such that every graph in the family has almost the same set of $q$-colorings. This makes equivalent-structure learning hard in this family. We use this fact to prove Part \ref{part:col-exp} of Theorem~\ref{thm:col}.

We define first a graph $G_{m,t}=(V_{m,t},E_{m,t})$ and every graph in our graph family will be a supergraph of $G_{m,t}$.
For any $m,t\in \mathbb{N}^+$ with $t<q$, 
the graph $G_{m,t}$ is defined as follows.
Let $C_1,\ldots,C_m$ and $C_1',\ldots,C_m'$ be cliques of size $q-1$, and let $I_1,\ldots,I_m$ and $I_1',\ldots,I_m'$ be independent sets of size $t$. 
Moreover, let $s_1,\ldots,s_m$ be $m$ additional vertices.
Then,
$$V_{m,t} = \bigcup_{i=1}^{m} \{V(C_i),V(C_i'),V(I_i),V(I_i'),s_i\},$$
where $V(C_i), V(C_i'), V(I_i), V(I_i')$ are the vertices of $C_i, C_i', I_i, I_i'$, respectively.
In addition to the edges in the cliques $C_i$ and $C_i'$ for $1 \le i \le m$,
$E_{m,t}$ contains the following edges:
\begin{enumerate}
	\item For $1\leq i\leq m$, there is a complete bipartite graph between $C_i$ and $I_i$. That is, for $u\in C_i$ and $v\in I_i$, $\edge{u}{v} \in E_{m,t}$. 
	\item 
	For $2\leq i\leq m$, each $C_i$ is partitioned into $t$ almost-equally-sized disjoint subsets $C_{i,1},\ldots,C_{i,t}$ of size either $\lfloor (q-1)/t \rfloor$ or $\lceil (q-1)/t \rceil$. 
	Then, the $j$-th vertex of $I_{i-1}$ is connected to every vertex in $C_{i,j}$. 
	\item Edges between $C_1',\ldots,C_m'$ and $I_1',\ldots,I_m'$ are defined in exactly the same manner.
	\item For $1\leq i\leq m$, the vertex $s_i$ is adjacent to exactly one vertex in $I_i$ and to exactly one in $I_i'$; 	
\end{enumerate}
see Figure \ref{fig:lb_coloring} for an illustration of the graph $G_{m,t}$.
The key fact about the graph $G_{m,t}$ that allows us to construct a graph family with the desired properties is the following.
\begin{lemma}\label{lemma:lb_coloring}
	Let $q,m,t\in \mathbb{N}^+$ and $q\geq 3$. In every $q$-coloring of $G_{m,t}$ all cliques $C_1,\ldots,C_m$ are colored by the same set of $q-1$ colors, and all independent sets $I_1,\ldots,I_m$ are colored with the remaining color. The same holds for $C_1',\ldots,C_m'$ and $I_1',\ldots,I_m'$ .
\end{lemma}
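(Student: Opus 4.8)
The plan is to track, for each clique, the unique color that it fails to use, and to show that this ``missing color'' is forced to be identical across all of $C_1,\dots,C_m$ (and, separately, across $C_1',\dots,C_m'$). Fix an arbitrary proper $q$-coloring of $G_{m,t}$. Since each $C_i$ is a clique on $q-1$ vertices, its vertices receive $q-1$ distinct colors, so exactly one color $\chi_i\in\{1,\dots,q\}$ is absent from $C_i$. The vertices $s_i$ impose no useful constraint here and may be ignored throughout.

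First I would analyze the independent sets. Because there is a complete bipartite graph between $C_i$ and $I_i$, every vertex of $I_i$ is adjacent to all $q-1$ colors appearing on $C_i$, and so is forced to take the only remaining color $\chi_i$. Hence each $I_i$ is monochromatic with color $\chi_i$; in particular, the $j$-th vertex of $I_{i-1}$ has color $\chi_{i-1}$ for every $j$. The same reasoning applies verbatim to $C_i'$ and $I_i'$.

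The key step is to propagate the missing color from $C_{i-1}$ to $C_i$. For $2\le i\le m$, the sets $C_{i,1},\dots,C_{i,t}$ partition $C_i$, and the $j$-th vertex of $I_{i-1}$---which carries color $\chi_{i-1}$---is joined to every vertex of $C_{i,j}$. Consequently every vertex of $C_i$ is adjacent to some vertex colored $\chi_{i-1}$, so $\chi_{i-1}$ cannot appear on $C_i$. Since $\chi_i$ is the \emph{unique} color missing from the $(q-1)$-clique $C_i$, we conclude $\chi_i=\chi_{i-1}$. A straightforward induction on $i$ then yields $\chi_1=\chi_2=\cdots=\chi_m$, so all the cliques $C_i$ use exactly the same set of $q-1$ colors and all the independent sets $I_i$ receive the single remaining color. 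Applying the identical argument to the primed families completes the claim.

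The only delicate point is the propagation step, and it relies on two facts that must be checked carefully: that $\{C_{i,j}\}_{j=1}^{t}$ genuinely covers all of $C_i$ (so that the forbidden color $\chi_{i-1}$ is excluded from \emph{every} clique vertex, not merely some), and that $I_{i-1}$ is entirely monochromatic (established in the first step). The hypotheses $q\ge 3$ and $t<q$ guarantee that the cliques are nontrivial and that the partition of each $C_i$ into $t$ nonempty parts of size $\lfloor (q-1)/t\rfloor$ or $\lceil (q-1)/t\rceil$ is well-defined, which is precisely what makes the covering argument go through.
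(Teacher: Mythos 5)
Your proof is correct and follows essentially the same route as the paper's: the complete bipartite graph between $C_i$ and $I_i$ forces $I_i$ to be monochromatic in the unique color missing from the $(q-1)$-clique $C_i$, and the partition $\{C_{i,j}\}_{j=1}^{t}$ guarantees every vertex of $C_{i+1}$ sees that color on a neighbor in $I_i$, so the missing color propagates by induction. Your write-up merely makes explicit the covering and uniqueness points that the paper leaves implicit.
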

\noindent
Lemma~\ref{lemma:lb_coloring} implies that every $q$-coloring of $G_{m,t}$ is determined by the colors of $I_1,I_1'$ and those of the vertices $s_1,\ldots,s_m$. 

\begin{figure}[tb]
	\centering
	\begin{tikzpicture}
	\node(C1) [shape=circle, draw=black, minimum size=1.5cm] at (0,0) {$C_1$};
	\node(I1) [shape=circle, draw=black, minimum size=0.9cm] at (2,0) {$I_1$};
	\node(C2) [shape=circle, draw=black, minimum size=1.5cm] at (4,0) {$C_2$};
	\node(I2) [shape=circle, draw=black, minimum size=0.9cm] at (6,0) {$I_2$};
	\node(Im1)[shape=circle, draw=black, minimum size=0.9cm] at (9,0) {$I_{m-1}$};
	\node(Cm) [shape=circle, draw=black, minimum size=1.5cm] at (11,0) {$C_m$};
	\node(Im) [shape=circle, draw=black, minimum size=0.9cm] at (13,0) {$I_m$};
	
	\node(C1') [ shape=circle, draw=black, minimum size=1.5cm] at (0,-3) {$C_1'$};
	\node(I1') [shape=circle, draw=black, minimum size=0.9cm] at (2,-3) {$I_1'$};
	\node(C2') [ shape=circle, draw=black, minimum size=1.5cm] at (4,-3) {$C_2'$};
	\node(I2') [shape=circle, draw=black, minimum size=0.9cm] at (6,-3) {$I_2'$};
	\node(Im1')[shape=circle, draw=black, minimum size=0.9cm] at (9,-3) {$I_{m-1}'$};
	\node(Cm') [shape=circle, draw=black, minimum size=1.5cm] at (11,-3) {$C_m'$};
	\node(Im') [shape=circle, draw=black, minimum size=0.9cm] at (13,-3) {$I_m'$};
	
	\node(s1) at (2,-1.5) {$s_1$};
	\node(s2) at (6,-1.5) {$s_2$};
	\node(sm1) at (9,-1.5) {$s_{m-1}$};
	\node(sm) at (13,-1.5) {$s_m$};
	\node(i1) [fill, circle, inner sep=1pt] at (2,-0.3) {};
	\node(i1')[fill, circle, inner sep=1pt] at (2,-2.7) {};
	\node(i2) [fill, circle, inner sep=1pt] at (6,-0.3) {};
	\node(i2')[fill, circle, inner sep=1pt] at (6,-2.7) {};
	\node(im1) [fill, circle, inner sep=1pt] at (9,-0.3) {};
	\node(im1')[fill, circle, inner sep=1pt] at (9,-2.7) {};
	\node(im) [fill, circle, inner sep=1pt] at (13,-0.3) {};
	\node(im')[fill, circle, inner sep=1pt] at (13,-2.7) {};
	
	\path
	(C1) edge (I1)
	(I1) edge [dashed] (C2)
	(C2) edge (I2)
	(Im1) edge [dashed] (Cm)
	(Cm) edge (Im)
	
	(C1') edge (I1')
	(I1') edge [dashed] (C2')
	(C2') edge (I2')
	(Im1') edge [dashed] (Cm')
	(Cm') edge (Im')
	
	(s2) edge[white] node[black]{$\cdots\cdots$} (sm1)
	
	(s1) edge (i1)
	(s1) edge (i1')
	(s2) edge (i2)
	(s2) edge (i2')
	(sm1) edge (im1)
	(sm1) edge (im1')
	(sm) edge (im)
	(sm) edge (im');
	\end{tikzpicture}
	\caption{The graph $G_{m,t}$. Each of $C_1,\ldots,C_m,C_1',\ldots,C_m'$ is a clique of size $q-1$ and each of $I_1,\ldots,I_m,I_1',\ldots,I_m'$ is an independent set of size $t<q$. Solid lines between two clusters mean that every vertex from one cluster is adjacent to every vertex in the other cluster. Dashed lines between $I_{i-1}$ and $C_i$ mean that every vertex in $I_{i-1}$ is adjacent to roughly $(q-1)/t$ vertices in $C_i$ with no two vertices in $I_{i-1}$ sharing a common neighbor in $C_i$.}
	\label{fig:lb_coloring}
\end{figure}
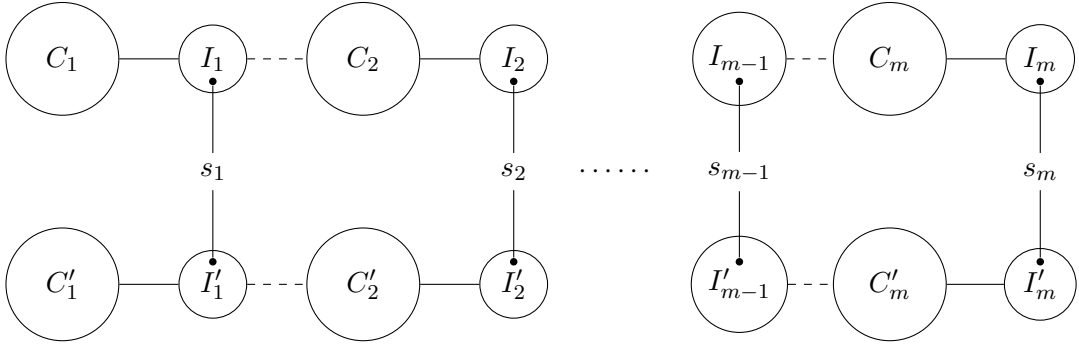

We define next the family of graphs $\mathcal{G}_{m,t}$.
All the graphs in this family are distinct and are supergraphs of $G_{m,t}$. For each $1 \le i \le m$ choose (and fix) a pair of vertices $x_i \in C_i$ and $y_i \in I_i'\bl \boundary s_i$. (Recall that $\boundary s_i$ denotes the neighborhood of $s_i$.) Let 
\begin{equation}
\label{eq:family-def}
M=\{\{x_i,y_i\}:1\leq i\leq m\}.
\end{equation}
Let $E^{(1)},\dots,E^{(l)}$ be all the subsets of $M$; hence, $l=2^m$.
We let 
$$\mathcal{G}_{m,t}  = \{G^{(j)}=(V_{m,t},E_{m,t}\cup E^{(j)}):1\leq j\leq l\}.$$ 
The graphs in $\mathcal{G}_{m,t}$ satisfy the following.
\begin{fact}
	\label{fact:graph:family}
	If $G=(V,E) \in \mathcal{G}_{m,t}$, then $|V| = (2q+2t-1)m$ and the maximum degree of $G$ is at most
	$$
	\max \big\{q+t, q+\big\lceil (q-1)/t\, \big\rceil +1\big\}. 
	$$
\end{fact}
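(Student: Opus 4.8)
The plan is to establish the two claims separately: the vertex count by a direct summation, and the degree bound by a short case analysis over the five kinds of vertices, in each case bounding the degree in the base graph $G_{m,t}$ and then adding at most one for the unique potential edge of $M$ incident to that vertex. For the vertex count I would note that $V_{m,t}$ is the disjoint union, over $1\le i\le m$, of the clusters $V(C_i),V(C_i'),V(I_i),V(I_i')$ and the singleton $\{s_i\}$. Since $|C_i|=|C_i'|=q-1$ and $|I_i|=|I_i'|=t$, each index $i$ contributes $(q-1)+(q-1)+t+t+1=2q+2t-1$ vertices, and summing over the $m$ indices gives $|V|=(2q+2t-1)m$. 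Adding the edges of any $E^{(j)}\subseteq M$ leaves the vertex set unchanged, so this holds for every $G\in\mathcal{G}_{m,t}$.

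For the degree bound I would first observe that each vertex lies in at most one pair of $M$ (the $x_i$ are distinct and lie in distinct cliques $C_i$, and the $y_i$ are distinct and lie in distinct sets $I_i'$), so passing from $G_{m,t}$ to any $G\in\mathcal{G}_{m,t}$ raises each degree by at most one; it therefore suffices to bound degrees in $G_{m,t}$ and add one. A vertex $s_i$ has degree exactly $2$. A vertex of a clique $C_i$ (symmetrically $C_i'$) has $q-2$ incident clique edges, $t$ edges to $I_i$ from the complete bipartite connection, and at most one edge received from $I_{i-1}$ through the block partition of the cliques; hence its degree in $G_{m,t}$ is at most $(q-2)+t+1=q+t-1$, and at most $q+t$ after allowing the $M$-edge. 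A vertex of an independent set $I_i$ (symmetrically $I_i'$) has $q-1$ edges to $C_i$, at most $\lceil (q-1)/t\rceil$ forward edges to $C_{i+1}$ (it is joined only to its own block $C_{i+1,j}$, whose size is at most $\lceil (q-1)/t\rceil$), and at most one edge to $s_i$; allowing in addition the $M$-edge gives degree at most $q+\lceil (q-1)/t\rceil+1$.

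Taking the maximum of the two binding cases — $q+t$ for clique vertices and $q+\lceil (q-1)/t\rceil+1$ for independent-set vertices — then yields the claimed bound. There is no real obstacle here; the argument is elementary bookkeeping, and the only step meriting attention is the interaction, on the primed independent sets, between the $s_i$-edges and the $M$-edges, where the displayed bound is obtained by pessimistically allowing both at a single vertex. In fact the choice $y_i\in I_i'\bl\boundary s_i$ guarantees that $y_i$ is never the $s_i$-neighbor, so no independent-set vertex carries both edges and the true maximum is one smaller, namely $q+\lceil (q-1)/t\rceil$; this refinement is not needed for the stated upper bound but explains the purpose of the restriction in the definition of $M$.
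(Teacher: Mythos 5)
Your proof is correct and follows essentially the same route as the paper's: count $2(q-1)+2t+1$ vertices per index $i$, bound the degree of clique vertices by $q+t-1$ and of independent-set vertices by $q+\lceil (q-1)/t\rceil$ in $G_{m,t}$, and add one for the at most one incident edge of $M$. Your closing observation that the restriction $y_i\in I_i'\setminus\partial s_i$ actually saves one in the independent-set case is a correct (if unneeded) refinement beyond what the paper records.
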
	
\noindent
Using Lemma \ref{lemma:lb_coloring} and ideas similar to those
in the proof of Theorem \ref{thm:ident-lower-bound}, we can show that both structure and equivalent-structure learning are computationally hard in $\mathcal{G}_{m,t}$ (sample complexity is exponential in $m$). 
This immediately implies that \textit{structure learning} is also hard in $\mathcal{G}(n,d)$ 
provided $d$ is large enough so that $\mathcal{G}_{m,t} \subseteq \mathcal{G}(n,d)$.
However, 
this does not necessarily imply
that \textit{equivalent-structure learning} is hard for $\mathcal{G}(n,d)$ which is our goal.
For this, we introduce instead a larger graph family $\mathcal{F}_{m,t}$ that contains $\mathcal{G}_{m,t}$.
Suppose $d$ is an integer such that
$$d\geq \max \big\{q+t, q+\big\lceil (q-1)/t\, \big\rceil +1\big\},$$ the maximum degree of any graph in $\mathcal{G}_{m,t}$; see Fact~\ref{fact:graph:family}. The graph family $\mathcal{F}_{m,t}$ contains
all the graphs in $\mathcal{G}(n,d)$ that have the same set of colorings as some graph in $\mathcal{G}_{m,t}$. Namely, 
$$\mathcal{F}_{m,t}=\{G\in\mathcal{G}(n,d):\OO_{G}=\OO_{G'},G'\in\mathcal{G}_{m,t}\},$$
and $\mathcal{G}_{m,t}\subseteq \mathcal{F}_{m,t}\subseteq \mathcal{G}(n,d)$ by definition. 

The next theorem establishes a lower bound for any equivalent-structure learning algorithm for $\mathcal{F}_{m,t}$. 
We will later see that this lower bound applies also to the graph family $\mathcal{G}(n,d)$, which establishes Part \ref{part:col-exp} of Theorem \ref{thm:col}.

\begin{theorem}\label{thm:lb_coloring_n}
	Let $q,m\in\mathbb{N}^+$, $q\geq 3$ and $t=\lceil \sqrt{q-1}\, \rceil$. Any equivalent-structure learning algorithm 
	for proper vertex $q$-colorings and the graph family $\mathcal{F}_{m,t}$ 
	that succeeds with probability at least $q\cdot\exp\big[-m/(2(q-1))\big]$ requires at least 
	$\exp\big[m/(2(q-1))\big]$ samples.
\end{theorem}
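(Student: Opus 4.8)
The plan is to reduce the theorem to a counting argument in the spirit of Fact~\ref{fact:lb_eta}, adapted to equivalent-structure learning. First I would restrict attention to the $2^m$ graphs $G^{(S)}=(V_{m,t},E_{m,t}\cup\{\{x_i,y_i\}:i\in S\})$ for $S\subseteq\{1,\dots,m\}$, all of which lie in $\mathcal{G}_{m,t}\subseteq\mathcal{F}_{m,t}$. Writing $G_{\min}=G^{(\emptyset)}=G_{m,t}$ and $G_{\max}=G^{(\{1,\dots,m\})}=G_{m,t}\cup M$, these form a chain $\OO_{G_{\max}}\subseteq\OO_{G^{(S)}}\subseteq\OO_{G_{\min}}$. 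The one conceptual change from Fact~\ref{fact:lb_eta} is that success now means outputting \emph{any} graph with the same coloring set; since distinct $S$ will be shown to give distinct sets $\OO_{G^{(S)}}$, the success events for different targets are mutually exclusive, so for every fixed sample sequence $x$ one has $\sum_{S}\Pr[\OO_{\mathcal{A}(x)}=\OO_{G^{(S)}}]\le 1$. Re-running the averaging argument of Fact~\ref{fact:lb_eta} with this inequality in place of the equality $\sum_{G^*}\Pr[\mathcal{A}(x)=G^*]=1$ (and the same splitting of $\OO_{G^{(S)}}^L$ into sequences lying in $\OO_{G_{\max}}^L$ and the rest) yields $L\ge(p-2^{-m})/\eta$ whenever the algorithm succeeds with probability at least $p$ on each $G^{(S)}$, where $\eta=1-|\OO_{G_{\max}}|/|\OO_{G_{\min}}|$.

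The heart of the proof is to understand $\OO_{G_{m,t}}$ well enough to (i) establish distinctness of the $2^m$ coloring sets and (ii) bound $\eta$. Here I would invoke Lemma~\ref{lemma:lb_coloring}: every $q$-coloring is determined by a ``missing color'' $\alpha$ shared by all $C_i$ (so each $I_i$ is monochromatic of color $\alpha$), an analogous missing color $\beta$ for the primed clusters, an arbitrary proper coloring of each clique --- which, given $\alpha$, \emph{decouple} across $i$ because the inter-cluster (dashed) edges only force each $C_i$ to avoid the monochromatic color of $I_{i-1}$, namely $\alpha$, which it already avoids --- and a color for each $s_i$ avoiding both $\alpha$ and $\beta$. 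The edge $\{x_i,y_i\}$ with $x_i\in C_i$ and $y_i\in I_i'$ (color $\beta$) is violated exactly when $\sigma(x_i)=\beta$, which is impossible when $\alpha=\beta$ and occurs in a $1/(q-1)$ fraction of clique colorings when $\alpha\ne\beta$. Distinctness follows: given $i\in S\setminus S'$, pick $\alpha\ne\beta$, set $\sigma(x_i)=\beta$, and set $\sigma(x_{i'})\ne\beta$ for all $i'\in S'$ (possible since the cliques decouple and $q\ge 3$), producing a coloring in $\OO_{G^{(S')}}\setminus\OO_{G^{(S)}}$.

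For the bound on $\eta$, the key observation is that \emph{all} colorings with $\alpha=\beta$ satisfy every constraint $\{x_i,y_i\}$ and hence lie in $\OO_{G_{\max}}$, so the entire difference $|\OO_{G_{\min}}|-|\OO_{G_{\max}}|$ is confined to the $\alpha\ne\beta$ colorings. Counting these (on the $\alpha\ne\beta$ side the unprimed clique factor shrinks from $((q-1)!)^m$ to $((q-1)!-(q-2)!)^m$, the primed factor $((q-1)!)^m$ and the $s_i$-factor $(q-2)^m$ being unaffected) and comparing the difference against the $\alpha=\beta$ contribution $q((q-1)!)^{2m}(q-1)^m$ in the denominator gives, after simplification,
\[
\eta \;\le\; (q-1)\,\rho^{m}\bigl(1-\rho^{m}\bigr) \;\le\; (q-1)\,\rho^{m} \;\le\; (q-1)\exp\!\left(-\frac{m}{q-1}\right),\qquad \rho:=\frac{q-2}{q-1}.
\]
Finally I would set $p=q\exp(-m/(2(q-1)))$ and use $q\ge 3$ to check $p-2^{-m}\ge(q-1)\exp(-m/(2(q-1)))$ (equivalently $\exp(-m/(2(q-1)))\ge 2^{-m}$, which holds since $1/(2(q-1))\le\ln 2$); dividing by the bound on $\eta$ gives $L\ge\exp(m/(2(q-1)))$, as claimed. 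Note that $t$ never enters this estimate --- its value $\lceil\sqrt{q-1}\,\rceil$ is irrelevant to $\eta$ and is reserved only for the maximum-degree bound via Fact~\ref{fact:graph:family}, which matters for the downstream passage to $\mathcal{G}(n,d)$.

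I expect the main obstacle to be the combinatorial bookkeeping for $\eta$: correctly identifying the $\alpha=\beta$ colorings as a common core shared by all $G^{(S)}$, and verifying that the dashed edges impose no constraint beyond forcing a common missing color, so that the cliques genuinely decouple and the per-clique shrinkage factor is exactly $1-\rho^m$. The adaptation of Fact~\ref{fact:lb_eta} to equivalence classes is conceptually the only other new ingredient, but it reduces to replacing the equality by the inequality $\sum_S\Pr[\OO_{\mathcal{A}(x)}=\OO_{G^{(S)}}]\le 1$, which is justified by the distinctness established in the second step.
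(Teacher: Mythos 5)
Your proposal is correct and follows essentially the same route as the paper: the same family of $2^m$ representative graphs, the same averaging lower bound (your adaptation of Fact~\ref{fact:lb_eta} is exactly the paper's Fact~\ref{fact:lb_eta:general}), the same identification of the colorings in which $I_1$ and $I_1'$ share a color as a common core lying in every $\OO_{G^{(S)}}$, and the same final arithmetic yielding $\eta\le(q-1)\e^{-m/(q-1)}$ and $L\ge\e^{m/(2(q-1))}$. Your computation of $\eta$ via the exact per-clique shrinkage factor $1-\rho^m$ is a slight refinement of the paper's cruder (but sufficient) bound, and your explicit construction for distinctness of the $\OO_{G^{(S)}}$ is an equivalent variant of the paper's observation.
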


\noindent
The following generalization of Fact \ref{fact:lb_eta} will be used in the proof of Theorem~\ref{thm:lb_coloring_n}.

\begin{fact}
	\label{fact:lb_eta:general}
	Let $H$ be an arbitrary constraint graph and
	let $\mathcal{F}_1,\dots,\mathcal{F}_r$ be $r$ families of distinct $H$-colorable graphs.
	Suppose that for all $1 \le i \le r$ every graph in the in $\mathcal{F}_i$ has the same set of $H$-colorings $\Omega_{\mathcal{F}_i}$.
	Assume also that $\Omega_{\mathcal{F}_i} \neq \Omega_{\mathcal{F}_j}$ for $i \neq j$ and that $\Omega_{\mathcal{F}_r} \subseteq \Omega_{\mathcal{F}_i}  \subseteq \Omega_{\mathcal{F}_1}$ for all $1 \le i \le r$.
	Let
	$\mathcal{\hat{F}}=\mathcal{F}_1 \cup \dots \cup \mathcal{F}_r$ and let
	\begin{equation*}
	\eta = 1-\frac{|\OO_{\mathcal{F}_r}|}{|\OO_{\mathcal{F}_1}|}.
	\end{equation*}
	If there exists an equivalent-structure learning algorithm for $H$ and $\mathcal{\hat{F}}$, such that for any $G^*\in\hat{\mathcal{F}}$, given $L$ independent samples from $\pi_{G^*}^H$ as input, it outputs a graph $G$ satisfying $\OO_{G}=\OO_{G^*}$ with probability at least $1/r+\aA$ with $\alpha > 0$, then $L\geq \aA/\eta$. 
\end{fact}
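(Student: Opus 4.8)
The plan is to mimic the proof of Fact~\ref{fact:lb_eta}, treating each family $\mathcal{F}_i$ as a single ``meta-graph'' whose associated sample distribution is the uniform measure over $\OO_{\mathcal{F}_i}$. The key observation is that, for equivalent-structure learning, both the input samples and the success criterion depend on the hidden graph $G^*$ only through its coloring set $\OO_{G^*}$: if $G^* \in \mathcal{F}_{i^*}$ then $\pi_{G^*}^H$ is exactly the uniform measure on $\OO_{\mathcal{F}_{i^*}}$, and since the sets $\OO_{\mathcal{F}_1},\dots,\OO_{\mathcal{F}_r}$ are pairwise distinct, outputting a graph $G$ with $\OO_G = \OO_{G^*}$ is equivalent to outputting a graph in the same family as $G^*$.

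First I would fix $G^* \in \mathcal{F}_{i^*}$ and, for each sample sequence $x \in \OO_{\mathcal{F}_{i^*}}^L$, define $p_i(x) = \Pr[\mathcal{A}(x) \in \mathcal{F}_i]$, the probability that the algorithm outputs a graph in family $i$. Because the output always lies in $\hat{\mathcal{F}} = \mathcal{F}_1 \cup \dots \cup \mathcal{F}_r$ and the coloring sets are pairwise distinct, we have $\sum_{i=1}^r p_i(x) = 1$ for every $x$. The success probability then reads $\Pr[\mathcal{A}(\Gamma) \in \mathcal{F}_{i^*}] = \sum_{x} \Pr_{\pi_{G^*}}[\Gamma = x]\, p_{i^*}(x)$, which is structurally identical to the expression appearing in the proof of Fact~\ref{fact:lb_eta} with $\hat{G}_{i^*}$ replaced by $\mathcal{F}_{i^*}$ and the event $\{\mathcal{A}=G^*\}$ replaced by $\{\mathcal{A}\in\mathcal{F}_{i^*}\}$.

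Next, using the containments $\OO_{\mathcal{F}_r} \subseteq \OO_{\mathcal{F}_{i^*}} \subseteq \OO_{\mathcal{F}_1}$, I would split the sum over sample sequences into those lying entirely in $\OO_{\mathcal{F}_r}^L$ and the remainder $\mathcal{T} = \OO_{\mathcal{F}_{i^*}}^L \setminus \OO_{\mathcal{F}_r}^L$. The $\mathcal{T}$-contribution is bounded by $1 - (|\OO_{\mathcal{F}_r}|/|\OO_{\mathcal{F}_{i^*}}|)^L \le 1 - (1-\eta)^L \le L\eta$, exactly as in Fact~\ref{fact:lb_eta}, where the first inequality uses $|\OO_{\mathcal{F}_{i^*}}| \le |\OO_{\mathcal{F}_1}|$ and the last is Bernoulli's inequality. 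Summing the resulting bound over $i^* = 1,\dots,r$ and invoking $\sum_{i^*} p_{i^*}(x) = 1$ for each fixed $x$, the hypothesis that the success probability is at least $1/r + \alpha$ yields $1 + r\alpha \le 1 + rL\eta$, whence $L \ge \alpha/\eta$.

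The only step requiring genuine care is the reduction encapsulated in the first paragraph: one must verify that the success event of equivalent-structure learning coincides exactly with membership in the correct family $\mathcal{F}_{i^*}$, and that the output distribution satisfies $\sum_i p_i(x) = 1$. Both follow from the distinctness of the coloring sets $\OO_{\mathcal{F}_i}$ together with the constraint that $\mathcal{A}$ outputs graphs in $\hat{\mathcal{F}}$. Once this identification is in place, the counting argument of Fact~\ref{fact:lb_eta} transfers verbatim with single graphs replaced by families, and the remaining estimates are routine.
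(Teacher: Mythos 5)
Your proposal is correct and follows essentially the same route as the paper's own proof: express the success probability as a sum over sample sequences, split over $\OO_{\mathcal{F}_r}^L$ versus its complement, bound the latter contribution by $L\eta$ via $1-(1-\eta)^L \le L\eta$, and sum over the $r$ families using $\sum_i \Pr[\mathcal{A}(x)\in\mathcal{F}_i]=1$. The one point you flag as needing care---that the success event coincides with membership in $\mathcal{F}_{i^*}$ and that the families partition $\hat{\mathcal{F}}$---is indeed the right reduction and follows, as you say, from the pairwise distinctness of the coloring sets.
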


\noindent
Note that Fact \ref{fact:lb_eta} corresponds to the special case where each family $\mathcal{F}_i$ contains a single graph, and thus it follows immediately from Fact \ref{fact:lb_eta:general}.
We are now ready to prove Theorem~\ref{thm:lb_coloring_n}.

\newenvironment{myproof-thm:lb_coloring_n}{\paragraph{\textbf{Proof of Theorem \ref{thm:lb_coloring_n}}}}{\hfill$\square$}
\begin{myproof-thm:lb_coloring_n}
	Let $G^{(1)}=G_{m,t}$ and $G^{(l)}=G_{m,t}\cup M$ where $M$ is as in (\ref{eq:family-def}) and $l=2^m$. Let $\mathcal{F}_j$ be the class of graphs that contains all the graphs in $\mathcal{F}_{m,t}$ that has the same set of colorings as $G^{(j)}$. (Recall that $\mathcal{F}_{m,t}$ is the set of graphs in $\mathcal{G}(n,d)$ that have the same set of colorings as some graph in $\mathcal{G}_{m,t}$.) Let $\OO_{\mathcal{F}_j}=\OO_{G^{(j)}}$ for all $j$. Note that $\OO_{\mathcal{F}_l} \subseteq \OO_{\mathcal{F}_j} \subseteq \OO_{\mathcal{F}_1}$ for all $1\leq j\leq l$.
	Let
	$$\eta = 1 - \frac{|\OO_{\mathcal{F}_l}|}{|\OO_{\mathcal{F}_1}|}.$$
	We establish a lower bound $\eta$ and then apply Fact~\ref{fact:lb_eta:general} to prove the theorem.
	
	By Lemma \ref{lemma:lb_coloring} each $q$-coloring of $G_{m,t}$ (i.e., of $G^{(1)}$) is determined by the colors of the independent sets $I_1,I_1'$ and of the vertices $s_1,\ldots,s_m$. Then, 
	the number of $q$-colorings of $G^{(1)}$ where all vertices in $I_1$ and $I_1'$ receive the same color is equal to $q(q-1)^m[(q-1)!]^{2m}$, since there are $q$ choices for the color of $I_1$ and $I_1'$, $q-1$ choices for the color of each $s_i$, and $(q-1)!$ colorings for each $C_i$ and $C_i'$. Similarly, the number of colorings where $I_1$ and $I_1'$ receive distinct colors is equal to $q(q-1)(q-2)^m[(q-1)!]^{2m}$. Thus, the probability that in a uniform random $q$-coloring of $G^{(1)}$ the vertices in $I_1$ and $I_1'$ have the same color is 
	\begin{align*}
	\frac{q(q-1)^m[(q-1)!]^{2m}}{q(q-1)^m[(q-1)!]^{2m}+q(q-1)(q-2)^m[(q-1)!]^{2m}} 
	&= 1 - \frac{(q-2)^m}{(q-1)^{m-1}+(q-2)^m} \\
	&\geq 1 - (q-1)\Big(\frac{q-2}{q-1}\Big)^{m} \\
	&\geq 1 - (q-1)\e^{-\frac{m}{q-1}}.
	\end{align*}
	
	Let $\sS$ be a $q$-coloring of $G^{(1)}$ where $I_1$ and $I_1'$ receive the same color. 
	Then, Lemma~\ref{lemma:lb_coloring} implies that  $\sS(x_i) \neq \sS(y_i)$ for all $1\leq i\leq m$, since $C_i$ and $I_i'$ have distinct colors. 
	(Recall that $x_i \in C_i$ and $y_i \in I_i'\bl \partial s_i$ are the vertices used to define the graph family $\mathcal{G}_{m,t}$).
	Therefore, $\sS$ is a proper $q$-coloring of $G^{(l)}$; hence $\sS\in\OO_{\mathcal{F}_l}$ and
	\begin{equation*}
	\frac{|\OO_{\mathcal{F}_l}|}{|\OO_{\mathcal{F}_1}|} = {\Pr}_{\pi_{\mathcal{F}_1}}[\sigma \in \OO_{\mathcal{F}_l}] \geq {\Pr}_{\pi_{\mathcal{F}_1}} [\sS(I_1)=\sS(I_1')] \geq 1 - (q-1)\e^{-\frac{m}{q-1}}.
	\end{equation*}
	(Note that $\pi_{\mathcal{F}_1}=\pi_{G^{(1)}}=\pi_{G_{m,t}}$.) Then,
	\begin{equation}
	\label{eq:eta}
	\eta = 1-\frac{|\OO_{\mathcal{F}_l}|}{|\OO_{\mathcal{F}_1}|} \leq (q-1){\e}^{-\frac{m}{q-1}}.
	\end{equation}
	
	Every graph in $\mathcal{G}_{m,t}$ is a distinct supergraph of $G_{m,t}$ and $|\mathcal{G}_{m,t}| = 2^m$.
	Moreover, for any $G = (V,E) \in \mathcal{G}_{m,t}$ 
	and any $1\le i \le m$
	such that $\{x_i,y_i\} \not\in E$,
	there are $q$-colorings of $G$ where $x_i$ and $y_i$ are assigned the same color.
	Consequently, for any  $G, G' \in \mathcal{G}_{m,t}$, we have $\OO_G\neq \OO_{G'}$ whenever $G\neq G'$. 
	Then, $\OO_{\mathcal{F}_i} \neq \OO_{\mathcal{F}_j}$ for any $i\neq j$, and by definition all the graphs in $\mathcal{F}_i$ are distinct for each $i$. 
	Therefore, Fact~\ref{fact:lb_eta:general} implies that to equivalently learn any $G\in\mathcal{F}_{m,t}$ with probability at least $2^{-m}+\aA$, the number of random samples needed is $L \geq \aA/\eta$.
	Setting
	\begin{equation*}
	\aA=q\e^{-\frac{m}{2(q-1)}}-2^{-m} > 0,
	\end{equation*}
	we get that to equivalently learn a graph  $G\in\mathcal{F}_{m,t}$ with success probability at least $q\e^{-\frac{m}{2(q-1)}}$, we require
	\begin{equation*}
	L \geq \frac{q\e^{-\frac{m}{2(q-1)}}-2^{-m}}{(q-1)\e^{-\frac{m}{q-1}}} \geq \frac{q\e^{-\frac{m}{2(q-1)}}-\e^{-\frac{m}{2(q-1)}}}{(q-1)\e^{-\frac{m}{q-1}}} \ge \e^{\frac{m}{2(q-1)}}
	\end{equation*}
	where the second inequality follows from $2\geq \e^{\frac{1}{2(q-1)}}$ when $q\geq 3$.
\end{myproof-thm:lb_coloring_n}
\vspace{1em}

\noindent
The following corollary of Theorem \ref{thm:lb_coloring_n} corresponds to Part \ref{part:col-exp} of Theorem \ref{thm:col}.

\begin{corollary}
	\label{cor:intro-strong-bound}
	Let $q,n,d\in\mathbb{N}^+$ such that $3 \le q < d - \sqrt{d}+\Theta(1)$ and $ n \ge 2q+2\lceil \sqrt{q-1}\, \rceil - 1$. 
	Then, there exists a constant $c>0$ such that
	any equivalent-structure learning algorithm 
	for $q$-colorings 
	and the graph family $\mathcal{G}(n,d)$ that
	succeeds with probability at least $\exp(-cn)$ 
	requires at least $\exp(cn)$ samples.
\end{corollary}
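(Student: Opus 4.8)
The plan is to derive the corollary directly from Theorem~\ref{thm:lb_coloring_n} by instantiating the free parameters $m$ and $t$ in terms of $n$, $q$, and $d$, and then transferring the lower bound for the family $\mathcal{F}_{m,t}$ to the larger family $\mathcal{G}(n,d)$. I would set $t = \lceil\sqrt{q-1}\,\rceil$, exactly as in the theorem. The first point to check is that under the hypothesis $q < d - \sqrt{d} + \Theta(1)$ the maximum degree of every graph in $\mathcal{F}_{m,t}$ is at most $d$, so that indeed $\mathcal{F}_{m,t} \subseteq \mathcal{G}(n,d)$ and the lower bound is about the right family. By Fact~\ref{fact:graph:family} this degree is $\max\{q+t,\, q + \lceil (q-1)/t\rceil + 1\}$. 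With $t = \lceil\sqrt{q-1}\,\rceil$ both $t$ and $\lceil (q-1)/t\rceil$ equal $\sqrt{q} + O(1)$, so the degree is $q + \sqrt{q} + \Theta(1)$; choosing the additive $\Theta(1)$ in the statement of the corollary to match this constant, and using $d > q$ (hence $\sqrt{d} > \sqrt{q}$), the hypothesis $q < d - \sqrt{d} + \Theta(1)$ yields $d \ge q + \sqrt{q} + \Theta(1)$, which is precisely the required degree bound.

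Next I would fix the number of blocks $m$. Since each graph of $\mathcal{G}_{m,t}$ has $(2q+2t-1)m$ vertices, I would take $m = \lfloor n/(2q+2t-1)\rfloor$, which is $\ge 1$ exactly because $n \ge 2q + 2\lceil\sqrt{q-1}\,\rceil - 1$, and pad each graph in the family with the remaining $n - (2q+2t-1)m < 2q+2t-1$ vertices as isolated vertices. Isolated vertices multiply the number of $q$-colorings of every graph in the family by the same factor $q^{\,n-(2q+2t-1)m}$, so the ratio $\eta = 1 - |\OO_{\mathcal{F}_l}|/|\OO_{\mathcal{F}_1}|$, and hence the entire argument behind Theorem~\ref{thm:lb_coloring_n}, is unaffected; the padded family still lies in $\mathcal{G}(n,d)$.

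I would then transfer the bound. Any equivalent-structure learning algorithm for $\mathcal{G}(n,d)$ is in particular an equivalent-structure learning algorithm for the sub-family $\mathcal{F}_{m,t} \subseteq \mathcal{G}(n,d)$: on input drawn from $\pi_{G^*}$ with $G^* \in \mathcal{F}_{m,t}$ it outputs some $G' \in \mathcal{G}(n,d)$ with $\OO_{G'} = \OO_{G^*}$, and since the equivalence classes $\mathcal{F}_j$ have pairwise distinct coloring sets this output identifies the class of $G^*$, which is all that Fact~\ref{fact:lb_eta:general} (and thus Theorem~\ref{thm:lb_coloring_n}) requires. It then remains to rewrite the conclusion in the $\exp(\pm cn)$ form: with $q$ and $t$ fixed we have $m = \Theta(n)$, so the success-probability threshold $q\,\e^{-m/(2(q-1))}$ is at most $\e^{-cn}$ and the sample lower bound $\e^{m/(2(q-1))}$ is at least $\e^{cn}$, for any constant $c = c(q)$ with $0 < c < 1/(2(q-1)(2q+2t-1))$ and all sufficiently large $n$ (the leading factor $q$ being absorbed by taking $c$ slightly smaller and $n$ large).

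I expect no serious obstacle, since the essential difficulty is already resolved in Theorem~\ref{thm:lb_coloring_n}; the only steps needing care are bookkeeping ones — matching the additive $\Theta(1)$ in the degree hypothesis to the actual gadget degree $q + \sqrt{q} + \Theta(1)$, and verifying that padding to exactly $n$ vertices preserves both the degree bound and the value of $\eta$. The mildly delicate part is confirming that $q < d - \sqrt{d} + \Theta(1)$ is the correct inequality to guarantee $d \ge \max\{q+t,\, q+\lceil (q-1)/t\rceil+1\}$ uniformly in $q$, which is exactly where the choice $t = \lceil\sqrt{q-1}\,\rceil$, minimizing the larger of the two competing degree terms, is used.
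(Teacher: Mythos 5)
Your proposal is correct and follows essentially the same route as the paper: instantiate Theorem~\ref{thm:lb_coloring_n} with $t=\lceil\sqrt{q-1}\,\rceil$ and $m\approx n/(2q+2t-1)$, verify via Fact~\ref{fact:graph:family} that $\mathcal{F}_{m,t}\subseteq\mathcal{G}(n,d)$ under the degree hypothesis, and transfer the lower bound using the fact that $\mathcal{F}_{m,t}$ already contains every graph of $\mathcal{G}(n,d)$ equivalent to one in $\mathcal{G}_{m,t}$. The only (immaterial) difference is the padding when $2q+2t-1\nmid n$: you use isolated vertices (multiplying all coloring counts by $q^{r}$), while the paper appends a pendant path at $s_m$ (multiplying them by $(q-1)^{r}$); both leave $\eta$ unchanged.
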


\begin{proof}
	Let $k = 2q+2\lceil \sqrt{q-1}\, \rceil -1$. If $k$ divides $n$, then take 
	$m = n/k$.
	By Fact \ref{fact:graph:family},
	every graph in $\mathcal{G}_{m,t}$ has $n=mk$ vertices
	and maximum degree $q+\lceil \sqrt{q-1}\, \rceil + 1$ and so
	$\mathcal{G}_{m,t} \subseteq \mathcal{F}_{m,t} \subseteq \mathcal{G}(n,d)$ provided $d \ge q+\lceil \sqrt{q-1}\, \rceil + 1$.
	Theorem~\ref{thm:lb_coloring_n} implies that
	there exists $c = c(q) > 0$
	such that
	any equivalent-structure learning algorithm for $\mathcal{F}_{m,t}$
	with success probability at least $\exp(-cn)$ 
	requires $\exp(cn)$ samples.
	By definition, the set of $q$-colorings of any graph in $\mathcal{G}(n,d)\backslash \mathcal{F}_{m,t}$ is distinct from the set of $q$-colorings of any graph in $\mathcal{F}_{m,t}$. Since also $\mathcal{F}_{m,t}\subseteq \mathcal{G}(n,d)$, equivalent-structure learning in $\mathcal{G}(n,d)$ is strictly harder than in $\mathcal{F}_{m,t}$. 
	Specially, any equivalent-structure learning algorithm for $ \mathcal{G}(n,d)$
	with success probability at least $\exp(-cn)$ 
	requires at least $\exp(cn)$ samples. 
	Note that $d \ge q+\lceil \sqrt{q-1}\, \rceil + 1$ implies
	$q < d - \sqrt{d} + \Theta(1)$.
	
	The result follows in similar fashion when
	$k$ does not divide $n$, but we are required to modify slightly the graph families $\mathcal{G}_{m,t}$ and $\mathcal{F}_{m,t}$.
	Suppose $n = km + r$ where $1 \le  r \le k-1$ and let $W = \{w_0,\dots,w_{r-1}\}$ be a simple path.
	For every $G \in \mathcal{G}_{m,t}$, 
	add $W$ and the edge $\edge{s_m}{w_0}$ to $G$ to obtain a graph $\hat{G}$.
	Let $\mathcal{\hat{G}}_{m,t}$ be the resulting graph family.
	Every graph in $\mathcal{\hat{G}}_{m,t}$ has exactly $n$ vertices and maximum degree $q+\lceil \sqrt{q-1}\, \rceil + 1$.
	Moreover, every $q$-coloring of $G \in \mathcal{G}_{m,t}$ corresponds to exactly $(q-1)^r$ colorings of $\hat{G} \in \mathcal{\hat{G}}_{m,t}$.
	Define $\mathcal{\hat{F}}_{m,t}$ as before; i.e., $\mathcal{\hat{F}}_{m,t}$ is the set of all graphs
	in $\mathcal{G}(n,d)$ that have the same set of colorings as some graph in $\mathcal{\hat{G}}_{m,t}$.
	The argument in the proof of Theorem \ref{thm:lb_coloring_n} 
	and Fact \ref{fact:lb_eta:general}
	imply that 
	any equivalent-structure learning algorithm for $\mathcal{\hat{F}}_{m,t}$
	with success probability at least $\exp(-cn)$ 
	requires $\exp(cn)$ independent samples, where $c=c(q) > 0$ is a suitable constant.
	Since $\mathcal{\hat{G}}_{m,t} \subseteq \mathcal{\hat{F}}_{m,t} \subseteq \mathcal{G}(n,d)$ for $d \ge q+\lceil \sqrt{q-1}\, \rceil + 1$, the result follows.
\end{proof} 
\noindent
We conclude this section with the proofs of Lemma~\ref{lemma:lb_coloring}, Fact~\ref{fact:graph:family} and Fact~\ref{fact:lb_eta:general}.

\newenvironment{myproof-lemma:lb_coloring}{\paragraph{\textbf{Proof of Lemma \ref{lemma:lb_coloring}}}}{\hfill$\square$}
\begin{myproof-lemma:lb_coloring}
	Let $\sigma$ be a $q$-coloring of $G_{m,t}$.
	For $1\leq i< m$, since every vertex in $I_i$ is adjacent to every vertex in $C_i$, all vertices in $I_i$ have the same color in $\sigma$, which is the one color not used by $C_i$. 
	Moreover, every vertex in $C_{i+1}$ is adjacent to a vertex of $I_i$, and so $C_{i+1}$ is colored with the same set of $q-1$ colors as $C_i$ in $\sigma$. Then, every clique $C_1,\ldots,C_m$ is colored with the same set of $q-1$ colors and every independent set $I_1,\ldots,I_m$ is colored with the one remaining color in $\sigma$.
	The same holds for $C_1',\ldots,C_m'$ and $I_1',\ldots,I_m'$ as well.
\end{myproof-lemma:lb_coloring}
\vspace{0.2em}

\newenvironment{myproof-fact:graph:family}{\paragraph{\textbf{Proof of Fact \ref{fact:graph:family}}}}{}
\begin{myproof-fact:graph:family}
	The number of vertices in $G_{m,t}$ is $m(|C_1|+|C_1'|+|I_1|+|I_1'|+1)=(2q+2t-1)m$. The degree of the vertices in the cliques $C_i$ or $C_i'$ is at most $q-2+t+1=q-1+t$. 
	Moreover, the degree of the vertices in the independent sets $I_i$ or $I_i'$ is at most $q-1+\lceil (q-1)/t \rceil +1=q+\lceil (q-1)/t \rceil$. Thus, the maximum degree of $G_{m,t}$ is no more than
	\begin{equation*}
	\max \big\{q+t-1, q+\big\lceil (q-1)/t\, \big\rceil \big\}.
	\end{equation*}
	Therefore, the maximum degree of any graph in $\mathcal{G}_{m,t}$ is at most
	\begin{equation*}
	\max \big\{q+t, q+\big\lceil (q-1)/t\, \big\rceil +1\big\}. \tag*{$\square$}
	\end{equation*}
\end{myproof-fact:graph:family}
\vspace{-1.3em}

\newenvironment{myproof-fact:lb_eta:general}{\paragraph{\textbf{Proof of Fact \ref{fact:lb_eta:general}}}}{\hfill$\square$}
\begin{myproof-fact:lb_eta:general}
	Let $\mathcal{A}$ be any (possibly randomized) equivalent-structure learning algorithm that, given $L$ independent samples $\Gamma = (\sS^{(1)},\ldots,\sS^{(L)}) \in \OO_{\mathcal{F}_i}^L$ from an unknown distribution $\pi_{\mathcal{F}_i}=\pi_{G^*}$ for some $G^*\in\mathcal{F}_i$, outputs a graph $\mathcal{A}(\Gamma)$ in $\mathcal{\hat{F}}$. 
	For any $G^*$, the probability that $\mathcal{A}$ equivalently learns the graph given $L$ independent samples from $\pi_{\mathcal{F}_i}$ is
	\begin{align*}
	\Pr [ \mathcal{A}(\Gamma) \in \mathcal{F}_i ]
	={}\sum_{x \in \OO_{\mathcal{F}_i}^L} {\Pr} [\Gamma=x]
	\Pr[\mathcal{A}(x) \in \mathcal{F}_i].
	\end{align*}
	Recall that by assumption $\OO_{\mathcal{F}_r}\subseteq \OO_{\mathcal{F}_i}$. Let $\mathcal{T}$ be the set of all sample sequences $\sS^{(1)},\ldots,\sS^{(L)}$
	such that $\sS^{(j)} \not\in \OO_{\mathcal{F}_r}$ for at least one $j$; namely, $\mathcal{T} = \OO_{\mathcal{F}_i}^L \bl \OO_{\mathcal{F}_r}^L$. 
	Note that 
	$$|\mathcal{T}| = |\OO_{\mathcal{F}_i}|^L - |\OO_{\mathcal{F}_r}|^L.$$
	Then,
	\begin{align*}
	{\Pr} [ A(\Gamma) \in \mathcal{F}_i]
	=& \sum_{x \in \OO_{\mathcal{F}_r}^L} \frac{1}{|\OO_{\mathcal{F}_i}|^L} \cdot \Pr[\mathcal{A}(x) \in \mathcal{F}_i] + \sum_{x \in \mathcal{T}} \frac{1}{|\OO_{\mathcal{F}_i}|^L} \cdot \Pr[\mathcal{A}(x) \in \mathcal{F}_i]\\
	\leq& \frac{1}{|\OO_{\mathcal{F}_r}|^L} \sum_{x \in \OO_{\mathcal{F}_r}^L} \Pr[\mathcal{A}(x) \in \mathcal{F}_i] +  \frac{|\OO_{\mathcal{F}_i}|^L - |\OO_{\mathcal{F}_r}|^L}{|\OO_{\mathcal{F}_i}|^L}.
	\end{align*}
	Since $\OO_{\mathcal{F}_i} \subseteq \OO_{\mathcal{F}_1}$, we get
	\begin{equation*}
	\frac{|\OO_{\mathcal{F}_i}|^L - |\OO_{\mathcal{F}_r}|^L}{|\OO_{\mathcal{F}_i}|^L} = 1 - \frac{|\OO_{\mathcal{F}_r}|^L}{|\OO_{\mathcal{F}_i}|^L} \leq 1 - \frac{|\OO_{\mathcal{F}_r}|^L}{|\OO_{\mathcal{F}_1}|^L} = 1-(1-\eta)^L \leq L\eta. 
	\end{equation*}
	Suppose the equivalent-structure learning algorithm $\mathcal{A}$ has success probability at least $1/r+\alpha$, then 
	\begin{equation*}
	{\Pr} [ \mathcal{A}(\Gamma) \in \mathcal{F}_i] \geq \frac{1}{r}+\aA.
	\end{equation*}
	Hence,
	\begin{equation*}
	\frac1r+\aA \leq \frac{1}{|\OO_{\mathcal{F}_r}|^L} \sum_{x \in \OO_{\mathcal{F}_r}^L} \Pr[\mathcal{A}(x) \in \mathcal{F}_i] + L\eta.
	\end{equation*}
	Since $\sum_{i=1}^r \pr[\mathcal{A}(x) \in \mathcal{F}_i] = 1$ for any fixed sample sequence $x$, summing up over $i$ we get
	\begin{gather*}
	1+r\aA \leq \frac{1}{|\OO_{\mathcal{F}_r}|^L} \sum_{x \in \OO_{\mathcal{F}_r}^L} \sum_{i=1}^r \Pr[\mathcal{A}(x) \in \mathcal{F}_i] + rL\eta = 1+rL\eta.
	\end{gather*}
	Hence, $L\geq\aA/\eta$ as claimed.
\end{myproof-fact:lb_eta:general}
\vspace{1em}

\noindent
Combining the results from Sections \ref{subsection:efficient-coloring}, \ref{subsection:identifiability} and \ref{subsection:strong-lower}, we obtain the proof of Theorem~\ref{thm:col}: Part \ref{part:1} follows from Lemmas \ref{lemma:colorings:main} and \ref{lemma:hard-contraints:main}; Part~\ref{part:2} from Theorem \ref{thm:id-col}; and Corollary \ref{cor:intro-strong-bound} implies Part~\ref{part:col-exp} of the theorem.

\subsection{General lower bound for structure learning of $q$-colorings}\label{sec:wlb}
When $d-\sqrt{d} + \Theta(1) \le q \le d$
structure learning for $q$-colorings is not identifiable, 
and the strong lower bound from Part~\ref{part:col-exp} of Theorem \ref{thm:id-col} (i.e., Corollary \ref{cor:intro-strong-bound}) does not apply either. In this subsection we establish a weaker but more general lower bound for proper colorings that applies in this regime. Specifically, we provide a family of graphs $\mathcal{F} \subseteq \mathcal{G}(n,d)$ such that the number of random $q$-colorings required to learn any graph in $\mathcal{F} $ with success probability at least $1/2$ is $\exp(\Omega(d-q))$.

\begin{theorem}
	\label{thm:intro-weak-bound}
	Let $d,q,n \in\mathbb{N}^+$ such that $3 \le q<d$ and $n \ge d+2$. Then,
	any equivalent-structure learning algorithm for $\mathcal{G}(n,d)$ with success probability at least $1/2$
	requires at least $\exp(\Omega(d-q))$ samples.  
\end{theorem}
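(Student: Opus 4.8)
The plan is to exhibit, inside $\mathcal{G}(n,d)$, a small family of graphs that all share \emph{almost} the same set of $q$-colorings and then invoke the sample-complexity lower bound of Fact~\ref{fact:lb_eta:general}. Concretely, I would build a gadget on roughly $d$ vertices containing two non-adjacent vertices $u,v$ that receive the \emph{same} color with overwhelming probability, the disagreement probability decaying like $2^{-\Omega(d-q)}$. The key device is a single \emph{reinforcement} vertex $r$ adjacent to both $u$ and $v$: if I attach $r$ to a clique leaving it exactly the three free colors $\{a,b,c\}$, while $u,v$ are attached to a $(q-2)$-clique leaving them the two free colors $\{a,b\}$, then $r$ has two admissible colors when $\sigma(u)=\sigma(v)$ but only one when $\sigma(u)\neq\sigma(v)$. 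Thus each such $r$ biases the pair toward agreement by a factor of $2$, \emph{independently of $q$}; using $k=\Theta(d-q)$ reinforcement vertices (all sharing one clique, so the gadget still fits on $\approx d$ vertices with maximum degree $\le d$) yields $\Pr_{\pi_G}[\sigma(u)\neq\sigma(v)] = 1/(2^k+1) \le 2^{-\Omega(d-q)}$.

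With this gadget in hand, the second step is to turn ``$u,v$ almost always equal'' into a toggleable edge with tiny effect on the coloring set, mirroring the non-identifiability construction of Theorem~\ref{thm:id-col}. I attach a fresh vertex $w$ adjacent only to $v$ and consider toggling the edge $\edge{u}{w}$. Since $w$ must differ from $v$, whenever $\sigma(u)=\sigma(v)$ (the typical case) we automatically get $\sigma(u)\neq\sigma(w)$; hence $\Pr_{\pi_G}[\sigma(u)=\sigma(w)] \le \Pr_{\pi_G}[\sigma(u)\neq\sigma(v)] \le 2^{-\Omega(d-q)}$, while a single explicit coloring (take $\sigma(u)=a$, $\sigma(v)=b$, all $r$'s colored $c$, and $\sigma(w)=a$) shows this probability is strictly positive, so adding $\edge{u}{w}$ genuinely changes the set of $q$-colorings. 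Writing $G_0$ for the gadget and $G_1=G_0\cup\edge{u}{w}$, the two graphs lie in different equivalence classes yet $\eta = 1-|\OO_{G_1}|/|\OO_{G_0}| = \Pr_{\pi_G}[\sigma(u)=\sigma(w)] \le 2^{-\Omega(d-q)}$.

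To meet the ``$\ge 1/2$'' success guarantee I would attach \emph{two} pendant vertices $w,w'$ to $v$ and toggle $\edge{u}{w}$ and $\edge{u}{w'}$ independently, producing $r=4$ graphs whose coloring sets are nested between the base graph (no toggled edges) and the top graph (both edges present), with $\eta$ still bounded by $2^{-\Omega(d-q)}$ via a union bound. Taking each $\mathcal{F}_i$ to be the class of all graphs in $\mathcal{G}(n,d)$ sharing one of these four coloring sets, Fact~\ref{fact:lb_eta:general} with success probability $1/2 = 1/r + 1/4$ gives $\alpha=1/4$ and hence $L \ge \alpha/\eta \ge \exp(\Omega(d-q))$. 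A degree count shows $u$, $v$ and the clique vertices all have degree at most $d$ and that the construction uses exactly $d+2$ vertices; for $n>d+2$ I would pad with a simple path hung off one vertex, as in the proof of Corollary~\ref{cor:intro-strong-bound}, which multiplies every coloring count by the same factor and leaves $\eta$ unchanged.

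The step I expect to be most delicate is the gadget design. The obvious ``coupon-collector'' approach---forcing $u$ toward a fixed color by attaching a large independent set that tends to use every remaining color---only suppresses disagreement at rate $\tfrac{q-2}{q-1}$ per auxiliary vertex, i.e.\ $\exp(-\Theta(t/q))$, which is far too weak precisely when $q$ is close to $d$, the regime this theorem targets. Obtaining a \emph{universal} constant-factor suppression per auxiliary vertex, so that the slack $d-q$ alone drives the exponent, is what makes the reinforcement-clique gadget necessary; verifying that the per-reinforcement count is exactly $2{:}1$ while simultaneously keeping every degree at most $d$ and the total vertex budget at $d+2$ (which also explains why two toggled edges, giving $r=4>2$, are needed to extract a bound from the bare $1/2$ success probability) is where the bookkeeping must be carried out carefully.
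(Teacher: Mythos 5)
Your proposal is correct and follows essentially the same route as the paper: your ``reinforcement'' vertices are exactly the paper's independent set $I$ of size $d-q+1$, each adjacent to $u$, $v$, and a shared $(q-3)$-clique $C$ so that it has two admissible colors when $\sigma(u)=\sigma(v)$ and one otherwise, yielding the same $q$-independent factor-$2$ amplification, the same two toggled pendant edges at $v$ giving $r=4$ nested families, the same invocation of Fact~\ref{fact:lb_eta:general} with $\alpha=1/4$, and the same path-padding. The only (immaterial) discrepancy is in your gadget bookkeeping: with a single shared clique on a $d+2$-vertex budget, $u$ and $v$ end up with three free colors rather than two, which changes the disagreement probability from $1/(2^k+1)$ to $2/(2^k+2)$ but leaves the $\exp(\Omega(d-q))$ bound intact.
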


\noindent
Let $d,q,n \in\mathbb{N}^+$ such that $3 \le q<d$ and $n \ge d+2$. 
Let $C$ be a clique of size $q-3$, let $I$ be an independent set of size $d-q+1$ and
let $W = \{w_1,\dots,w_{n-d-1}\}$ be a simple path. 
Also, let $u,v,w$ be three additional vertices that are not in $C$, $I$ or $W$. 
Define the graph $G=(V,E)$ such that 
$$V= V(C) \cup V(I)\cup  \{w_1,\dots,w_{n-d-1}\} \cup \{u,v,w\},$$
where $V(C)$ and $V(I)$ are the vertices in $C$ and $I$, respectively. 
In addition to the edges in $C$ and $W$, $G$ has the following edges:
\begin{enumerate}
	\item every vertex in $C$ is adjacent to every vertex in $I$;
	\item $u$ and $v$ are adjacent to every vertex in $C$ and $I$;
	\item $w,w_1$ are adjacent to $v$; 
\end{enumerate}
see Figure \ref{fig:d-q}.

\begin{figure}[t]
	\centering
	\begin{tikzpicture}[scale=0.65]
	\node(u) at (-2,0) {$u$};
	\node(v) at (2,0) {$v$};
	\node(w1) at (4,1) {$w_1$};
	\node(w2) at (5.5,1) {$w_2$};
	\node(w3) at (7,1) {$w_3$};
	\node(w4) at (9.3,1) {};
	\node(w) at (4,-1) {$w$};
	\node(C)[circle, draw=black] at (0,2) {$C$};
	\node(I)[shape=circle, draw=black] at (0,-2) {$I$};
	
	\path
	(u) edge (C)
	(u) edge (I)
	(v) edge (C)
	(v) edge (I)
	(C) edge (I)
	(v) edge (w1)
	(v) edge (w)
	(w1) edge (w2)
	(w2) edge (w3)
	(w3) edge[white] node[black]{$\cdots\cdots$} (w4);
	\end{tikzpicture}
	\caption{The graph $G$ where $C=K_{q-3}$ and $I$ is an independent set of size $d-q+1$.}
	\label{fig:d-q}
\end{figure}
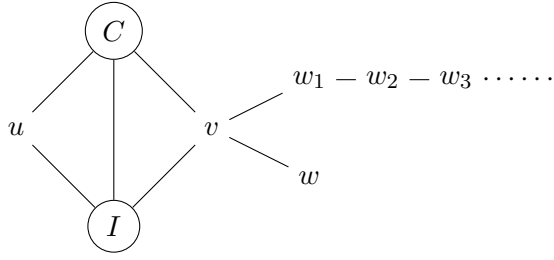

Let 
$$\mathcal{G} = \{G_1=G,G_2=(V,E\cup\{uw\}),G_3=(V,E\cup\{uw_1\}),G_4=(V,E\cup\{uw,uw_1\})\}.$$
Note that every graph in $\mathcal{G}$ is an $n$-vertex graph of maximum degree at most $d$ and so $\mathcal{G} \subseteq \mathcal{G}(n,d)$. 
Furthermore, for $1\leq i\leq 4$ let $\mathcal{F}_i$ be the family of all graphs in $\mathcal{G}(n,d)$ that have the same set of $q$-colorings as $G_i$ and let $\mathcal{F} = \bigcup_{i=1}^4\mathcal{F}_i$. 
The following theorem immediately implies Theorem~\ref{thm:intro-weak-bound}.

\begin{theorem}
	\label{thm:lb_coloring_d-q}
	Let $d,q,n \in\mathbb{N}^+$ such that $3 \le q<d$ and $n \ge d+2$. Then
	the number of independent random $q$-colorings required to learn any graph in $\mathcal{F}$  with probability at least $1/2$ is $\exp(\Omega(d-q))$.
\end{theorem}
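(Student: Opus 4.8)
The plan is to invoke Fact~\ref{fact:lb_eta:general} with $r=4$, using $G_1,\dots,G_4$ as representatives of the families $\mathcal{F}_1,\dots,\mathcal{F}_4$ (so that $\OO_{\mathcal{F}_i}=\OO_{G_i}$). As edge sets we have $G_1\subseteq G_2,G_3\subseteq G_4$, which gives the nesting required by the fact, namely $\OO_{G_4}\subseteq\OO_{G_i}\subseteq\OO_{G_1}$ for all $i$, with $G_1$ largest and $G_4$ smallest. Taking success probability $1/2=1/4+1/4$, i.e.\ $\alpha=1/4$, the fact yields $L\ge\alpha/\eta=1/(4\eta)$ where $\eta=1-|\OO_{G_4}|/|\OO_{G_1}|$. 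Thus the whole theorem reduces to two tasks: (i) verifying that $\OO_{G_1},\dots,\OO_{G_4}$ are pairwise distinct (so that $r=4$ is legitimate), and (ii) proving $\eta=\exp(-\Omega(d-q))$.

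The core of the argument is a structural description of the $q$-colorings of $G_1$. Since $C$ is a clique of size $q-3$, in every $q$-coloring it occupies exactly $q-3$ colors and leaves a set $S$ of exactly three free colors. As $I$, $u$, and $v$ are each completely joined to $C$, all their colors lie in $S$, and since $u,v$ are joined to every vertex of $I$ we have $\sigma(u),\sigma(v)\in S\setminus\sigma(I)$. The decisive observation is that if $I$ uses two of the three free colors, then $S\setminus\sigma(I)$ is a single color, forcing $\sigma(u)=\sigma(v)$; because $w$ and $w_1$ are adjacent to $v$, this in turn forces $\sigma(w),\sigma(w_1)\neq\sigma(v)=\sigma(u)$. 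Hence the only colorings of $G_1$ deleted by adding the edges $\edge{u}{w}$ and $\edge{u}{w_1}$ — namely those with $\sigma(u)=\sigma(w)$ or $\sigma(u)=\sigma(w_1)$ — can occur only when $I$ is monochromatic (uses exactly one free color). Therefore
\[
\eta={\Pr}_{\pi_{G_1}}[\sigma(u)=\sigma(w)\ \text{or}\ \sigma(u)=\sigma(w_1)]\le{\Pr}_{\pi_{G_1}}[I\ \text{monochromatic}].
\]

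To bound the last probability I would condition on the coloring of $C$, which fixes $S$; by color-symmetry the conditional probability is the same for every such coloring and equals the unconditional one. The vertices $w,w_1,\dots,w_{n-d-1}$ attach only to $v$ and to one another along the path, so given $\sigma(v)$ the number of ways to color them is $(q-1)^{\,n-d}$, independent both of the value of $\sigma(v)$ and of the coloring of $(I,u,v)$. This path factor therefore cancels, and the conditional probability equals $N_1/(N_1+N_2)$, where $N_1$ counts colorings of $(I,u,v)$ with $I$ monochromatic and $N_2$ those with $I$ bichromatic (the trichromatic case leaves no color for $u,v$ and contributes $0$). A direct count with $|I|=d-q+1$ gives $N_1=3\cdot2\cdot2=12$ and $N_2=3(2^{|I|}-2)$, whence
\[
\eta\le\frac{12}{6+3\cdot 2^{\,d-q+1}}\le 2^{\,1-(d-q)}=\exp(-\Omega(d-q)),
\]
and combining with $L\ge1/(4\eta)$ gives the claimed $L=\exp(\Omega(d-q))$.

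Finally, the distinctness of $\OO_{G_1},\dots,\OO_{G_4}$ (task (i)) is easy using the monochromatic-$I$ colorings: fixing $I$ to a single free color, $u$ and $v$ can take the two remaining free colors with $\sigma(u)\neq\sigma(v)$, after which $\sigma(w)$ and $\sigma(w_1)$ may independently be set to equal or differ from $\sigma(u)$ (subject only to being $\neq\sigma(v)$ and the path constraint). This exhibits, for each edge-subset of $\{\edge{u}{w},\edge{u}{w_1}\}$, a coloring proper for that subset but violated by any strictly larger one, so $\OO_{G_4}\subsetneq\OO_{G_2},\OO_{G_3}\subsetneq\OO_{G_1}$ strictly and $\OO_{G_2}\neq\OO_{G_3}$. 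The one point requiring genuine care — and the main obstacle — is justifying the factorization of the path contribution that makes $\Pr[I\ \text{monochromatic}]$ independent of the core coloring of $C$; once that is established, the counting and the application of Fact~\ref{fact:lb_eta:general} are routine.
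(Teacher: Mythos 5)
Your proposal is correct and follows essentially the same route as the paper: both reduce to Fact~\ref{fact:lb_eta:general} with the four families $\mathcal{F}_1,\dots,\mathcal{F}_4$ and both rest on the observation that any coloring with $\sigma(u)=\sigma(v)$ survives the addition of $\edge{u}{w}$ and $\edge{u}{w_1}$ (since $w,w_1$ are adjacent to $v$), so that $\eta$ is controlled by the probability that $u$ and $v$ can differ. The only cosmetic difference is that you bound $\eta$ by $\Pr[I\ \text{monochromatic}]$ while the paper computes the ratio of colorings with $\sigma(u)=\sigma(v)$ to those with $\sigma(u)\neq\sigma(v)$ directly; your count and the paper's agree up to a factor of $2$, and your explicit verification of the pairwise distinctness of the $\OO_{G_i}$ is a detail the paper leaves implicit.
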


\begin{proof}
	If $u$ and $v$ receive distinct colors in a $q$-coloring of $G$, then the clique $C$ will be colored by $q-3$ of the $q-2$ colors not used by $u$ and $v$ and the number of available colors for every vertex in $I$ is only $1$. Thus, the number of colorings of $G$ where $u$ and $v$ receive distinct colors is 
	$q! (q-1)^{n-d}$. Otherwise, 
	if $u$ and $v$ receive the same color in a coloring of $G$, then $C$ will use $q-3$ of the $q-1$ available colors and every vertex in $I$ has $2$ available color choices. Hence, the number of such colorings is $q!(q-1)^{n-d} \cdot 2^{d-q}$. Any $q$-coloring of $G$ where $u$ and $v$ receive the same color is also a proper $q$-coloring of $G_4$.
	Therefore, we get
	\begin{equation*}
	\frac{|\OO_{\mathcal{F}_4}|}{|\OO_{\mathcal{F}_1}|} = \frac{|\OO_{G_4}|}{|\OO_G|} \geq \frac{q!(q-1)^{n-d} \cdot 2^{d-q}}{q!(q-1)^{n-d} + q!(q-1)^{n-d} \cdot 2^{d-q}} = 1-\e^{-\OO(d-q)}.
	\end{equation*}
	Thus, it follows from Fact~\ref{fact:lb_eta:general} that 
	any equivalent-structure learning algorithm for $\mathcal{F}$ that succeeds with probability at least $1/2$ requires
	\begin{equation*}
	L \geq \Big(\frac{1}{2}-\frac{1}{4}\Big)\Big(1-\frac{|\OO_{\mathcal{F}_4}|}{|\OO_{\mathcal{F}_1}|}\Big)^{-1} = \e^{\OO(d-q)}
	\end{equation*}
	samples.
\end{proof}
\noindent
Theorem \ref{thm:intro-weak-bound} follows immediately from Theorem \ref{thm:lb_coloring_d-q} and the fact that $\mathcal{F} \subseteq \mathcal{G}(n,d)$. 

\section{Learning $H$-colorings in Dobrushin uniqueness} 
\label{sec:Dobrushin}

As mentioned in the introduction,
our results in Section \ref{sec:colorings} for statistical identifiability, structure learning and equivalent-structure learning for proper colorings
reveal a tight connection between the computational hardness of these problems
and the uniqueness/non-uniqueness phase transition. 

In this section we explore this connection in a more general setting. 
For this we define the \textit{Dobrushin uniqueness condition},
which is a standard tool in statistical physics for establishing uniqueness of the Gibbs distribution in infinite graphs.

\begin{definition}
	\label{def:dob}
	Let $H$ be an arbitrary constraint graph and let $G=(V,E)$ be an $H$-colorable graph.
	For $w\in V$, let 
	\[   S_w := \{(\tau,\tau_w): \tau,\tau_w\in \{1,\dots,q\}^{|V|}~\textrm{and}~\tau(z)=\tau_w(z) ~\forall z\neq w \}.
	\] 
	For $v,w\in V$, let  
	\[  R_{vw} :=  \max_{(\tau,\tau_w) \in S_w}
	\|\pi_v(\cdot \mid \tau(\partial v)) - \pi_v(\cdot \mid \tau_w(\partial v))\|_{\textsc{tv}},
	\]
	where $\pi_v(\cdot| \tau(\partial v))$ and $\pi_v(\cdot| \tau_w(\partial v))$ 
	are the conditional distributions at $v$ 
	given the respective assignments $\tau$ and $\tau_w$ 
	on the neighbors of $v$.
	Let $  \alpha := \max_{v \in V} \sum_{w \in \partial v } R_{vw}$.
	When $\alpha < 1$,
	$\pi_G$ is said to satisfy the Dobrushin uniqueness condition.
\end{definition}
\noindent
We note that the Dobrushin uniqueness condition (typically) concerns soft-constraint systems on infinite graphs.
The definition we use here for hard-constraint models in finite graphs appeared in \cite{BD}; see also \cite{SS}.

The Dobrushin uniqueness condition implies the following key property. 

\begin{lemma}
	\label{lemma:dob:main}
	Let $H \neq K_q^+$ be an arbitrary constraint graph and
	suppose $\edge{i}{j} \not\in E(H)$. 
	Let $G=(V,E)$ be a graph such that
	$\pi_G$ satisfies the Dobrushin uniqueness condition.
	Then, for all $\edge{u}{v} \not\in E$
	\[\Pr[X_u=i,X_v = j] \ge \frac{(1-\alpha)^2}{q^{2}}.\]
\end{lemma}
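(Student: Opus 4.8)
The plan is to reduce the joint lower bound to a single-site marginal estimate and then apply that estimate twice via chain-rule conditioning. Writing
\[
\Pr[X_u = i, X_v = j] = \Pr[X_u = i]\cdot \Pr[X_v = j \mid X_u = i],
\]
I would first observe that fixing the color of $u$ can only decrease the influence entries $R_{vw}$ (pinning a vertex restricts the set of pairs $(\tau,\tau_w)$ over which each $R_{vw}$ is maximized), so the conditional distribution $\pi_G(\cdot\mid X_u = i)$ on $V\setminus\{u\}$ again satisfies the Dobrushin condition with parameter at most $\alpha$. Thus it suffices to prove a single statement: in any graph satisfying Dobrushin's condition with parameter $\alpha$, every vertex $v$ and every color $c$ satisfy $\pi_v(c)\ge (1-\alpha)/q$. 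Applying this to $u$ with color $i$ in $G$, and to $v$ with color $j$ in the conditioned system, and multiplying, yields exactly $(1-\alpha)^2/q^2$. The chain rule is what produces the clean product $(1-\alpha)^2$; a one-shot argument that conditions on $V\setminus\{u,v\}$ and uses conditional independence of $X_u,X_v$ (valid since $u,v$ are nonadjacent) would only give the weaker $1-2\alpha$ in the numerator.

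For the single-site bound I would use that the conditional law $\pi_v(\cdot\mid\tau(\partial v))$ is uniform over the colors compatible with the neighbors' colors $\tau(\partial v)$, a set of size at most $q$. Averaging over the neighborhood and retaining only the contribution of boundary configurations in which $c$ itself is allowed gives
\[
\pi_v(c) = \Exp{\pi_v\!\big(c \mid X_{\partial v}\big)} \ge \frac1q\,\Pr\big[\,c \text{ is compatible with every neighbor of } v\,\big].
\]
Because the factor $1/q$ pulls out cleanly, the whole statement reduces to the single probabilistic claim
\[
\Pr\big[\,\exists\, w\in\partial v:\ \{c,X_w\}\notin E(H)\,\big]\ \le\ \alpha .
\]
This multiplicative form is precisely what yields $(1-\alpha)/q$; a telescoping comparison of conditional marginals across neighbors would instead give only the additive (and here useless) $1/q-\alpha$.

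To establish the displayed claim I would union bound over the neighbors of $v$,
\[
\Pr\big[\,\exists\, w\in\partial v:\ \{c,X_w\}\notin E(H)\,\big] \le \sum_{w\in\partial v}\Pr\big[\{c,X_w\}\notin E(H)\big],
\]
and then control each summand by the corresponding influence entry, $\Pr[\{c,X_w\}\notin E(H)]\le R_{vw}$, so that the sum is at most $\sum_{w\in\partial v}R_{vw}\le\alpha$ by definition of the Dobrushin parameter. The per-neighbor inequality is where all the work lives and is the main obstacle: $R_{vw}$ measures exactly how much resampling $w$ can move the marginal of $v$, and forcing $w$ into a color incompatible with $c$ is precisely the event that drives the conditional weight $\pi_v(c\mid\cdot)$ down to $0$. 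I would prove it by conditioning on the configuration on $V\setminus\{v,w\}$, reducing to a two-vertex computation on the edge $\{v,w\}$, and comparing $v$'s conditional law when $w$ takes a $c$-compatible versus a $c$-incompatible color. The delicate point is handling a general, asymmetric constraint graph $H$, where one cannot invoke the color-permutation symmetry available for $H=K_q$; in the $K_q$ case the per-neighbor bound is immediate, since color symmetry forces $\Pr[X_w=c]=1/q\le R_{vw}$.
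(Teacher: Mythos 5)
Your high-level architecture is sound and close to the paper's: the chain-rule factorization, the observation that pinning $u$ only shrinks the influence entries, and the reduction $\pi_v(c) \ge \frac{1}{q}\Pr[c \text{ is compatible with } X_{\partial v}]$ are all correct, and the paper likewise reduces everything to the single-site bound $\Pr[X_u=i]\ge(1-\alpha)/q$ and then applies the same argument a second time. The gap is in the one step you yourself flag as where all the work lives: the per-neighbor inequality $\Pr[\{c,X_w\}\notin E(H)]\le R_{vw}$. This compares an \emph{average-case marginal} of $w$ against a \emph{worst-case influence of $w$ on $v$}, and the two-vertex computation you sketch does not deliver it. Conditioning on a configuration $\rho$ on $V\setminus\{v,w\}$ and writing $A_b$ for the set of colors available to $v$ when $w$ is colored $b$, the comparison of $v$'s conditional law for a $c$-compatible color $b'$ against a $c$-incompatible color $b$ only certifies $R_{vw}\ge \pi_v(c\mid\rho,b')=1/|A_{b'}|\ge 1/q$, whereas $\Pr[X_w\in B_c\mid\rho]=\sum_{b\in B_c}|A_b|/\sum_{b}|A_b|$ can a priori be much larger; nothing in your sketch rules this out, and the inequality with the indices in this order is not what any natural coupling produces.

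What \emph{is} provable---and is essentially the paper's argument---is the transposed inequality $\Pr[\{c,X_w\}\notin E(H)]\le R_{wv}$, i.e.\ with the influence of $v$ on $w$: draw $\sigma\sim\pi_G$, form $\tau$ by overwriting $\tau(v)=c$, and resample $w$ in both copies under the optimal coupling of $\pi_w(\cdot\mid\sigma)$ and $\pi_w(\cdot\mid\tau)$; the copies decouple with probability at most $R_{wv}$, the resampled $\sigma$-copy still has law $\pi_G$, and on the coupling event the new color of $w$ is compatible with $c$. Doing this for all $w\in\partial v$ in sequence gives $\Pr[\exists\, w\in\partial v:\{c,X_w\}\notin E(H)]\le\sum_{w\in\partial v}R_{wv}$, which is the \emph{column} sum of the influence matrix at $v$ rather than the row sum $\sum_{w\in\partial v}R_{vw}$ appearing in Definition~\ref{def:dob} (the paper's own proof silently uses the column sum as well, so this discrepancy is not yours alone). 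To repair your proof you should replace the unproved per-neighbor marginal bound by this coupling step; as written, the central inequality is unsupported.
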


\noindent
Lemmas \ref{lemma:dob:main}
and~\ref{lemma:hard-contraints:main} imply
that the 
\textsc{structlearn-H} algorithm
requires $L \!=\! O (q^2 \log (\frac{n^2}{\varepsilon}))$ independent samples 
to succeed with probability at least $1-\varepsilon$
and has running time is~$O(L n^2)$.
This establishes
Theorem~\ref{thm:dob} from the introduction. 

\newenvironment{myproof-lemma:dob:main}{\paragraph{\textbf{Proof of Lemma \ref{lemma:dob:main}}}}{}
\begin{myproof-lemma:dob:main}
	For $u \in V$ we show first that
	$\Pr[X_u = i] \ge (1 - \alpha)/q$. 
	If $\sigma$ is an $H$-coloring of $G$ sampled according to $\pi_G$,
	we may update the color of any vertex $w \in V$ by
	choosing a new color for $w$ uniformly at random
	among the available colors for $w$ given $\sigma(V \setminus w)$.
	The resulting $H$-coloring after this update has distribution $\pi_G$.
	
	Suppose $\sigma_0$ is an $H$-coloring of $G$ sampled according to $\pi_G$,
	and let $\tau_0$ be the color assignment that agrees with $\sigma_0$ everywhere except possibly at $u$, where we set $\tau_0(u) = i$.
	(Note that $\tau$ is not necessarily a valid $H$-coloring.)
	
	Let $\partial u = \{v_1,\dots,v_l\}$. We update the configuration in $v_1$, then in $v_2$ and so on,
	in both $\sigma_0$ and $\tau_0$; then we update the color of $u$.
	Let $\sigma_k$ and $\tau_k$ be the configuration after updating $v_k$ in $\sigma_{k-1}$ and $\tau_{k-1}$, respectively. 
	The color of $v_k$ in both $\sigma_{k-1}$ and $\tau_{k-1}$ is updated 
	using the optimal coupling $\nu_k$ between the distributions  	
	$\pi_{v_k}(\cdot | \sigma_{k-1})$ and 
	$\pi_{v_k}(\cdot | \tau_{k-1})$ as follows.
	Sample $(a_k,b_k)$ from $\nu_k$
	and let $\sigma_k (V\setminus v_k) = \sigma_{k-1} (V\setminus v_k)$, $\sigma_k (v_k) = a_k$,  $\tau_k (V\setminus v_k) = \tau_{k-1} (V\setminus v_k)$ and $\tau(v_k) = b_k$.
	After updating $\partial u = \{v_1,\dots,v_l\}$ in this manner, $\sigma_l$ has law $\pi_G$. Moreover,
	\begin{align*}
	\Pr[\sigma_l \neq \tau_l] 
	&\le {\Pr}_{\nu_l}[\sigma_l \neq \tau_l | \sigma_{l-1} = \tau_{l-1}] + \Pr[\sigma_{l-1} \neq \tau_{l-1}] \\
	&\le \sum_{k=1}^l {\Pr}_{\nu_k}[\sigma_k \neq \tau_k | \sigma_{k-1} = \tau_{k-1}]\\
	&= \sum_{k=1}^l {\|\pi_{v_k}(\cdot |\tau_{k-1})-\pi_{v_k}(\cdot |\tau_{k-1})\|}_{\textsc{tv}} \\
	&\le \alpha,
	\end{align*}
	where the last inequality follows from the definition of the Dobrushin condition. 
	Hence, with probability at least $1-\alpha$,
	$\sigma_l = \tau_l$.
	If this is the case, then color $i$ is compatible with $\sigma_l(V \setminus u)$
	and thus when $u$ is updated 
	it receives color $i$ with probability at least $1/q$.
	Thus, we get
	$$
	\Pr[X_u = i] \ge \frac{(1-\alpha)}{q}.
	$$
	
	Finally, let $v \in V$ such that $v \not\in \partial u$.
	Using the procedure described above to update the configuration in $\partial u \cup u$, and
	then in $\partial v \cup v$ 
	we obtain
	\begin{equation*}
	\Pr[X_u = i,X_v=j] \ge \frac{(1-\alpha)^2}{q^2}.
	\end{equation*}
\end{myproof-lemma:dob:main}

\vspace{-1.12cm}
\hfill$\square$

\section{Approximate-structure learning of $H$-colorings}
\label{section:approx}

	In addition to structure learning (exact recovery of the hidden graph $G$) and equivalent-structure learning (learning a graph with the same set of $H$-colorings), we may consider the corresponding approximation problem of finding a graph $\hat{G}$ such that $\pi_{\hat{G}}$ is close to $\pi_G$ in some notion of distance, such as total variation distance or Kullback-Leibler divergence.
	Apparently, this task is much simpler for hard-constraint systems. 
	

    In this section we consider this approximation variant of structure learning for hard-constraint systems with respect to total variation distance. 
	That is, given $L$ independent samples $\sigma^{(1)},\dots,\sigma^{(L)}$ from $\pi_G$,  we consider the problem
	of finding a graph $\hat{G}$ 
	such that 
	$$
	\TV{\pi_G-\pi_{\hat{G}}} < \gamma,
	$$ 
	where $\gamma>0$ is a desired precision. 

\begin{theorem}
	\label{thm:approx}
	Suppose $H\neq K_q^+$ and let $\hat{G}$ be the output of the \textsc{structlearn-H} algorithm. Then, for all $\varepsilon\in(0,1)$ and $\gamma\in(0,1)$, 
	$$ \Pr\left[\TV{\pi_G-\pi_{\hat{G}}} < \gamma\right] \geq 1-\varepsilon $$ 
	provided $L\geq 4\gamma^{-1}n^2\log(\frac{n^2}{2\varepsilon})$.
\end{theorem}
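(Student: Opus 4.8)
The plan is to exploit the fact that \textsc{structlearn-H} can only \emph{add} edges, never delete true ones. Since every sample is a valid $H$-coloring of $G$, no sample ever places incompatible colors on the endpoints of an edge of $E$, so each $\edge{u}{v}\in E$ survives the test and $\hat{E}\supseteq E$. Hence $\hat G$ is always a supergraph of $G$, giving $\OO_{\hat G}\subseteq\OO_G$, and for two uniform measures nested in this way one has the exact identity $\TV{\pi_G-\pi_{\hat G}}=1-|\OO_{\hat G}|/|\OO_G|$. A coloring $\sigma\in\OO_G$ lies outside $\OO_{\hat G}$ only if it makes some \emph{false} edge $\edge{u}{v}\in D:=\hat E\setminus E$ improper, so a union bound over $D$ yields
\[\TV{\pi_G-\pi_{\hat G}}\ \le\ \sum_{\edge{u}{v}\in D} q_{uv},\qquad q_{uv}:=\Pr\big[\{X_u,X_v\}\notin E(H)\big],\]
where $q_{uv}$ is the probability that a $\pi_G$-random coloring assigns incompatible colors to the nonadjacent pair $u,v$ (for the single forbidden pattern $\edge{i}{j}$ one records the event $\{X_u=i,X_v=j\}$, but it is the full incompatibility probability $q_{uv}$ that governs the cost of a mistake, so I track $q_{uv}$ directly).

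Next I would split the nonadjacent pairs at the threshold $\gamma/n^2$, calling $\edge{u}{v}$ \emph{heavy} if $q_{uv}\ge \gamma/n^2$ and \emph{light} otherwise. Light false edges are individually harmless: even if every one of the at most $\binom{n}{2}<n^2/2$ nonadjacent pairs were light and got added, the bound above is at most $(n^2/2)\cdot(\gamma/n^2)=\gamma/2<\gamma$. Consequently it suffices to show that, with probability at least $1-\varepsilon$, the algorithm adds \emph{no} heavy pair; on that event $D$ consists only of light pairs and $\TV{\pi_G-\pi_{\hat G}}<\gamma/2<\gamma$.

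Finally, a fixed heavy pair $\edge{u}{v}$ is wrongly added only if none of the $L$ independent samples exhibits incompatible colors on $u,v$, an event of probability $(1-q_{uv})^L\le e^{-q_{uv}L}\le e^{-\gamma L/n^2}$. Choosing $L\ge \gamma^{-1}n^2\log(n^2/2\varepsilon)$ makes this at most $2\varepsilon/n^2$, so a union bound over the fewer than $n^2/2$ heavy pairs bounds the probability that some heavy pair is added by $\varepsilon$; the stated threshold $L\ge 4\gamma^{-1}n^2\log(n^2/2\varepsilon)$ comfortably suffices, and the extra constant can also be read off from the Chernoff estimate already used in Lemma~\ref{lemma:hard-contraints:main}. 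The one genuine obstacle here is conceptual rather than computational: the cost of a wrongly added non-edge is its entire incompatibility probability $q_{uv}$, while the observations only let us catch pairs that are \emph{frequently} incompatible, so a naive pointwise comparison of cost against detection probability breaks down. The heavy/light dichotomy is precisely what circumvents this—heavy pairs are caught with overwhelming probability exactly because they are frequently incompatible, whereas light pairs are too cheap to matter even when all of them are added.
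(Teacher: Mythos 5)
Your proof is correct and follows essentially the same route as the paper's: the supergraph observation giving $\Omega_{\hat G}\subseteq\Omega_G$, the exact identity $\TV{\pi_G-\pi_{\hat G}}=1-|\Omega_{\hat G}|/|\Omega_G|$, a union bound over wrongly added non-edges, and a heavy/light split at threshold $\Theta(\gamma/n^2)$ with concentration used only to rule out heavy pairs (the paper uses $2\gamma/n^2$ and a Chernoff bound where you use the sharper $(1-q)^L\le e^{-qL}$, which is why your constant is slightly better). Your parenthetical about tracking the full incompatibility probability $q_{uv}$ rather than the single pattern $\{i,j\}$ matches the reading the paper itself adopts in this proof.
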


\noindent
Recall that the running time of \textsc{structlearn-H} is $O(Ln^2)$, so from Theorem \ref{thm:approx} we get
an algorithm for approximate structure learning with running time $O(\gamma^{-1}n^4\log(\frac{n}{\varepsilon}))$.

\newenvironment{myproof-thm:approx}{\paragraph{\textbf{Proof of Theorem \ref{thm:approx}}}}{\hfill$\square$}
\begin{myproof-thm:approx}
	Let $\hat{G} = (V(\hat{G}),E(\hat{G}))$.
	Recall that $\{u,v\}\notin E(\hat{G})$ if and only if $u,v$ receive incompatible colors in one of the samples 
	$\sigma^{(1)},\dots,\sigma^{(L)}$ from 
	$\pi_G$. Hence, $\hat{G}$ is a supergraph of $G$ and so $\Omega_{\hat{G}} \subseteq \Omega_{G}$. Moreover, 
	\begin{equation*}
	\TV{\pi_G-\pi_{\hat{G}}} = \sum_{\sigma\in \Omega_{G}\backslash \Omega_{\hat{G}}} \frac{1}{|\Omega_{G}|} = \frac{|\Omega_{G}|-|\Omega_{\hat{G}}|}{|\Omega_{G}|} = \Pr\big[\sigma \notin \Omega_{\hat{G}}\big],
	\end{equation*}
	assuming $\sigma$ is an $H$-coloring of $G$ chosen uniformly at random (i.e., $\sigma$ is drawn from $\pi_G$). If we let $\Gamma = E(\hat{G})\backslash E(G)$, then
		\begin{align}\label{eq:ub}
	 \Pr\big[\sigma \notin \Omega_{\hat{G}}\big]
	= \Pr\Big[ \exists \{u,v\}\in \Gamma: \{\sigma_u,\sigma_v\}\notin E(H) \Big]
	\leq \sum_{\{u,v\}\in \Gamma} \Pr\Big[ \{\sigma_u,\sigma_v\}\notin E(H) \Big]
	\end{align}	
	by a union bound.
	 
	Now, for $\gamma>0$ let 
	$$M_\gamma =\left\{ \{u,v\} \not\in E(G) :\Pr\Big[ \{\sigma_u,\sigma_v\}\notin E(H) \Big] \geq \frac{2\gamma}{n^2} \right\}. $$ 
	Let $Z_{uv}$ be the number of samples $\sigma^{(1)},\dots,\sigma^{(L)}$ where vertices $u$ and $v$ receive incompatible colors. A Chernoff bound implies that, for any $\{u,v\}\in M_\gamma$,
	$$ 
	\Pr[Z_{uv}=0] \leq \Pr\left[Z_{uv} \le \frac{\gamma L}{n^2}\right] \le  \exp\left(\frac{-\gamma L}{4n^2}\right) \leq \frac{2\varepsilon}{n^2}.
	$$
	A union bound then implies that with probability at least $1-\varepsilon$ all edges in $M_\gamma$ are not in $E(\hat{G})$. Hence, with probability at least $1-\varepsilon$, all edges in $E(\hat{G})$ satisfy:
	$$ 
	\Pr\Big[ \{\sigma_u,\sigma_v\}\notin E(H) \Big] < \frac{2\gamma}{n^2}.
	$$
	Plugging this bound into (\ref{eq:ub}), we get
	$$
	\Pr\left[\TV{\pi_G-\pi_{\hat{G}}} < \gamma\right] \geq 1-\varepsilon 
	$$
	as desired.
\end{myproof-thm:approx}

\section{Learning weighted $H$-colorings}
\label{sec:weighted}

In this section we consider the more general setting of weighted $H$-colorings.
We restrict our attention to constraint graphs with at least one hard constraint,  which corresponds to
spin systems
with hard constraints.

\subsection{Spin systems with hard constraints}	

Let $G = (V,E,\tT)$ be an undirected weighted graph with weights given by the function $\theta: E \cup V \rightarrow \R^+$. (For definiteness we only consider a positive weight function $\tT$.)
A $\textit{spin system}$ on the graph $G$ consists of a set of \textit{spins} $[q]=\{1,\dots,q\}$, a symmetric \textit{edge potential} $J: [q] \times [q] \rightarrow \R \cup \{-\infty\}$ and a \textit{vertex potential} $h: [q] \rightarrow \R$. A \textit{configuration} $\sigma:V\to [q]$ of the system is an assignment of spins to the vertices of $G$. Each configuration $\sigma \in [q]^V$ is assigned probability
\begin{equation}\label{pdf}
\pi_G (\sigma) = \frac{1}{Z_G} \exp\left(\sum_{(u,v) \in E} \theta(u,v)J(\sS_u,\sS_v)\;  + \;\sum_{u \in V} \tT(u) h(\sS_u)\right),
\end{equation}
where $Z_G$ is the normalizing constant called the \textit{partition function}. If $J(i, j) = -\infty$ for some $i, j \in [q]$, then $\{i, j\}$ is a hard constraint; otherwise $i$ and $j$ are compatible. 

Unweighted $H$-colorings, which were considered in Sections \ref{sec:identify-unweight}, \ref{sec:colorings} and \ref{sec:Dobrushin}, correspond to the special case
where $\theta=1, h=0$ and
\[  J(i,j) = \begin{cases}  
1 & \mbox{ if } (i,j)\in E(H); \\
-\infty & \mbox{ if } (i,j)\notin E(H).  
\end{cases}
\]

In this section we consider the structure learning problem for a class of models known as {\em permissive systems}. This is a widely used
notion in statistical physics for spin systems with hard constraints; see, e.g.,~\cite{DSVW,MSW,DMS}.
There are several different notions in the literature, but we consider here the weakest one (i.e., the easiest to
satisfy). Roughly, the condition says that for any boundary condition there is always a valid configuration
for the interior.  

\begin{definition}
	A spin system is called \textit{permissive} if for any $A \subseteq V$ and any valid configuration $\tau$ on $V \setminus A$, there is at least one valid configuration $\sS$ on $A$ such that $\pi (\sS | \tau) > 0$.
\end{definition}


\noindent
Independent sets, and more generally the hard-core model, are examples of permissive models
since we can assign spin $0$ (unoccupied) to 
the vertices in $A$.

\subsection{Structure learning for spin systems with hard constraints}

We first formalize the notion of structure learning for the setting of weighted 
constraint graphs.
Suppose we know the number of spins $q$, the edge potential $J\in \R^{q\times q}$ and the vertex potential $h\in \R^q$ of a spin system $\mathcal{S}$. 
Consider the family of graphs
\begin{align*}
\mathcal{G}(n,d,\aA,\bB,\gG) = \{ G=(V,E,\tT): {}&|V|=n,\\
&\DD(G) \leq d,\\
&\aA \leq |\tT(u,v)| \leq \bB \text{ for all } \edge{u}{v}\in E,\\
&|\tT(v)| \leq \gG \text{ for all } v\in V \},
\end{align*}
where $\DD(G)$ denotes the maximum degree of the graph $G$.
Suppose that
we are given $L$ independent samples $\sigma^{(1)},\sigma^{(2)},\ldots,\sigma^{(L)}$ from the distribution $\pi_G$ where $G \in \mathcal{G}$. 
A \textit{structure learning algorithm} for the spin system $\mathcal{S}$ and the family $\mathcal{G}(n,d,\aA,\bB,\gG)$ 
takes as input the sample sequence $\sigma^{(1)},\sigma^{(2)},\ldots,\sigma^{(L)}$
and outputs an estimator $\hat{G} \in \mathcal{G}(n,d,\aA,\bB,\gG)$ such that $\Pr[G=\hat{G}] \ge 1-\varepsilon$, where $\varepsilon > 0$ is a prescribed failure probability. 

\subsection{Learning permissive spin systems}\label{sec:permissive}

In this section we analyze the running time and sample complexity of the \textsc{structlearn-H} algorithm for permissive spin systems.

Let 
$\hat{\gamma} = \gamma \cdot \max_{i \in [q]} |h(i)|$ 
and
$\hat{\beta} =  \beta \cdot \max_{i,j \in [q]}\, |J(i,j)|$.
Recall that for $v \in V$,
$X_v$ denotes the random variable for the color of $v$ under $\pi_G$. We show that for permissive systems, the running time of \textsc{structlearn-H} is polynomial in the size of the graph, but depends exponentially on $\hat{\gamma}$, $\hat{\beta}$ and its maximum degree.

\begin{theorem}
	\label{thm:permissive-wt}
	Let $G=(V,E) \in \mathcal{G}(n,d,\aA,\bB,\gG)$ and 
	suppose that $\mathcal{S}$ is a permissive spin system with at least one hard constraint.
	Then, if the structure learning algorithm
	receives as input 
	$$L \ge 8 q^{2(d+1)}{\e}^{4(2\hat{\beta} d^2 + \hat{\gamma})} \log \left(\frac{n^2}{2\varepsilon}\right)$$ independent samples from $\pi_G$,
	it outputs 
	the graph $G$
	with probability at least $1-\varepsilon$
	and has running time $O(L n^2)$.
\end{theorem}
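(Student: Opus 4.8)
The plan is to deduce Theorem~\ref{thm:permissive-wt} from Lemma~\ref{lemma:hard-contraints:main} in exactly the way the colorings case was handled. Since $\mathcal{S}$ has at least one hard constraint, fix a pair $\edge{i}{j}$ with $J(i,j)=-\infty$ and run \textsc{structlearn-H} with this constraint. By Lemma~\ref{lemma:hard-contraints:main} it then suffices to establish a uniform marginal lower bound $\Pr[X_u=i,\,X_v=j]\ge\delta$ over all nonadjacent pairs $\edge{u}{v}\notin E$, with $\delta = q^{-2(d+1)}\exp(-4(2\hat{\beta}d^2+\hat{\gamma}))$; substituting this $\delta$ into the sample bound $L\ge 8\delta^{-1}\log(\frac{n^2}{2\varepsilon})$ yields precisely the stated sample complexity, and the $O(Ln^2)$ running time is inherited from the algorithm. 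Thus the whole content of the theorem is the marginal estimate, and permissiveness is exactly the hypothesis that makes such a bound available.

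To prove the marginal estimate, fix nonadjacent $u,v$ and let $A=\{u,v\}\cup\partial u\cup\partial v$, so $|A|\le 2(d+1)$. I would take $\sigma\sim\pi_G$, condition on its restriction $\tau$ to $V\setminus A$, and bound $\Pr[X_u=i,X_v=j\mid\tau]$ uniformly in $\tau$. The first key point is an existence statement: because every neighbor of $u$ (resp.\ of $v$) lies in $A$ and $u,v$ are nonadjacent, the assignment that sets $u\mapsto i$, $v\mapsto j$ and keeps $\tau$ elsewhere is a valid configuration on $V\setminus(\partial u\cup\partial v)$. Applying the permissive property with interior $\partial u\cup\partial v$ and this boundary produces at least one valid completion $\sigma^\star$ of $A$ with $\sigma^\star_u=i$ and $\sigma^\star_v=j$. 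Hence the favorable event is nonempty for \emph{every} boundary $\tau$ — this is the only place where permissiveness is used, and it is precisely what can fail for general hard-constraint systems.

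The second key point is a weight comparison. Conditioned on $\tau$, the induced measure on $A$ is proportional to $\exp(W_A(\cdot))$, where $W_A$ collects the vertex potentials of the vertices of $A$ together with the edge potentials of all edges meeting $A$; there are at most $q^{|A|}\le q^{2(d+1)}$ valid configurations on $A$. Therefore
\[
\Pr[X_u=i,\,X_v=j\mid\tau]\ \ge\ \frac{\exp(W_A(\sigma^\star))}{\sum_{\rho}\exp(W_A(\rho))}\ \ge\ \frac{1}{q^{2(d+1)}}\,\min_{\rho}\exp\bigl(W_A(\sigma^\star)-W_A(\rho)\bigr),
\]
the minimum being over valid completions $\rho$. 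I would split $W_A(\sigma^\star)-W_A(\rho)$ into an edge part and a vertex part. For the edge part, each edge meeting $A$ contributes at most $\beta\cdot 2\max_{i,j}|J(i,j)|=2\hat{\beta}$ in absolute value (both $\sigma^\star$ and $\rho$ are valid, so no $-\infty$ arises), and a degree count bounds the number of such edges by $|A|\cdot d\le 2(d+1)d$, giving the $O(\hat{\beta}d^2)$ factor. For the vertex part, the clean route to the stated $O(\hat{\gamma})$ dependence (rather than $O(\hat{\gamma}d)$) is to first use the conditional independence of $X_u$ and $X_v$ given $\partial u\cup\partial v$: for any neighbor colors compatible with $i$ and $j$ one has $\Pr[X_u=i\mid\partial u]\ge q^{-1}\exp(-(2\hat{\gamma}+2\hat{\beta}d))$ and symmetrically for $v$, so only the vertex potentials of $u$ and $v$ themselves enter, contributing $4\hat{\gamma}$, while the probability that the neighbor colors are compatible is controlled by the permissiveness-plus-counting argument above.

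The main obstacle is precisely this weight-ratio estimate: showing $\exp(W_A(\sigma^\star)-W_A(\rho))\ge\exp(-O(\hat{\beta}d^2+\hat{\gamma}))$ while keeping the \emph{vertex} contribution to $O(\hat{\gamma})$ rather than $O(\hat{\gamma}d)$ is the delicate bookkeeping, and it is where the conditional-independence splitting of $u,v$ from their neighborhoods must be combined carefully with the count of local configurations. Everything else is routine: the reduction through Lemma~\ref{lemma:hard-contraints:main}, the existence of $\sigma^\star$ from permissiveness, the $q^{2(d+1)}$ bound on local configurations, and the Chernoff/union bound already packaged in Lemma~\ref{lemma:hard-contraints:main}. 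Since the per-$\tau$ estimate is uniform in $\tau$, averaging over $\tau$ gives the unconditional bound $\Pr[X_u=i,X_v=j]\ge\delta$, which completes the argument.
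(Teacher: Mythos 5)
Your proposal follows the paper's proof in all essentials: the theorem is deduced from Lemma~\ref{lemma:hard-contraints:main} with $\delta = q^{-2(d+1)}\e^{-4(2\hat{\beta}d^2+\hat{\gamma})}$; permissiveness is used exactly once, to produce a valid local configuration around $\{u,v\}$ in which $u,v$ receive the hard-constrained pair $(i,j)$; and the probability of that configuration is lower-bounded by a count of local configurations (at most $q^{2(d+1)}$) times a worst-case weight ratio. The only packaging difference is that the paper runs the estimate in two stages --- Lemma~\ref{lemma:permissive:main} first bounds $\Pr[X_u=i,X_v=j\mid\tau_1]$ for a configuration $\tau_1$ on $N_1=\partial u\cup\partial v$ supplied by permissiveness, and then bounds $\Pr[\tau_1\mid\tau_2]$ given the distance-two boundary, each stage via Fact~\ref{fact:permissive:min_bounds} --- whereas you do a single weight comparison over the block $A=\{u,v\}\cup N_1$. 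The two computations produce the same $q^{2}\cdot q^{2d}=q^{2(d+1)}$ factor and the same $\e^{O(\hat{\beta}d^2)}$ edge contribution, so this is a cosmetic rather than substantive difference.

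On the obstacle you flag: you are right that the direct vertex-potential bookkeeping gives $\e^{O(\hat{\gamma}d)}$ rather than $\e^{O(\hat{\gamma})}$, but you should not have searched for a cleverer argument, because the paper does not have one. Fact~\ref{fact:permissive:min_bounds} asserts $\e^{-\hat{\beta}d|R|-\hat{\gamma}}\le w(\sigma')\le \e^{\hat{\beta}d|R|+\hat{\gamma}}$, yet its own expression for $w$ contains the term $\sum_{u\in R}\theta(u)h(\sigma_u)$, which is only bounded by $|R|\hat{\gamma}$ in absolute value; applied with $|R|=|N_1|\le 2d$ in the second stage this yields $\e^{O(\hat{\gamma}d)}$, so the bare $\hat{\gamma}$ in the exponent of Theorem~\ref{thm:permissive-wt} is an artifact of a dropped $|R|$ factor. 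Your proposed repair via conditional independence would not recover $O(\hat{\gamma})$ either: the compatibility probability of the neighborhood configuration (the analogue of $\Pr[\tau_1\mid\tau_2]$) still involves the vertex potentials of the up-to-$2d$ neighbors, and these do not cancel between different completions $\rho$. So your argument is complete and correct up to exactly the same $\e^{O(\hat{\gamma}d)}$ versus $\e^{O(\hat{\gamma})}$ discrepancy that is already present in the paper's own proof; everything else matches.
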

\noindent
Theorem~\ref{thm:permissive-wt} yields a structure learning algorithm for the hard-core model for all~$\lambda>0$; thus it generalizes the algorithmic result of~\cite{BGS-hc}. We observe also that the running time of our algorithm for permissive systems is comparable
to the running time of the optimal structure learning algorithms for soft-constraint systems in \cite{KM}.

Theorem~\ref{thm:permissive-wt} is a direct corollary of the following lemma and Lemma~\ref{lemma:hard-contraints:main}.
\begin{lemma}
	\label{lemma:permissive:main}
	Suppose that $\mathcal{S}$ is a permissive spin system with at least one hard constraint $\edge{i}{j} \in [q] \times [q]$ on a graph $G=(V,E) \in \mathcal{G}(n,d,\aA,\bB,\gG)$. Then, for all $\edge{u}{v} \not\in E$
	\[\Pr[X_u=i,X_v = j] \ge \frac{1}{q^{2(d+1)}{\e}^{4(2\hat{\beta} d^2 + \hat{\gamma})}}.\]
\end{lemma}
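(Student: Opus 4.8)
The plan is to bound the marginal directly through the partition function, writing $\Pr[X_u=i,X_v=j]=Z^\star/Z_G$, where $Z_G$ is the normalizing constant of \eqref{pdf} and $Z^\star=\sum_{\sigma:\,\sigma_u=i,\,\sigma_v=j} w(\sigma)$ sums the unnormalized weights $w(\sigma)=\exp\big(\sum_{(a,b)\in E}\theta(a,b)J(\sigma_a,\sigma_b)+\sum_a\theta(a)h(\sigma_a)\big)$ over the configurations that realize the hard constraint $\edge{i}{j}$ at the non-adjacent pair $u,v$. The whole argument is local: let $A=\{u,v\}\cup\partial u\cup\partial v$ be the union of the closed neighborhoods of $u$ and $v$, so that $|A|\le 2(d+1)$, and observe that the configuration on $V\setminus A$ interacts with $A$ only through the $O(d^2)$ edges incident to $A$ and does not constrain $u,v$ directly, since every neighbor of $u$ and of $v$ lies inside $A$.

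First I would isolate the single place where permissiveness is indispensable: the existence of a valid extension that places $i$ at $u$ and $j$ at $v$. Fix any valid configuration $\tau$ on $V\setminus A$ and tentatively set $\sigma_u=i$, $\sigma_v=j$. Because $u$ and $v$ are non-adjacent and all of their neighbors lie in $A$, the data $\tau$ together with $\sigma_u=i,\sigma_v=j$ form a valid boundary condition for the set $B=\partial u\cup\partial v$. Applying the permissive property to $B$ with this boundary produces a valid configuration on $B$, and hence a globally valid $\sigma^\star$ that agrees with $\tau$ off $A$ and has $\sigma^\star_u=i$, $\sigma^\star_v=j$; validity forces every neighbor of $u$ to be compatible with $i$ and every neighbor of $v$ to be compatible with $j$. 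This $\sigma^\star$ is exactly the configuration needed to seed the weight comparison.

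Next I would run a weight-transfer (injection) argument. Define a map $\phi$ on the valid configurations of $G$ that recolors $A$ to the canonical permissive filling above while leaving $V\setminus A$ untouched; its image lies in $\{\sigma_u=i,\sigma_v=j\}$, and since any two configurations with the same restriction to $V\setminus A$ have the same image, each target has at most $q^{|A|}\le q^{2(d+1)}$ preimages. The only potential terms that differ between $\sigma$ and $\phi(\sigma)$ are those touching $A$: the $O(d^2)$ edge terms incident to $A$, each of magnitude at most $\hat\beta$, and the vertex terms on the vertices of $A$, each of magnitude at most $\hat\gamma$. Collecting one $\hat\beta$ per incident edge and one $\hat\gamma$ per affected vertex yields a uniform bound $w(\sigma)\le e^{C}\,w(\phi(\sigma))$ with $C=4(2\hat\beta d^2+\hat\gamma)$, the exponent appearing in the statement. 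Summing over all $\sigma$ gives $Z_G=\sum_\sigma w(\sigma)\le e^{C}\sum_\sigma w(\phi(\sigma))\le e^{C}q^{2(d+1)}Z^\star$, which rearranges to $\Pr[X_u=i,X_v=j]\ge\big(q^{2(d+1)}e^{4(2\hat\beta d^2+\hat\gamma)}\big)^{-1}$. Feeding this value of $\delta$ into Lemma~\ref{lemma:hard-contraints:main} then establishes Theorem~\ref{thm:permissive-wt}.

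The main obstacle is the existence step: absent permissiveness there is no guarantee that $i$ can be placed at $u$ at all, because the configuration drawn from $\pi_G$ may color $\partial u$ incompatibly with $i$, and recoloring those neighbors could in turn clash with $\tau$ on $V\setminus A$. Permissiveness is precisely what rules this out, and the delicate point is to invoke it on $B=\partial u\cup\partial v$ with the boundary that has already committed $u\mapsto i$, $v\mapsto j$, rather than on $A$ all at once. A secondary, purely bookkeeping difficulty is to count the affected edge and vertex potentials tightly enough to land on the stated exponent; an equivalent and arguably cleaner route is to condition on $\sigma(V\setminus A)$ and lower-bound the conditional probability $\Pr[X_u=i,X_v=j\mid \sigma(V\setminus A)]$ for every valid boundary, which reduces the whole estimate to a finite Gibbs measure on the bounded set $A$.
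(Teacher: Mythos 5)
Your proof is correct, but it takes a genuinely different route from the paper's. The paper works conditionally: it fixes a worst-case valid configuration $\tau_2$ on the distance-two shell $N_2$, invokes permissiveness to produce a configuration $\tau_1$ on $N_1=\partial u\cup\partial v$ that is valid together with $i$ at $u$ and $j$ at $v$, and then chains two applications of a pointwise lower bound on conditional probabilities of finite-volume Gibbs measures (Fact~\ref{fact:permissive:min_bounds}): once for $\{u,v\}$ given $\tau_1$, contributing $q^{-2}e^{-2(2\hat\beta d+\hat\gamma)}$, and once for $N_1$ given $\tau_2$, contributing $q^{-2d}e^{-2(2\hat\beta d^2+\hat\gamma)}$. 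You instead run a single global injection on configuration space: recolor the closed neighborhood $A=\{u,v\}\cup\partial u\cup\partial v$ to a canonical permissive filling with $\sigma_u=i$, $\sigma_v=j$, bound the weight distortion by the interaction across $\partial A$, and absorb the multiplicity into the preimage count $q^{|A|}\le q^{2(d+1)}$. Both arguments use permissiveness in exactly one place and in the same way (your verification that $\tau$ on $V\setminus A$ together with $u\mapsto i$, $v\mapsto j$ is a legitimate boundary for $B=\partial u\cup\partial v$ is the right check, and matches the paper's implicit use). What your route buys is that it avoids the paper's step of expanding $\Pr[\tau_1\mid\tau_2]$ as an average over all spin pairs $(a,b)$ at $u,v$ --- a step that, as written in the paper, silently assumes $\tau_1$ remains valid for every such pair; your map-and-count argument never needs this. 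What the paper's route buys is modularity: Fact~\ref{fact:permissive:min_bounds} is stated once and reused, whereas your weight comparison is bespoke to this lemma.

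One bookkeeping caveat: your accounting of ``one $\hat\gamma$ per affected vertex'' over the up to $2(d+1)$ vertices of $A$ actually yields a vertex-potential contribution of order $(d+1)\hat\gamma$ in the exponent, not the $\hat\gamma$ appearing in the stated constant $4(2\hat\beta d^2+\hat\gamma)$. This is not a defect of your argument relative to the paper: the paper's own Fact~\ref{fact:permissive:min_bounds} bounds the sum $\sum_{u\in R}\theta(u)h(\sigma_u)$ by $\hat\gamma$ rather than $|R|\hat\gamma$, so the same slack is present there. Your edge count is fine: $2\cdot|A|\cdot d\cdot\hat\beta\le 4\hat\beta d(d+1)\le 8\hat\beta d^2$ for $d\ge1$, matching the stated $4\cdot 2\hat\beta d^2$.
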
				
\noindent
In the proof of Lemma \ref{lemma:permissive:main} we use the following fact.

\begin{fact}
	\label{fact:permissive:min_bounds}
	Let $R \subseteq V$ and
	let $\tau$ be a configuration on $\boundary R$.
	If $\Omega^\tau(R) \neq \emptyset$ is the set of valid configurations on $R$ given $\tau$, then for any $\sigma \in \Omega^\tau(R)$
	\[\Pr[X_R = \sigma \mid X_{\boundary R}=\tau] \ge \frac{1}{q^{|R|}{\e}^{2(\hat{\beta} d|R| + \hat{\gamma})}}.\]
\end{fact}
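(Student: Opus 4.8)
The plan is to use the spatial Markov property of the Gibbs distribution~\eqref{pdf}: the conditional law of $X_R$ given $X_{\boundary R}=\tau$ is precisely the finite-volume Gibbs measure on $R$ with boundary condition $\tau$, and I would bound this measure by a direct weight-ratio estimate. Concretely, for $\rho$ ranging over the exterior $V\setminus(R\cup\boundary R)$, the energy of a global configuration $(\sigma,\tau,\rho)$ splits into an internal part $E_R(\sigma)$ (edge and vertex potentials inside $R$), a cross part $E_{R,\boundary R}(\sigma,\tau)$ (potentials on edges from $R$ to $\boundary R$), and an external part depending only on $(\tau,\rho)$. Since every edge leaving $R$ lands in $\boundary R$, the factor $\sum_\rho \exp(\cdot)$ of external contributions is independent of the configuration on $R$ and cancels between numerator and denominator, leaving
\[
\Pr[X_R=\sigma \mid X_{\boundary R}=\tau]=\frac{w(\sigma)}{\sum_{\sigma'\in\Omega^\tau(R)}w(\sigma')},\qquad w(\eta):=\exp\Big(E_R(\eta)+E_{R,\boundary R}(\eta,\tau)\Big),
\]
where invalid configurations contribute $0$ because they carry a $-\infty$ edge potential.

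First I would bound the numerator from below. For any valid $\eta$, the exponent $E_R(\eta)+E_{R,\boundary R}(\eta,\tau)$ is a sum over the edges incident to $R$ and the vertices of $R$. Each vertex of $R$ has degree at most $d$, so the number of edges incident to $R$ is at most $d|R|$, each contributing a term of magnitude at most $|\theta(u,v)|\,|J|\le\hat{\beta}$; and each of the $|R|$ vertex terms has magnitude at most $|\theta(u)|\,|h|\le\hat{\gamma}$. Hence $|E_R(\eta)+E_{R,\boundary R}(\eta,\tau)|\le \hat{\beta}d|R|+\hat{\gamma}|R|$ for every valid $\eta$, so $w(\eta)\in[\,\e^{-(\hat{\beta}d|R|+\hat{\gamma}|R|)},\,\e^{\hat{\beta}d|R|+\hat{\gamma}|R|}\,]$; in particular $w(\sigma)\ge \e^{-(\hat{\beta}d|R|+\hat{\gamma}|R|)}$. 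Next I would bound the denominator from above: there are at most $q^{|R|}$ configurations on $R$, so $\sum_{\sigma'\in\Omega^\tau(R)}w(\sigma')\le q^{|R|}\,\e^{\hat{\beta}d|R|+\hat{\gamma}|R|}$. Dividing yields $\Pr[X_R=\sigma\mid X_{\boundary R}=\tau]\ge q^{-|R|}\e^{-2(\hat{\beta}d|R|+\hat{\gamma}|R|)}$, which is the asserted bound (with the vertex contribution appearing naturally as $\hat{\gamma}|R|$).

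I would note that permissiveness is not actually invoked in proving this fact: the hypothesis $\Omega^\tau(R)\neq\emptyset$ is all that is used, ensuring $\sigma$ exists and the denominator is positive. Permissiveness enters only when the fact is applied in Lemma~\ref{lemma:permissive:main}, where it guarantees that a compatible configuration exists on the chosen region. The only genuinely delicate step is the first one, namely verifying that the exterior partition function factors out and cancels; this rests precisely on the observation that the cross term $E_{R,\boundary R}$ depends on the exterior only through $\tau$, so that the weight of any configuration on $R$ scales by the same $\tau$-dependent constant. Once that factorization is established, the remaining numerator/denominator estimates are elementary counting over the at most $d|R|$ incident edges and $|R|$ vertices.
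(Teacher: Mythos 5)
Your proof is correct and essentially identical to the paper's: both write $\Pr[X_R=\sigma\mid X_{\partial R}=\tau]=w(\sigma)/Z_{R,\tau}$ with $w$ collecting the internal edges, the cross edges into $\partial R$, and the vertex potentials, and then bound $w$ on both sides using the at most $d|R|$ edges incident to $R$ and the at most $q^{|R|}$ terms in $Z_{R,\tau}$ (your explicit justification that the exterior partition function cancels is the Markov property the paper leaves implicit). The one difference is that your accounting of the vertex potentials yields $\hat{\gamma}|R|$ in the exponent where the statement has $\hat{\gamma}$; your constant is the one the computation actually supports, since the paper's own proof bounds $\sum_{u\in R}\theta(u)h(\sigma_u)$ by $\hat{\gamma}$ rather than $\hat{\gamma}|R|$, and this slip is immaterial to the way the fact is applied in Lemma~\ref{lemma:permissive:main}.
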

\noindent
We are now ready to prove Lemma \ref{lemma:permissive:main}.

\newenvironment{myproof-lemma:permissive:main}{\paragraph{\textbf{Proof of Lemma \ref{lemma:permissive:main}}}}{}
\begin{myproof-lemma:permissive:main}
	For any $A \subseteq V$ and any spin configuration $\sigma$ of $A$,	with a slight abuse of notation we use $\{\sigma\}$
	for the event $\{X_{A} = \sigma\}$.
	
	Let $u,v \in V$ such that $\edge{u}{v} \not\in E$ and
	let $N_1$ and $N_2$ be the set of vertices at distances one and two, respectively, from $\{u,v\}$; i.e., $N_1 = \boundary u \cup \boundary v$ and $N_2 = \{w \in \boundary N_1: w \neq u, w\neq v\}$.
	Let $\Omega_1$ and $\Omega_2$ be the set of valid configurations for $N_1$ and $N_2$, respectively. Then,
	\begin{equation}
	\label{eq:lower:ini}
	\Pr[X_u=i,X_v = j] \ge \min_{\tau_2 \in \Omega_2} \Pr[X_u=i,X_v = j \mid \tau_2]. 
	\end{equation}
	Since the spin system is permissive, for any $\tau_2 \in \Omega_2$ there exists $\tau_1 \in \Omega_1$ such that 
	$$\Pr[\tau_1 \mid \tau_2, X_u = i, X_v = j] > 0.$$
	Then, 
	\begin{equation}
	\label{eq:lower:first}
	\Pr[X_u=i,X_v = j \mid \tau_2]
	\ge \Pr[X_u=i,X_v = j \mid \tau_1] \Pr[\tau_1 \mid \tau_2] 
	\ge \frac{1}{q^2 {\e}^{2(2\hat{\beta} d + \hat{\gamma})}}  \Pr[\tau_1 \mid \tau_2],
	\end{equation}
	by Fact \ref{fact:permissive:min_bounds}.
	Now, 
	\[
	\Pr[\tau_1 \mid \tau_2] 
	= \sum_{a, b\in [q]} \Pr[\tau_1 \mid X_u=a,X_v=b ,\tau_2] \Pr[X_u=a, X_v=b \mid \tau_2].
	\]
	Since $|N_1| \le 2d$, by Fact \ref{fact:permissive:min_bounds},
	$\Pr[\tau_1 \mid X_u=a,X_v=b ,\tau_2] \ge \frac{1}{q^{2d}{\e}^{2(2\hat{\beta} d^2 + \hat{\gamma})}}$.
	Together with (\ref{eq:lower:ini}) and (\ref{eq:lower:first}) this implies 
	\[\Pr[X_u=i,X_v = j ] \ge \frac{1}{q^{2(d+1)}{\e}^{4(2\hat{\beta} d^2 + \hat{\gamma})}}.\]
\end{myproof-lemma:permissive:main}	

\vspace{-1.12cm}
\hfill$\square$
\vspace{0.3em}

\begin{remark}
	A simplified version of this argument can be used to show that in a permissive $H$-coloring, 
	for any hard constraint $\edge{i}{j} \not\in E(H)$,
	$\Pr[X_u = i,X_v = j] \ge 1/q^{2d}$. From this we obtain a structure learning algorithm
	for permissive $H$-colorings with running time $O(q^{2d} n^2 \log n)$ via
	Lemma~\ref{lemma:hard-contraints:main}.
\end{remark}
\noindent
We conclude this section with the proof of Fact \ref{fact:permissive:min_bounds}.

\newenvironment{myproof-fact:permissive:min_bounds}{\paragraph{\textbf{Proof of Fact \ref{fact:permissive:min_bounds}}}}{}
\begin{myproof-fact:permissive:min_bounds}
	For $\sigma \in \Omega^\tau(R)$ let 
	\[w(\sigma) = \exp\left[\sum_{u \in R} \sum_{v \in \boundary u \cap R} \theta(u,v) J(\sigma_u,\sigma_v) +\sum_{u \in R} \sum_{v \in \boundary u \cap \boundary R} \theta(u,v) J(\sigma_u,\tau_v)+ \theta(u) h(\sigma_u) \right].\]
	Then,
	$$\Pr[X_R = \sigma \mid X_{\boundary R}=\tau] = \frac{w(\sigma)}{Z_{R,\tau}},$$
	with $Z_{R,\tau} = \sum_{\sigma' \in \Omega^\tau(R)} w(\sigma')$. Observe that for all $\sigma' \in \Omega^\tau(R)$,
	${\e}^{-\hat{\beta} d|R| - \hat{\gamma} } \le w(\sigma') \le {\e}^{\hat{\beta} d|R| + \hat{\gamma}}$.
	Hence, $Z_{R,\tau} \le q^{|R|} {\e}^{\hat{\beta} d|R| + \hat{\gamma}}$ and
	\[ \Pr[X_R = \sigma \mid X_{\boundary R}=\tau] \ge \frac{1}{q^{|R|} {\e}^{2(\hat{\beta} d|R| + \hat{\gamma})}}.\]
\end{myproof-fact:permissive:min_bounds}

\vspace{-1.12cm}
\hfill$\square$
\vspace{0.3em}

\subsection{Identifiability for weighted $H$-colorings}
\label{sec:identify-weighted}

We prove next an analog of our characterization theorem (Theorem \ref{thm:identify}) for identifiability
of weighted $H$-colorings. 
The edge potential $J$ 
corresponds to the weighted adjacency matrix of a weighted constraint graph $H^{J}=(V(H^{J}),E(H^{J}))$,
where $V(H^{J}) = \{1,\dots,q\}$,
$\{i,j\}\notin E(H^{J})$ iff $J(i,j)= -\infty$,
and the weight of $\{i,j\} \in E(H^{J})$ is $J(i,j)$.
As before, we say that a graph $G$ is 
$H^J$-colorable if there is a valid $H^J$-coloring for $G$.
If $\{i,j\}\notin E(H^{J})$ we call $\{i,j\}$ a hard constraint.  
The notion of identifiability extends
to the weighted setting as follows.

\begin{definition}
	A weighted constraint graph $H^J$ is said to be identifiable with respect to a family of $H^J$-colorable graphs $\mathcal{G}$ if for any two distinct
	graphs
	$G_1,G_2\in\mathcal{G}$ we have $\pi_{G_1} \neq \pi_{G_2}$. In particular, when $\mathcal{G}$ is the set of all finite $H^J$-colorable graphs we say that $H^J$ is identifiable.
\end{definition}
\noindent
(Definition~\ref{def:identify} is the analog definition in the unweighted setting.)

In our characterization theorem
we consider the supergraphs $G_{ij}$'s 
introduced in the unweighted setting; see Definition~\ref{dfn:Hij} and Figure \ref{fig:Hij}.

\begin{theorem}
	\label{thm:weighted-id}
	Let $H^J$ be a weighted constraint graph with at least one hard constraint.
	If $H^J$ has a self-loop, then $H^J$ is identifiable.
	Otherwise $H^J$ is identifiable if and only if for each $\edge{i}{j} \in E(H^J)$ there exists an $H^J$-coloring $\sigma$ of $G_{ij}$ such that
	\begin{equation*}
	J(\sS_i,\sS_j) + J(\sS_{i'},\sS_{j'}) \neq J(\sS_{i'},\sS_j) + J(\sS_i,\sS_{j'}).
	\end{equation*}
\end{theorem}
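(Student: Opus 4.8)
The plan is to follow the three-step structure of the unweighted argument (Lemmas~\ref{lemma:id_lem}, \ref{lemma:id:no-loops}, \ref{lemma:id_thm}), replacing the phrase ``$u$ and $v$ receive incompatible colors'' by the non-additivity appearing in the statement. The right object to track is the conditional law of a pair $(X_u,X_v)$ given a fixed coloring $\rho$ of $V\setminus\{u,v\}$: by \eqref{pdf} it factors as $\pi(a,b\mid\rho)\propto\exp(\phi_u(a)+\phi_v(b)+\one[\edge{u}{v}\in E]\,J(a,b))$, where $\phi_u,\phi_v$ absorb the (now fixed) edges to the neighbours and the vertex potentials. Thus a non-edge is always conditionally independent --- a product law with rectangular support --- so the relation $\hat E(\pi_G):=\{\edge{u}{v}: X_u\not\perp X_v \text{ given some }\rho\}$ is an intrinsic function of $\pi_G$ with $\hat E(\pi_G)\subseteq E$. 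Consequently, if every edge of every $H^J$-colorable graph is \emph{active} (i.e.\ $X_u\not\perp X_v$ for some conditioning), then $\hat E(\pi_G)=E$, the graph is recoverable from $\pi_G$, and $H^J$ is identifiable. An edge fails to be active under $\rho$ exactly when, on the colors $a,a'$ available to $u$ and $b,b'$ available to $v$, the law is a product; with the convention that incompatible pairs carry weight $-\infty$, activeness is therefore equivalent to finding a block with $J(a,b)+J(a',b')\neq J(a',b)+J(a,b')$ --- either through a genuinely non-product weight or through a non-rectangular (i.e.\ constraint-induced) support.

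The self-loop case (the analogue of Lemma~\ref{lemma:id:no-loops}) follows immediately from this viewpoint: the gadgets used there let one color $V\setminus\{u,v\}$ ``safely'' and then assign $(u,v)$ either a compatible pair or an incompatible pair. Fixing such a safe boundary $\rho$ exhibits a $2\times2$ block in which one corner is forbidden (value $-\infty$) while the others are finite, which violates additivity; hence every pair is active and $H^J$ is identifiable.

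For the self-loop-free case I would prove the two directions separately, and this is where the graph $G_{ij}$ of Definition~\ref{dfn:Hij} and the identity in the statement enter. For the ``if'' direction (the condition implies identifiability) I would take, for each edge $\edge{i}{j}$, a coloring $\tau$ of $G_{ij}$ violating the identity and transport it to an arbitrary graph $G$ by the composition of Lemma~\ref{lemma:id_thm}: given an edge $\edge{u}{v}\in E$ and a coloring $\sigma$ with $\sigma_u=i,\sigma_v=j$, set $\rho(w)=\tau(\sigma(w))$ on $V\setminus\{u,v\}$. Because $i'$ (resp.\ $j'$) has the same neighbours as $i$ (resp.\ $j$) in $G_{ij}$, under this boundary $u$ may take both $\tau(i)$ and $\tau(i')$ and $v$ may take both $\tau(j)$ and $\tau(j')$, and the four values $J(\tau(i),\tau(j)),J(\tau(i'),\tau(j')),J(\tau(i'),\tau(j)),J(\tau(i),\tau(j'))$ are precisely those in the identity; their non-additivity makes the edge $\edge{u}{v}$ active. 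Hence every edge of every $G$ is active and, by the first paragraph, $H^J$ is identifiable. For the ``only if'' direction I would argue the contrapositive: if the identity holds for every coloring of $G_{ij}$ then (reading $-\infty$ as above) every coloring has $i'$ and $j'$ compatible and the finite relation $J(\sigma_i,\sigma_j)+J(\sigma_{i'},\sigma_{j'})=J(\sigma_{i'},\sigma_j)+J(\sigma_i,\sigma_{j'})$ holds identically, i.e.\ the interaction on the block $\{i,i'\}\times\{j,j'\}$ is separable; the goal is then to build two distinct graphs --- obtained from the $G_{ij}$ gadget by replacing the matched edge pair $\{\edge{i}{j},\edge{i'}{j'}\}$ with the crossed pair $\{\edge{i'}{j},\edge{i}{j'}\}$ --- whose Gibbs distributions coincide, witnessing non-identifiability.

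I expect this last construction to be the main obstacle. In the unweighted proof it is enough to match the two sets of $H$-colorings, but here one must match \emph{both} the supports and the full Gibbs weights of the two graphs, and the separability identity controls only the weights. The delicate step is to engineer the external (pinning) part of the gadget around the four vertices playing the roles of $i,i',j,j'$ so that the matched and crossed edge sets admit exactly the same valid colorings; once the supports agree, the identity forces the weights to agree configuration-by-configuration and $\pi_{G_1}=\pi_{G_2}$ follows. Getting this support-matching right, and checking that the $-\infty$ conventions make the whole argument specialize to the unweighted characterization of Theorem~\ref{thm:identify}, is the part that will require the most care.
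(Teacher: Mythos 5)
Your treatment of the self-loop case and of the ``if'' direction is sound and in substance matches the paper's: the paper's reverse direction is exactly your conditional $2\times2$ cross-ratio test, carried out on the law of $(X_u,X_v)$ given a recolored boundary, where the cross-ratio $p_1p_2/(p_3p_4)$ equals $\exp$ of $J(a,b)+J(a',b')-J(a',b)-J(a,b')$ times the edge indicator, so an edge is detected precisely when the identity fails (either through the weights or through a $-\infty$ corner). The genuine gap is the ``only if'' direction, which you yourself flag as unresolved, and your proposed witness pair does not work as stated. Swapping the matched edges $\{i,j\},\{i',j'\}$ for the crossed edges $\{i',j\},\{i,j'\}$ inside the $G_{ij}$ gadget is problematic twice over: $G_{ij}$ already contains $\{i',j\}$ and $\{i,j'\}$, so you must first delete edges, and once you do, the valid colorings of the two modified graphs are no longer colorings of $G_{ij}$, so the hypothesized identity (assumed only for colorings of $G_{ij}$) controls neither their supports nor their weights. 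The support mismatch is real and not removable by tuning the pinning: a coloring with $\sigma_i\not\sim\sigma_j$ can be valid for the crossed graph while being excluded from the matched one.

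The paper's resolution is a different gadget that you are missing: the doubled graph $G_{ij}^2$, with \emph{two} fresh copies $i',i''$ of $i$ and $j',j''$ of $j$, no edges among the four copies, but all crossed edges $\{i,j'\},\{i,j''\},\{i',j\},\{i'',j\}$ and the original $\{i,j\}$ retained. Any valid coloring $\sigma$ of $G_{ij}^2$ restricts to a coloring of each of the four isomorphic copies of $G_{ij}$ obtained by deleting one $i$-copy and one $j$-copy, producing four instances of the assumed identity; in each instance three terms are finite because they correspond to actual edges of $G_{ij}^2$, so the fourth term $J(\sigma_{i^*},\sigma_{j^*})$ is forced to be finite, and combining the four instances yields $J(\sigma_{i'},\sigma_{j'})+J(\sigma_{i''},\sigma_{j''})=J(\sigma_{i''},\sigma_{j'})+J(\sigma_{i'},\sigma_{j''})$. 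Hence $F_1=G_{ij}^2\cup\{\{i',j'\},\{i'',j''\}\}$ and $F_2=G_{ij}^2\cup\{\{i'',j'\},\{i',j''\}\}$ are distinct, have identical supports (all colorings of $G_{ij}^2$), and have weights in a constant ratio, so $\pi_{F_1}=\pi_{F_2}$ and $H^J$ is not identifiable. This duplication trick, which simultaneously settles the support matching and upgrades the assumed identity to one involving only the unconstrained copies, is the missing idea; without it the contrapositive does not go through.
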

\noindent
(Recall that $i'$ and $j'$ are the copies of the vertices $i$ and $j$ in $G_{ij}$.)

\vspace{1em}
\begin{proof}
	For clarity, we shall assume in this proof that the underlying graph $G=(V,E)$ is 
	unweighted and that there is no external field; i.e., $\theta=1$ and $h=0$.  
	The same proof generalizes 
	to spin systems on
	weighted graphs with external field.
	
	Henceforth we use $H$ for $H^J$ to simplify the notation.
	We consider first the case when $H$ has no self-loops.
	For the forward direction we consider the contrapositive.
	Suppose that there exists $\{i,j\}\in E(H)$ such that for every proper 
	$H$-coloring $\sigma$ of $G_{ij}$ we have
	\begin{equation*}
	J(\sS_i,\sS_j) + J(\sS_{i'},\sS_{j'}) = J(\sS_{i'},\sS_j) + J(\sS_i,\sS_{j'}).
	\end{equation*}
	Under this assumption
	we construct two distinct $H$-colorable graphs $F_1,F_2$ such that $\pi_{F_1} = \pi_{F_2}$;
	this implies that $H$ is not identifiable, which would complete the proof of the forward direction.
	For this, for each $\edge{i}{j} \in E(H)$ let us
	define the supergraph $G_{ij}^2$ of $H$ that is the result 
	of creating two copies $i',i''$ of vertex $i$ 
	and two copies $j',j''$ of vertex $j$, with no edges between $i',i'',j',j''$. 
	Formally, for each $\edge{i}{j}\in E(H)$, 
	we define the graph $G_{ij}^2 = (V(G_{ij}^2),E(G_{ij}^2))$ as follows:
	\begin{enumerate}
		\item $V(G_{ij})=V(H)\cup\{i',i'',j',j''\}$ where $i',i'',j',j''$ are four new colors;
		\item If $\edge{a}{b} \in E(H)$, then the edge $\edge{a}{b}$ is also in $E(G_{ij}^2)$;
		\item For each $k \in V(G_{ij}^2) \setminus \{i',i'',j',j''\}$, 
		the edges $\edge{i'}{k}$ and $\edge{i''}{k}$ are in $G_{ij}^2$  
		if and only if the edge $\edge{i}{k}$ is in $H$, and 
		similarly $\edge{j'}{k},\edge{j''}{k}\in E(G_{ij}^2)$ if and only if $\edge{j}{k} \in E(H)$; 
	\end{enumerate}
	see Figure~\ref{fig:Gij2} for an example.

	\begin{figure}[t]
		\centering
		\begin{minipage}{0.24\linewidth}
			\centering
			\begin{tikzpicture}
			\node(3) at (-0.75,0) {$3$};
			\node(4) at (0.75,0) {$4$};
			\node(1) at (-0.75,1.5) {$1$};
			\node(2) at (0.75,1.5) {$2$};
			\node(1') at (-0.75,2.5) {};
			\node(2') at (0.75,2.5) {};
			\node(1'') at (-0.75,3.5) {};
			\node(2'') at (0.75,3.5) {};
			
			\draw (1)--(2);
			\draw (1)--(3);
			\draw (2)--(4);
			\draw (3)--(4);
			\end{tikzpicture}
			\caption*{$H$}
		\end{minipage}
		\begin{minipage}{0.24\linewidth}
			\centering
			\begin{tikzpicture}
			\node(3) at (-0.75,0) {$3$};
			\node(4) at (0.75,0) {$4$};
			\node(1) at (-0.75,1.5) {$1$};
			\node(2) at (0.75,1.5) {$2$};
			\node(1') at (-0.75,2.5) {$1'$};
			\node(2') at (0.75,2.5) {$2'$};
			\node(1'') at (-0.75,3.5) {$1''$};
			\node(2'') at (0.75,3.5) {$2''$};
			
			\path
			(1) edge (2)
			(1') edge (2)
			(2') edge (1)
			(1'') edge (2)
			(2'') edge (1)
			(1) edge (3)
			(1') edge [bend right] (3)
			(1'') edge [bend right] (3)
			(2) edge (4)
			(2') edge [bend left] (4)
			(2'') edge [bend left] (4)
			(3) edge (4);
			\end{tikzpicture}
			\caption*{$G_{12}^2$}
		\end{minipage}
		\begin{minipage}{0.24\linewidth}
			\centering
			\begin{tikzpicture}
			\node(3) at (-0.75,0) {$3$};
			\node(4) at (0.75,0) {$4$};
			\node(1) at (-0.75,1.5) {$1$};
			\node(2) at (0.75,1.5) {$2$};
			\node(1') at (-0.75,2.5) {$1'$};
			\node(2') at (0.75,2.5) {$2'$};
			\node(1'') at (-0.75,3.5) {$1''$};
			\node(2'') at (0.75,3.5) {$2''$};
			
			\path
			(1) edge (2)
			(1') edge (2)
			(2') edge (1)
			(1'') edge (2)
			(2'') edge (1)
			(1) edge (3)
			(1') edge [bend right] (3)
			(1'') edge [bend right] (3)
			(2) edge (4)
			(2') edge [bend left] (4)
			(2'') edge [bend left] (4)
			(3) edge (4)
			(1') edge (2')
			(1'') edge (2'');
			\end{tikzpicture}
			\caption*{$F_1$}
		\end{minipage}
		\begin{minipage}{0.24\linewidth}
			\centering
			\begin{tikzpicture}
			\node(3) at (-0.75,0) {$3$};
			\node(4) at (0.75,0) {$4$};
			\node(1) at (-0.75,1.5) {$1$};
			\node(2) at (0.75,1.5) {$2$};
			\node(1') at (-0.75,2.5) {$1'$};
			\node(2') at (0.75,2.5) {$2'$};
			\node(1'') at (-0.75,3.5) {$1''$};
			\node(2'') at (0.75,3.5) {$2''$};
			
			\path
			(1) edge (2)
			(1') edge (2)
			(2') edge (1)
			(1'') edge (2)
			(2'') edge (1)
			(1) edge (3)
			(1') edge [bend right] (3)
			(1'') edge [bend right] (3)
			(2) edge (4)
			(2') edge [bend left] (4)
			(2'') edge [bend left] (4)
			(3) edge (4)
			(1') edge (2'')
			(1'') edge (2');
			\end{tikzpicture}
			\caption*{$F_2$}
		\end{minipage}
		\caption{A constraint graph $H$, its supergraph $G_{12}^2$ and the graphs $F_1$ and $F_2$.}
		\label{fig:Gij2}
	\end{figure}

	Let $\sigma$ be an $H$-coloring of $G_{ij}^2$. Since the subgraphs induced
	by $V(G_{ij}^2)\setminus\{i^*,j^*\}$ with $i^* \in \{i',i''\}$ and $j^*=\{j',j''\}$
	are all isomorphic to $G_{ij}$, our assumption implies
	\begin{align}
	J(\sS_i,\sS_j) + J(\sS_{i'},\sS_{j'}) &= J(\sS_{i'},\sS_j) + J(\sS_i,\sS_{j'}), \label{eq:1}\\
	J(\sS_i,\sS_j) + J(\sS_{i''},\sS_{j''}) &= J(\sS_{i''},\sS_j) + J(\sS_i,\sS_{j''}),\label{eq:2}\\
	J(\sS_i,\sS_j) + J(\sS_{i''},\sS_{j'}) &= J(\sS_{i''},\sS_j) + J(\sS_i,\sS_{j'}), \label{eq:3}\\
	J(\sS_i,\sS_j) + J(\sS_{i'},\sS_{j''}) &= J(\sS_{i'},\sS_j) + J(\sS_i,\sS_{j''}). \label{eq:4}
	\end{align}
	Since the sum
	of the right-hand sides of $(\ref{eq:1})$ and $(\ref{eq:2})$
	is equal to the sum of the right-hand sides of $(\ref{eq:3})$ and $(\ref{eq:4})$, we get
	\begin{equation}
	\label{eq:id:cond}
	J(\sS_{i'},\sS_{j'}) + J(\sS_{i''},\sS_{j''}) = J(\sS_{i''},\sS_{j'}) + J(\sS_{i'},\sS_{j''}).
	\end{equation}
	
	Now, let 
	\begin{align*}
	F_1&= (V(G_{ij}^2),E(G_{ij}^2) \cup \{\edge{i'}{j'},\edge{i''}{j''}\}), \\
	F_2&= (V(G_{ij}^2),E(G_{ij}^2) \cup \{\edge{i''}{j'},\edge{i'}{j''}\});
	\end{align*}
	see Figure~\ref{fig:Gij2}. 
	Then, using $(\ref{pdf})$, for any $H$-coloring $\sigma$ of $G_{ij}^2$ we have
	\begin{equation*}
	\frac{\pi_{F_1}(\sigma)}{\pi_{F_2}(\sigma)} = \frac{Z^{-1}_{F_1}\exp\big(J(\sS_{i'},\sS_{j'}) + J(\sS_{i''},\sS_{j''})\big)}{Z^{-1}_{F_2}\exp\big(J(\sS_{i''},\sS_{j'}) + J(\sS_{i'},\sS_{j''})\big)} = \frac{Z_{F_2}}{Z_{F_1}},
	\end{equation*}
	which is a constant independent of $\sigma$. 
	By (\ref{eq:id:cond}), $\pi_{F_1}$ and $\pi_{F_2}$ have the same support.
	Moreover, any
	$H$-coloring of $F_1$ or $F_2$ is also an $H$-coloring of $G_{ij}^2$. Hence, we conclude that $\pi_{F_1}=\pi_{F_2}$, implying $H$ is not identifiable. 
	This completes the proof of the forward direction.
	
	For the reverse direction suppose that for all $\edge{i}{j} \in E(H)$ there exists an $H$-coloring $\sigma$ of $G_{ij}$ 
	where 
	$$J(\sS_i,\sS_j) + J(\sS_{i'},\sS_{j'}) \neq J(\sS_{i'},\sS_j) + J(\sS_i,\sS_{j'}).$$
	
	Consider two $H$-colorable graphs $G_1=(V,E_1)$ and $G_2=(V,E_2)$ such that $\pi_{G_1}=\pi_{G_2}$. 
	We show that for any $u,v \in V$, $\edge{u}{v} \in E_1$ iff $\edge{u}{v} \in E_2$ and thus $G_1 = G_2$. This implies that $H$ is identifiable.
	
	Let $u,v \in V$ and
	let $\tau$ be an $H$-coloring of $G_1$.
	Since $\pi_{G_1}=\pi_{G_2}$, then $\tau$ is also an $H$-coloring of $G_2$.
	Suppose $i=\tau(u)$ and $j=\tau(v)$. If $i$ and $j$ are not compatible, 
	then $\{u,v\}\not\in E_1$ and $\{u,v\}\not\in E_2$. 
	Thus, let us assume $i$ and $j$ are compatible. 
	Let
	$\sigma$ be an $H$-coloring of $G_{ij}$ such that  
	$$J(a,b) + J(a',b') \neq J(a',b) + J(a,b'),$$
	where $a=\sS_i,b=\sS_j,a'=\sS_{i'},b'=\sS_{j'}$; we know such an $H$-coloring exists by assumption.
	
	Now consider 
	the conditional distribution $\pi_{G_1,uv}^\tau$ on the vertices $u$ and $v$ of $G_1$ given the configuration $\tau(V\setminus\{u,v\})$.
	Then,
	\begin{align*}
	p_1(G_1) &:= \pi_{G_1,uv}^\tau(X_u=a,X_v=b) = \frac{1}{Z_{\mathrm{cond}}(G_1)} \exp[h_{a}+h_{b}+\1(\{u,v\}\in E_1) J(a,b)]\\
	p_2(G_1) &:= \pi_{G_1,uv}^\tau(X_u=a',X_v=b') = \frac{1}{Z_{\mathrm{cond}}(G_1)} \exp[h_{a'}+h_{b'}+\1(\{u,v\}\in E_1)J(a',b')]\\
	p_3(G_1) &:= \pi_{G_1,uv}^\tau(X_u=a',X_v=b) = \frac{1}{Z_{\mathrm{cond}}(G_1)} \exp[h_{a'}+h_{b}+\1(\{u,v\}\in E_1)J(a',b)]\\
	p_4(G_1) &:= \pi_{G_1,uv}^\tau(X_u=a,X_v=b') = \frac{1}{Z_{\mathrm{cond}}(G_1)} \exp[h_{a}+h_{b'}+\1(\{u,v\}\in E_1)J(a,b')],
	\end{align*}
	where $Z_{\mathrm{cond}}(G_1)$ is the normalizing factor for $\pi_{G_1,uv}^\tau$,
	\begin{align*}
	h_{a} &= \sum_{w \in \boundary u} J(a,\tau(w)), \\
	h_{b} &= \sum_{w \in \boundary v} J(b,\tau(w))	
	\end{align*}
	and
	$h_{a'}, h_{b'}$ are defined in similar manner.
	This gives 
	$$
	\frac{p_1(G_1)p_2(G_1)}{p_3(G_1)p_4(G_1)} = \exp\left[\1(\{u,v\}\in E_1)(J(a,b)+J(a',b')-J(a',b)-J(a,b'))\right]
	$$
	Since by assumption $J(a,b)+J(a',b')-J(a',b)-J(a,b') \neq 0$,
	$\{u,v\} \in E(G_1)$ if and only if $p_1(G_1)p_2(G_1) \neq p_3(G_1)p_4(G_1)$.
	Moreover,  $\pi_{G_1}=\pi_{G_2}$ and thus $p_k(G_1)=p_k(G_2)$ for $k \in \{1,2,3,4\}$.
	Hence, 
	$$
	\frac{p_1(G_1)p_2(G_1)}{p_3(G_1)p_4(G_1)} = \frac{p_1(G_2)p_2(G_2)}{p_3(G_2)p_4(G_2)}.
	$$
	This implies that $\{u,v\}\in E_1$ iff $\{u,v\}\in E_2$ and so $G_1=G_2$.
	This completes the proof of the reverse direction when $H$ does not have self-loops.
	When $H$ has at least one self-loop then using the argument in the proof of Lemma \ref{lemma:id:no-loops}, which generalizes straightforwardly to the weighted setting, 
	we get that $H$ is identifiable.
\end{proof}

\bibliographystyle{plain}
\bibliography{Learning}

\end{document}